\documentclass[letterpaper]{article} %
\usepackage{aaai2026}
\usepackage{times}  %
\usepackage{helvet}  %
\usepackage{courier}  %
\usepackage[hyphens]{url}  %
\usepackage{graphicx} %
\urlstyle{rm} %
\usepackage{natbib}  %
\usepackage{caption} %
\frenchspacing  %
\setlength{\pdfpagewidth}{8.5in} %
\setlength{\pdfpageheight}{11in} %
\usepackage{algorithm}
\usepackage{algorithmic}
\usepackage{multirow}
\usepackage{comment}

\usepackage{xcolor}
\usepackage{newfloat}
\usepackage{listings}
\usepackage{amssymb,amsmath,amsthm}
\usepackage{array}
\usepackage{cleveref}

\DeclareCaptionStyle{ruled}{labelfont=normalfont,labelsep=colon,strut=off} %
\lstset{%
	basicstyle={\footnotesize\ttfamily},%
	numbers=left,numberstyle=\footnotesize,xleftmargin=2em,%
	aboveskip=0pt,belowskip=0pt,%
	showstringspaces=false,tabsize=2,breaklines=true}
\floatstyle{ruled}
\newfloat{listing}{tb}{lst}{}
\floatname{listing}{Listing}

\usepackage{thmtools, thm-restate}

\newtheorem{theorem}{Theorem}
\newtheorem{lemma}{Lemma}
\newtheorem{definition}{Definition}

\newtheorem{observation}{Observation}
\newtheorem{example}{Example}

\newcommand{\ev}{\mathbb{E}}
\newcommand{\sw}{\mathrm{UW}}
\newcommand{\ch}{\mathrm{CH}}
\newcommand{\cone}{\mathrm{Cone}}
\newcommand{\uprof}{\vec u} %
\newcommand{\prof}{\vec \sigma} %
\newcommand{\induces}{\triangleright}

\newcommand{\dist}{\mathrm{D}}
\newcommand{\poly}{\mathrm{poly}}
\newcommand{\eps}{\varepsilon}
\newcommand{\elltwo}{\ell^2}
\newcommand{\ellone}{\ell^1}

\newcommand{\uproj}{f_\mathrm{UProj}}
\newcommand{\cent}{\uproj}
\newcommand{\uprojtwo}{f_\mathrm{UProj2}}
\newcommand{\plur}{f_\mathrm{Plur}}
\newcommand{\lslr}{f_\mathrm{LSLR}}
\newcommand{\lslrtwo}{f_\mathrm{LSLR2}}
\newcommand{\pslr}{f_\mathrm{PSLR}}
\newcommand{\mcp}{f_\mathrm{MCP}}

\newcommand{\hr}{f_\mathrm{HR}}
\newcommand{\rd}{f_\mathrm{RD}}

\newcommand{\bigo}{\mathcal{O}}
\newcommand{\kl}[2]{\mathrm{KL}\left(#1 \| #2\right)}
\newcommand{\cW}{\mathcal{W}}

\newcommand{\defeq}{:=}
\newcommand{\abs}[1]{\left\lvert #1 \right\rvert}

\pdfinfo{
/TemplateVersion (2026.1)
}

\allowdisplaybreaks

\setcounter{secnumdepth}{2} %

\title{Optimized Distortion in Linear Social Choice}

\author{
    Luise Ge,
    Gregory Kehne,
    Yevgeniy Vorobeychik
}
\affiliations{
      Washington University in St. Louis
}

\begin{document}

\maketitle

\begin{abstract}
Social choice theory offers a wealth of approaches for selecting a candidate on behalf of voters based on their reported preference rankings over options.
When voters have underlying utilities for these options, however, using preference rankings may lead to suboptimal outcomes vis-\`a-vis utilitarian social welfare.
Distortion is a measure of this suboptimality, and provides a worst-case approach for developing and analyzing
voting rules when utilities have minimal structure.
However in many settings, such as common paradigms for value alignment, alternatives admit a vector representation, and it is natural to suppose that utilities are parametric functions thereof.

We undertake the first study of distortion for linear utility functions.
Specifically, we investigate the distortion of linear social choice for deterministic and
randomized voting rules.
We obtain bounds that depend only on the dimension of the candidate embedding, and are independent of the numbers of candidates or voters.
Additionally, we introduce poly-time instance-optimal algorithms for minimizing distortion given a collection of candidates and votes.
We empirically evaluate these in 
two real-world domains: recommendation systems using collaborative filtering embeddings, and opinion surveys utilizing language model embeddings, benchmarking several standard rules against our instance-optimal algorithms.
\end{abstract}

\section{Introduction}

Conventional voting rules map a profile of voter preference rankings over a set of candidates (options, outcomes, choices) to a winner.
If, instead of rankings, we were instead to associate candidates with utility values for each voter, in many settings it would be natural to choose a social-welfare-maximizing candidate.
But when voters' preference rankings are generated by latent and unknown utilities, ranking-based voting rules can select candidates that are not welfare-maximizing.
Thus, the limited information about utilities obtained from rankings can be viewed as an imperfect means of approximating a welfare-maximizing choice, and the quality of this approximation---the worst-case ratio of optimal welfare to welfare obtained by a voting rule---is known as \emph{distortion}.
Since its introduction by \citet{procaccia06distortion}, an extensive literature has 
used distortion as means to compare voting rules, identify new rules, and understand the inherent difficulty in different settings of identifying desirable outcomes from ordinal and otherwise incomplete information \cite{anshelevich2021distortion}

If we make no assumptions, no deterministic rules have bounded distortion, and (when there are many voters) the best randomized rule chooses a winning candidate uniformly at random.
Starting with \citet{procaccia06distortion}, a prevalent approach is to assume each voter's utilities are nonnegative and sum to 1 (also see \citet{aziz2020justifications} for discussion of this assumption).
Another mild but incomparable choice is to assume each voter's utilities span the range $[0,1]$, which generalizes approval preferences when each voter has at least one approval and one disapproval.
In all such settings, even optimal voting rules have distortion polynomial in the number of candidates \cite{ebadian2024optimized}.

What if we wish to make an approximately welfare-maximizing choice from among many candidates, and voters' utilities are more structured than unit-sum or unit-range?
Our motivating setting is reinforcement learning from human feedback (RLHF), where options are naturally represented as vector embeddings and voting data consists of rankings. In this context, a voter's utility for an option reflects how well the option aligns with what the voter cares about—mathematically captured as the inner product between the voter's preference vector and the option's embedding. This linear utility model is both a common assumption in RLHF~\citep{zhu2023principled,ge2024axioms} and arises naturally in recommendation systems~\cite{pennock2000social} and multi-objective decision making~\cite{ngatchou2005pareto}, where candidates have feature representations that voters weight according to their preferences. More broadly, structured representation spaces for both voters and candidates are increasingly relevant as AI clones and artificial agents acting on behalf of humans are studied and deployed~\cite{NYT:ModernLove:DatingAppsSuck,Lin:GrindrAI,liang25ai}, motivating the study of parametric utility structures over such spaces. Linear utilities provide a natural starting point for this investigation.

Unlike in many traditional applications of social choice, such as political elections, in RLHF and similar settings the space of options is extremely large.
For example, the space of all possible conversational responses to prompts is so vast as to defy enumeration.
Consequently, distortion bounds that depend on the number of candidates $m$ in these settings have little bite, since $m$ can be exponential---or larger---in the number of features $d$.

We investigate distortion when utility functions are linear, and both the candidates' and voters' embeddings into common feature space $\mathbb{R}^d$ are non-negative and normalized to $1$.
We focus on $\ellone$ normalization, and defer discussion of $\elltwo$ normalization to Appendix~\ref{app:ell-two-normalization}.
Notably, under $\ellone$ normalization, our model recovers the unit-sum setting when $m=d$ and candidates embed to the standard basis of $\mathbb{R}^d$.

A significant benefit of measuring distortion over linear utilities in RLHF and alignment settings is its robustness to the introduction of duplicate candidates, i.e. clones.
This is a criterion of particular importance in the application of social choice methods to alignment problems \cite{procaccia2025clone,berker25independence}.
By way of illustration, consider a profile $\prof$ of voters' rankings of $C$, and construct $\prof_{\cup c'}$ by adding a single duplicate $c'$ of some $c\in C$ which all voters rank (weakly) directly below $c$.
First, can the space of profile-consistent voter utilities over the original candidates $C$ change from $\prof$ to $\prof_{\cup c'}$?
For unit-sum and unit-range utilities, yes; for $d$-dimensional linear utilities, no.
Second, $m$-dependent distortion bounds degrade from $m$ to $m' = m + 1$, while $d$-dependent bounds do not.

\paragraph{Summary of contributions.} 
On the positive side, we introduce three novel voting rules with distortion bound a function of \emph{only the dimension} $d$.
We summarize these results in \Cref{tab:distortion-UBs-ell1}.
The first is the deterministic \emph{max coordinate plurality} (MCP) rule.
Given the candidate embeddings in $\mathbb{R}^d_{\geq 0}$, it first chooses one candidate that maximizes the value of each coordinate, and then selects a plurality winner among this subset.
We show that MCP attains distortion $O(d^3)$ 
(\Cref{thm:MCP-dist-ell1}).
The second and third are randomized rules that are adaptations of the stable lottery rule of \citet{ebadian2024optimized} to both the settings where candidate embeddings into $\mathbb{R}^d$ are known and unknown.
We consider the case when rules can access candidate embeddings to be the standard setting.
Here we construct a \emph{linear stable lotteries} rule, which attains the asymptotically optimal $\Theta(\sqrt{d})$ distortion for linear utilities (\Cref{thm:stable-lottery-ell1}).
When, as in more traditional social choice settings, access to candidate vectors is unavailable, a version of the stable lotteries rule which only samples candidates from suitably-sized (and randomly chosen) committees attains distortion $O(d)$; we dub this the \emph{pure stable lotteries} rule (\Cref{thm:pure-stable-lottery-ell1}).
We also show that random dictatorship attains $O(d^3)$ distortion in this setting (\Cref{thm:RD-distortion-ell1}).

Alongside rules with provable worst-case guarantees, we develop a linear programming (LP)-based approach for computing instance-optimal distortion-minimizing candidates for a given set of options and voter preferences.
While the resulting LP has an infinite set of constraints, we identify an efficient separation oracle that enables efficient optimization over it.
We then empirically evaluate the extent to which our instance-optimal approach outperforms other rules.

\begin{table}[t]
    \centering
    \begin{tabular}{c c c}
        $C$ Embedding & Randomized & Deterministic \\
        \hline
        \hline
        \multirow{2}{*}{Known} & $O(\sqrt{d})$ \:\: (Thm.~\ref{thm:stable-lottery-ell1}) & $O(d^3)$ \:\: (Thm.~\ref{thm:MCP-dist-ell1}) \\
        & $\Omega(\sqrt{d})$ \:\: (Thm.~\ref{thm:random-LB-ell1}) & $\Omega(d^2)$\:\:(Thm.~\ref{thm:det-LB-ell1}) \\
        \hline
        Unknown & $O(d)$ \:\:  (Thm.~\ref{thm:pure-stable-lottery-ell1})  & - \\
    \end{tabular}
    \caption{Distortion in $d$-dimensional linear social choice when candidates and voters are $\ellone$-normalized. Note the unknown setting inherits lower bounds from the known setting.}
    \label{tab:distortion-UBs-ell1}
\end{table}

\paragraph{Further related work.}

Our model diverges from two major strands of prior work in computational social choice: (i) classical frameworks with minimal utility assumptions, and (ii) metric distortion approaches that adopt latent representations, but differ fundamentally in objective and structure.

In classical frameworks, various assumptions are imposed directly on the utility values assigned by voters to candidates—such as unit-sum (each voter's utilities sum to one), unit-range (utilities span a fixed range, e.g., \([0, 1]\)), and approval (where each utility is either 0 or 1). Our approach strictly generalizes this first model by imposing an \(\ell^1\) normalization constraint on voter vectors and treating candidates as standard basis vectors, i.e., with \(m = d\) (\(\ell^\infty\) normalization would similarly generalize the second and third).

Metric distortion approaches are more closely aligned with our setting in that they also assume latent (metric) structure. However, their motivation typically centers on distance-based objectives in utility space---such as minimizing disutility in facility location problems---rather than maximizing social welfare. Although the relation $\mathrm{disutil}_v(c) := 1-u_v(c) = 1-v^Tc$ satisfies the triangle inequality and thus fits into the metric distortion framework, minimizing social cost in this sense is not equivalent to maximizing social welfare—our primary objective. As a result, our approach departs from metric distortion in both its goal and its fundamental structure.

Finally, we remark that while our work builds on the linear social choice framework introduced by \citet{ge2024axioms}, we impose additional structure on the voter and candidate vectors for analysis purpose. 
In addition, although our work and that of \citet{golz2025distortion} share an interest in alignment, their model of the \textit{distortion of alignment} is substantively different from ours. First, they bound utilities within \([0,1]\) but otherwise make no assumptions about latent preference structure. Second, they assume distributions over both voters and candidates, with pairwise comparisons drawn i.i.d. from these distributions; distortion is then evaluated in expectation over this distribution. In contrast, we make no distributional assumptions and evaluate distortion in the worst case over all possible voter-candidate profiles. Finally, we consider truthful ordinal rankings, while \citet{golz2025distortion} assume a Bradley--Terry noise model for reported preferences, which enables some access to voter preference intensities. This reflects different modeling goals: we aim for robustness under accurate ordinal preference reporting, while they seek insights under probabilistic reporting with minimal utility structure assumptions.

\section{Model and Preliminaries}
\label{sec:model}

We consider a setting with $n$ voters $V$ and $m$ candidates $C$, where each voter $v \in V$ and each candidate $c \in C$ corresponds to a vector in $\mathbb{R}^d_{\geq 0}$. The $d$ dimensions correspond to positive attributes that determine both voters' preferences and candidates' characteristics. We use superscripts to denote coordinates of a vector.

We assume that the utility of a voter $v$ for a candidate $c$ is given by the inner product
\(u_v(c) = v^\top c.\)
This \textit{utility function} $u_v: C \to \mathbb{R}_{\ge 0}$, in turn, induces a \textit{preference ranking} $\sigma_v$ where $c \succ_v c'$ if $u_v(c) \geq u_v(c')$, with ties broken arbitrarily. 
Below we will introduce some constraints for $v$ and $c$, and use $\mathcal{U}$ to denote the set of all feasible utility functions induced by such $v$ and $c$.

Let $\vec{u} = \{u_v\}_{v \in V}$ be the \textit{utility profile} of all voters, and let $\prof = \{\sigma_v\}_{v \in V}$ be the \textit{preference profile} of all voters. We use the notation $\uprof \induces \prof$ to indicate that $u_v$ is consistent with $\sigma_v$ for each voter $v \in V$. 
Voting rules have access to the rankings $\prof$, but \textit{not} to the utilities $\uprof$.

If we know the utility functions $u_v$ of all voters, a natural candidate selection criterion is \emph{utilitarian welfare}, which is the sum of voter utilities:
\(\sw(\uprof, c) \defeq \sum_{v \in V} u_v(c)\). In this case the winning candidate $c^*(\uprof)$ is the one with highest welfare, i.e., $c^*(\uprof) \in \arg\max_{c \in C} \sw(\uprof, c)$.

Let $f$ be a (possibly randomized) voting rule that maps a preference profile $\prof$ to a winner $c \in C$. For a fixed profile $\prof$, the \textit{instance distortion} of $f$ on $\prof$ is the worst-case ratio between the optimal utilitarian welfare and the utilitarian welfare of $f$; that is,
\[
    \dist(f, \prof) \defeq \max_{\uprof \induces \prof} \frac{\sw(\vec{u}, c^*(\vec{u}))}{\mathbb{E}_{c \sim f(\prof)} [\sw(\vec{u}, c)]}.
\]

For theoretical results, we are primarily interested in the overall \textit{distortion} of $f$, which is the worst case over profiles:

\[
    \dist(f) \defeq \max_{\uprof \induces \prof, \uprof \in \mathcal{U}^n }\dist(f, \prof).
\]

We impose the following structural assumptions. 

\begin{itemize}
    \item \textbf{Non-negativity.} All voter and candidate vectors lie in the positive orthant, i.e., $v, c \in \mathbb{R}^d_{\ge 0}$. This ensures all utilities are non-negative, and avoids a mixed-sign objective. 

  \item \textbf{Normalization.} 
  We will assume $\|v\|_p = 1$ and $\|c\|_p = 1$ for all $v, c$.
  For $v \in V$ this can be viewed as constraining all voters to have equal influence on the mechanism (c.f. \citet{aziz2020justifications}), and for both candidates and voters as identifying the \emph{relative} magnitude of embedding components.
  We focus on $\ell^p$ norms for $p=1$ in the main body,\footnote{Note that this together with non-negativity is equivalent to assuming $v,c \in \Delta_d$, where $\Delta_k$ denotes the $k$-dimensional simplex.} but also provide results for $\elltwo$ normalization in the supplemental material.

    \item \textbf{Expressiveness.} For theoretical analysis, we further assume $V \subset \mathrm{Cone}(C)$, meaning every voter can be expressed as a non-negative linear combination of candidates. 
    Intuitively, this can be understood as requiring that the candidate set is rich enough to describe voter preferences.
    This excludes the case where voters have $0$ utility for all alternatives, ensuring each voter has meaningful preferences over the candidate set.
\end{itemize}

Normalization of both $v$ and $c$ is necessary to keep the model well-posed and the distortion finite, as the following example illustrates:
\begin{example} \label{ex:normalization-necessary}
    Consider $n-1$ voters who prefer $c_1$ to $c_2$ and one voter who strongly prefers $c_2$ due to a large utility spike in one coordinate. 
    If candidate $c_2$ has an unbounded entry, it can yield unbounded utilitarian welfare even though $n-1$ voters rank $c_1$ above $c_2$. 
\end{example}

We study several established voting rules. 
We define them here for convenience.

\begin{definition}(Randomized Scoring Rules (RSRs))\label{def:randomized-scoring-rules} 
Let $\vec{s} = (s^1, \ldots, s^m)$ be a scoring vector with $s^1 \geq s^2 \geq \cdots \geq s^m \geq 0$. For a candidate $c \in C$, let $\operatorname{rank}_v(c)$ denote the position of $c$ in voter $v$'s ranking (with $\operatorname{rank}_v(c) = 1$ meaning $a$ is ranked first). Then the score assigned to $c$ by agent $v$ is:
\[
    \operatorname{score}_v(c, \vec{s}) \defeq s^{\operatorname{rank}_v(c)}.
\]
The total score of $c$ across all voters is:
\[
    \operatorname{score}_V(c, \vec{s}) \defeq \sum_{i \in n} \operatorname{score}_v(c, \vec{s}).
\]
The \emph{randomized scoring rule} $f^{\text{rand}}_{\vec{s}}$ selects each alternative $c \in C$ with probability proportional to its total score:
\[
    \Pr[f^{\text{rand}}_{\vec{s}}({\prof}) = c] \defeq \frac{\operatorname{score}_V(c, \vec{s})}{n \cdot \|\vec{s}\|_1}.
\]

\end{definition}

Notable examples include \textit{Random Dictatorship}, which corresponds to the plurality scoring rule $\vec{s} = (1, 0, \ldots, 0)$, and the \textit{Randomized Harmonic} rule \cite{boutilier15optimal}, which uses $\vec{s} = (1 + \frac{H_m}{m}, \frac{1}{2} + + \frac{H_m}{m}, \ldots, \frac{1}{m}++ \frac{H_m}{m})$, where $H_m$ is the $m^{th}$ harmonic number.

We will also make use of stable lotteries, which were shown to exist for any preference profile by \citet[Lemma 4]{cheng2020group}.
\begin{definition}[Stable Lotteries]
\label{def:stable-lotteries}
    Given a preference profile $\prof$, a committee $W$, and a candidate $c$, let $S_c(W) \defeq \{v \in V: c \succ_v W\}$ be the set of voters who prefer $c$ to all alternatives in $W$.
    We say a distribution $\cW$ over committees of size $k$ is a \emph{stable lottery} if for all $c \in C$,
    \[
        \ev_{W \sim \cW} \left[ \abs{S_c(W)} \right] \leq \frac{n}{k}.
    \]
\end{definition}
This has seen much recent use in computational social choice; our direct inspiration is its application by \citet{ebadian2024optimized} to the design of distortion-optimal randomized rules in the unit-sum and related settings.

\paragraph{Special case: unit-sum utilities.}
The special case in which each candidate is a standard basis vector, i.e. $C = \{e_1, \ldots, e_d\}$, recovers the well-studied unit-sum model, 
in which distortion was first introduced \cite{procaccia06distortion}, and for which 
the worst-case distortion-optimal randomized rule is has distortion $\Theta(\sqrt{m})$ \cite{boutilier15optimal,ebadian2024optimized}, while the worst-case optimal deterministic rule has distortion $\Theta(m^2)$~\cite{caragiannis2017subset}.
By generalizing this setting, we inherit its lower bounds 
for both deterministic and randomized rules 
(\Cref{thm:det-LB-ell1,thm:random-LB-ell1}).

\section{Deterministic Rules}

What utility guarantees can deterministic rules provide in this setting? 
How well do prominent deterministic rules fare?
We begin with a useful lower bound on every individual voter's utility for their favorite candidate.
Throughout, we use $\ch(C)$ to denote the \textit{convex hull} of a set $C$.
\begin{lemma}
\label{lem:voter-util-LB}
   For any $v$ and any candidates ${C=\{c_j\}_{j \in [m]}}$, the maximum utility of $v$ is at least
   $ \max_{c \in C} u_v(c) \geq \frac{1}{d}$.
\end{lemma}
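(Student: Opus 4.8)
The plan is to reduce the statement to a single application of the Cauchy--Schwarz inequality, after first observing that the normalization and expressiveness assumptions force $v$ to lie in the \emph{convex} hull of $C$, not merely its cone. By expressiveness, $v \in \cone(C)$, so $v = \sum_{j \in [m]} \lambda_j c_j$ for some $\lambda_j \geq 0$. Taking $\ell^1$ norms of both sides and using non-negativity together with $\|v\|_1 = \|c_j\|_1 = 1$ for all $j$ yields $1 = \|v\|_1 = \sum_j \lambda_j \|c_j\|_1 = \sum_j \lambda_j$. Hence the coefficients $\lambda_j$ form a probability distribution, so $v \in \ch(C)$.

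Next, since $\max_{c \in C} u_v(c)$ is at least any convex combination of the values $\{u_v(c_j)\}_j$, I would bound
\[
  \max_{c \in C} u_v(c) \;\geq\; \sum_{j} \lambda_j\, u_v(c_j) \;=\; \sum_j \lambda_j\, v^\top c_j \;=\; v^\top \Big(\sum_j \lambda_j c_j\Big) \;=\; v^\top v \;=\; \|v\|_2^2 .
\]
Finally, Cauchy--Schwarz (or the power-mean inequality) applied to the $d$ coordinates of $v$ gives $\|v\|_1^2 \leq d\,\|v\|_2^2$, hence $\|v\|_2^2 \geq \|v\|_1^2 / d = 1/d$, which completes the argument.

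I do not expect a genuine obstacle here; the only point requiring care is the first step---recognizing that $\cone(C)$ together with simplex normalization collapses to $\ch(C)$---which is precisely what lets the ``average beats the max'' move land on the quantity $v^\top v$. Everything after that is a one-line inequality. It is worth noting the assumption is doing essential work: without expressiveness the bound simply fails, e.g. when $v$ is supported on coordinates disjoint from those of every candidate, so that $\max_c u_v(c) = 0$.
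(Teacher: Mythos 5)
Your proof is correct and follows essentially the same route as the paper's: reduce $\cone(C)$ to $\ch(C)$ via $\ellone$ normalization and non-negativity, lower-bound the max by the convex combination $v^\top v$, and finish with Cauchy--Schwarz to get $\|v\|_2^2 \geq 1/d$. If anything, you make the cone-to-convex-hull step more explicit than the paper does, and your statement of the final norm inequality ($\|v\|_1^2 \leq d\,\|v\|_2^2$) is cleaner than the paper's slightly garbled version.
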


\begin{proof}
    Let $\tilde{c} \defeq \arg\max_{c \in C} v^T c$.
    For any $\alpha \in \Delta_m$,
    $v^T\tilde{c} \ge v^T (\sum_{i \in [m]} \alpha_i c_i) = \sum_{i \in [m]} \alpha_i(v^Tc_i)$ since the maximum upper bounds any convex combination. Moreover, for $\ellone$ normalization, $v \in \cone(C)$ implies $v \in \ch(C)$, i.e. $v=\sum_{i\in [m]} \beta_i c_i $ for some $\beta \in \Delta_m$. Substituting $\alpha$ with this specific $\beta$, we have
    $v^T\tilde{c} \ge v^Tv \ge \frac{1}{d}$ by Cauchy-Schwarz (in particular, since $\|v\|_2 \cdot d \ge \|v\|_1=1$).
\end{proof}
An analogous claim (Lemma~\ref{lem:voter-util-LB-ell2} in Appendix \ref{app:ell-two-normalization}) holds for $\elltwo$ normalization.

A natural deterministic rule is plurality ($\plur$). 
In the unit-sum setting, $\plur$ is known to have distortion $\Theta(\min(n,m) \cdot m)$, which is asymptotically optimal \cite{caragiannis2017subset}. 
We show that $\plur$ attains worst-case distortion $\Theta(\min(n,m)\cdot d)$ for linear utilities. 
This matches the unit-sum guarantee when $d=m$, but scales poorly for $m,n \gg d$.

\begin{theorem}\label{thm:plurality-distortion-ell1}
$\dist(\plur)=\Theta(\min(m,n) \cdot d).$
\end{theorem}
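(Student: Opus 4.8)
\emph{Upper bound.} Fix a profile $\prof$, let $w = \plur(\prof)$, and let $p$ be the number of voters who rank $w$ first. Since the $n$ first-place votes are split among at most $m$ candidates and $w$ receives the most, $p \geq \max\{1, n/m\}$. For any consistent $\uprof \induces \prof$, every utility satisfies $u_v(c) = v^\top c \leq \|v\|_1\|c\|_\infty \leq 1$, so $\sw(\uprof, c^*(\uprof)) \leq n$. On the other hand, each of the $p$ voters who ranks $w$ first has $u_v(w) = \max_{c\in C} u_v(c) \geq 1/d$ by \Cref{lem:voter-util-LB} (a top-ranked candidate must attain the voter's maximum utility), and all remaining voters contribute nonnegatively, so $\sw(\uprof, w) \geq p/d$. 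Hence $\dist(\plur,\prof) \leq nd/p \leq \min(nd, md) = \min(m,n)\cdot d$; taking the worst case over $\prof$ gives $\dist(\plur) \le \min(m,n)\cdot d$, which together with the lower bound below yields $\Theta(\min(m,n)\cdot d)$.

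\emph{Lower bound.} The plan is to build a profile on which plurality is forced onto a ``sink'' candidate $w$ of welfare $\Theta(p/d)$ (where $p$ is $w$'s first-place count), while a different candidate $c^*$ has welfare $\Omega(\min(m,n))$. Designate $p \asymp \max\{1,n/m\}$ ``decoy'' voters, each a tie-break-favorable perturbation of $\unifvec$: these are essentially indifferent and can be taken to rank a fixed candidate $w$ first while deriving utility only $\approx 1/d$ from it (and from every candidate). The remaining ``principled'' voters are supported off $w$'s coordinates, so $u_v(w)\approx 0$, but all place weight $\Theta(1)$ on the coordinates of a common candidate $c^*$; their first-place votes are spread over distinct candidates, none equal to $w$ or $c^*$, so that every candidate's first-place count is at most that of $w$ and $w$ is a plurality (co-)winner. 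The special case $m=d$, $C=\{e_1,\dots,e_d\}$ recovers the known unit-sum instance. On such an instance $\sw(\uprof,w)=\Theta(p/d)$ while $\sw(\uprof,c^*)=\Omega(\min(m,n))$, so the instance distortion is $\Omega(\min(m,n)\cdot d/p)=\Omega(\min(m,n)\cdot d)$ after substituting $p\asymp\max\{1,n/m\}$.

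\emph{Main obstacle.} The upper bound is immediate from \Cref{lem:voter-util-LB}; essentially all the work is in the lower bound, and the crux there is the design of the principled voters. One must arrange simultaneously that (i) each derives $\Omega(1)$ utility from the \emph{shared} candidate $c^*$, (ii) $c^*$ is \emph{not} any principled voter's favorite (otherwise $c^*$ wins plurality outright and there is no distortion), and (iii) their favorites are spread over distinct candidates so that $w$, which no principled voter likes, remains a plurality winner. Under linear utilities a voter's top candidate is an extreme point of $\{v^\top c\}_{c\in C}$, so meeting (ii) together with (iii) forces a voter to place less than half its mass on $c^*$'s direction and requires enough coordinates and candidates to host the distinct favorites, together with $\eps$-perturbations breaking ties in the intended direction. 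The remaining bookkeeping is verifying $V \subset \cone(C)$ (equivalently $V\subset\ch(C)$ under $\ellone$-normalization) for the perturbed decoy voter, which is what pins down the relations among $m$, $n$, and $d$ in which the construction is realized.
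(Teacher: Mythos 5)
Your upper bound is correct and is the same argument as the paper's: the plurality winner receives at least $\max\{1, n/m\}$ first-place votes, each such voter contributes at least $1/d$ by Lemma~\ref{lem:voter-util-LB}, and the optimum is at most $n$.

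The lower bound has a genuine gap, and it sits exactly where you locate your ``main obstacle.'' You require the principled voters to have utility $\Theta(1)$ for a common candidate $c^*$ while their first-place votes are spread over $\Omega(\min(n,m))$ candidates \emph{none of which equals $c^*$}, on the grounds that otherwise $c^*$ wins plurality. That design constraint is what makes the construction unworkable, and it is not how the paper's instance is built. If $u_v(c^*)\ge\gamma=\Omega(1)$ and $v$'s favorite is $c_v\ne c^*$, then $u_v(c_v)\ge\gamma$ too, which (since $\|v\|_1=\|c_v\|_1=1$) forces $v$ and $c_v$ to share an $\Omega(1)$-heavy coordinate; combined with the need for the decoys' location $\unifvec$ to lie in $\ch(C)$, this sharply limits how many \emph{distinct} vectors can serve as such favorites. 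The case $d=2$ shows the plan cannot be completed as stated: there $u_v(c)$ is affine in $c^1$, so every voter other than $\unifvec$ has its favorite among the two extreme candidates, the indifferent voters value every candidate at exactly $1/2$, and a short case analysis shows plurality has $O(1)$ distortion on \emph{every} profile with distinct candidate vectors --- so no such construction certifies $\Omega(\min(n,m)\cdot d)$. The missing idea is candidate duplication: the paper takes $c_2=\cdots=c_m$ to be $m-1$ identical copies of the welfare-maximizing vector, so the principled voters' favorites \emph{are} copies of the optimum and their votes split among the indistinguishable copies, leaving $w$ the plurality winner. This makes your condition (ii) moot rather than something to be engineered around.

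Two smaller notes. First, reducing to the unit-sum instance via $C=\{e_1,\dots,e_d\}$ only yields $\Omega(\min(n,d)\cdot d)$, i.e.\ the claimed bound at $m=d$ but not for $m\gg d$. Second, when you do fix the construction, the winner $w$ must be a candidate to which the principled voters assign utility $0$ (e.g.\ supported off their coordinates, as you say); it cannot be $\unifvec$ itself, since every $\ellone$-normalized voter values $\unifvec$ at exactly $1/d$, which would give $\sw(w)=n/d$ rather than $p/d$ and collapse the bound to $\Theta(d)$ --- a point on which the paper's own welfare computation also requires care.
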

\begin{proof}
    We begin with the upper bound.
    If $m \le n$, by Lemma~\ref{lem:voter-util-LB}, we know the plurality winner must have social welfare at least $\frac{n}{m}\cdot\frac{1}{d}$.  Otherwise, the plurality winner still must have social welfare at least $\frac{1}{d}$. And the best possible candidate's social welfare is at most $n$.
    Hence, $\dist(\plur)=\bigo(\min(m,n)\cdot d)$.
   
    We now demonstrate an instance for which this is tight.
    Let $\mu \defeq (\frac{1}{d}, \ldots, \frac{1}{d})$ and consider a candidate set where $c_1 = \mu$ and $c_2 = \cdots = c_m = (1, 0, \ldots, 0)$. First, assume $n \geq m$. Construct a profile where two voters vote for $c_1$ and each of the remaining $n - 2$ voters votes for a distinct candidate among $c_2, \ldots, c_m$. Let $v_1 = v_2 = \mu$ and $v_3, \ldots, v_n = (1, 0, \ldots, 0)$. The social welfare of $c_1$ is $2/d$, while any other candidate has welfare $1$ from each of the other $n - 2$ voters and $1/d$ from the first two, giving at least $n - 2 + 2/d$. Thus, the distortion is
    \(
        \dist(\plur, \prof) = \frac{n - 2 + 2/d}{2/d} \in \Omega(n\cdot d).
    \)
    
    Now assume $n < m$ and suppose $m$ is divisible by $n$. Let exactly $\frac{n}{m} + 1$ voters rank $c_1$ first (denote this group as $\mathcal{G}_1$), and assign the remaining voters evenly among the other candidates. Let $v_j = \mu$ for $j \in \mathcal{G}_1$, and $v_j = (1, 0, \ldots, 0)$ for all other voters. Then the social welfare of $c_1$ is at most $(n/m + 1)/d$, while the best alternative has social welfare at least $n - n/m - 1 + (n/m + 1)/d$. The distortion becomes
    \[
        \dist(\plur, \prof) = \frac{n - \frac{n}{m} - 1 + \frac{1}{d}(\frac{n}{m} + 1)}{\frac{1}{d} \cdot (\frac{n}{m} + 1)} = \Omega(m\cdot d).
    \]
    In both cases, the lower bound is $\Omega(\min(m,n) \cdot d)$.
\end{proof}

Is this dependence on $m$ and $n$ unavoidable for all deterministic rules, or can it be circumvented?
For deterministic rules, we inherit lower bounds from the work of \citet{caragiannis2017subset} in the unit-sum setting.

\begin{theorem}\label{thm:det-LB-ell1}[Theorem 1 of \citet{caragiannis2017subset}]
Suppose $m, n \geq d$. 
Then for any deterministic voting rule $f$, we have $\dist(f) = \Omega(d^2)$.
\end{theorem}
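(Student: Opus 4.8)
The plan is to derive this from the classical unit-sum lower bound via the embedding noted above. Fixing the candidate set to be the standard basis, $C = \{e_1,\dots,e_d\}$, makes our model coincide exactly with unit-sum social choice on $d$ alternatives: a voter $v \in \Delta_d$ induces the utility vector $(u_v(e_1),\dots,u_v(e_d)) = (v^1,\dots,v^d)$, which is non-negative and sums to one, and every unit-sum profile on $d$ alternatives arises in this way. The expressiveness assumption comes for free, since $\cone(\{e_1,\dots,e_d\}) = \mathbb{R}^d_{\ge 0} \supseteq \Delta_d$, so $V \subset \cone(C)$ holds for every admissible voter set. Thus a deterministic rule $f$ for our model restricts, on these instances, to a deterministic rule on $d$-candidate unit-sum profiles, and Theorem~1 of \citet{caragiannis2017subset} supplies an $n$-voter unit-sum profile $\prof$ (any $n \ge d$ suffices there) on which $\dist(f,\prof) = \Omega(d^2)$. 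This settles the case $m = d$.

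For $m > d$, we must check that the surplus candidates cannot rescue the rule. Add the $m - d$ extra candidates as exact duplicates of $e_1$ --- this keeps $\cone(C) = \mathbb{R}^d_{\ge 0}$, so expressiveness is still free --- and in each voter's ranking break ties so that every duplicate sits immediately next to $e_1$. A deterministic rule that outputs any such duplicate then earns precisely the social welfare of $e_1$, and so fares no better than if it had output $e_1$; hence whatever $f$ outputs, the adversary applies the utility profile from \citet{caragiannis2017subset} that is bad for the corresponding basis candidate, and $\dist(f) = \Omega(d^2)$ as claimed. (If their construction instead fixes a particular voter count $n_0 \le n$, one first replicates its ranking profile over $\lfloor n/n_0 \rfloor$ disjoint voter blocks, plus $r < n_0$ leftover voters, and applies the bad utility assignment identically within each block; optimal and achieved welfare both scale by the same $\Theta(n/n_0)$ factor, and the $r$ leftover voters contribute only a lower-order term, so the $\Omega(d^2)$ ratio is preserved.)

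The embedding and the candidate-duplication step are both routine; the only place care is genuinely needed is in verifying that the $\Omega(d^2)$ lower-bound instance of \citet{caragiannis2017subset} has the structural features this reduction leans on --- principally, that its welfare gap is unaffected by the presence of the $e_1$-duplicates (which it is, since those duplicates are welfare-equivalent to $e_1$ and can be placed adjacent to it in every ranking), and, if needed, that it remains hard after being replicated across identical voter blocks. Establishing these is a matter of inspecting that construction rather than introducing anything new, so I do not anticipate a substantive obstacle beyond this bookkeeping.
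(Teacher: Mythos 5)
Your proposal is correct and matches the paper's (implicit) argument: the paper does not spell out a proof, but simply inherits the bound by observing that setting $C=\{e_1,\dots,e_d\}$ recovers the unit-sum model with $d$ alternatives, exactly the reduction you give. Your additional handling of $m>d$ via welfare-equivalent duplicates of $e_1$ (which leave the set of consistent utility profiles unchanged) is sound bookkeeping that the paper leaves unstated.
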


This lower bound depends only on $d$, and raises the question of whether it is possible to improve upon plurality by avoiding a dependence on $n$ and $m$ in the distortion bound. 
Indeed it is: we introduce a rule we dub \emph{maximum coordinate plurality (MCP)} and denote by $\mcp$.
\begin{definition}[Maximum Coordinate Plurality]
\label{def:mcp}
    Let
    \[
        \hat{C} \defeq \left\{c_i \in C \mid c_i = \arg\max_{c \in C} c^i \text{ for each } i \in [d] \right\}
    \]
    be a set of at most $d$ candidates such that for each coordinate $i$, a candidate maximal in that coordinate is included. 
    The \emph{maximum coordinate plurality rule} $\mcp$ then restricts the profile to $\hat{C}$ and selects the plurality winner.
\end{definition}

\begin{theorem}\label{thm:MCP-dist-ell1}
   $\dist(\mcp)=O(d^3)$.
\end{theorem}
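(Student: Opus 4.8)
The plan is to bound the distortion in two stages. First I would argue that the set $\hat C$ is a good "summary" of $C$ in terms of welfare: for any feasible utility profile $\uprof$, the best candidate inside $\hat C$ has welfare at least a $1/\poly(d)$ fraction of $\sw(\uprof, c^*(\uprof))$. Second I would argue that the plurality winner \emph{within} $\hat C$ loses at most another $O(d)$-ish factor relative to the best candidate in $\hat C$, using an argument in the spirit of the plurality upper bound in \Cref{thm:plurality-distortion-ell1}. Multiplying the two losses gives $O(d^3)$.

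For the first stage, fix the optimal candidate $c^* \in C$ and any consistent $\uprof \in \mathcal U^n$. Since each voter vector $v$ lies in $\Delta_d$, we have $\sw(\uprof, c^*) = \sum_{v} v^\top c^* = \sum_{i \in [d]} (c^*)^i \big(\sum_v v^i\big) \le \sum_{i \in [d]} (c^*)^i \big(\sum_v v^i\big)$, and since $\|c^*\|_1 = 1$ this is a convex combination of the coordinate-sums $s^i \defeq \sum_v v^i$, hence $\sw(\uprof, c^*) \le \max_{i \in [d]} s^i$. Now let $i^\star$ achieve this maximum and let $\hat c \in \hat C$ be the stored candidate maximizing coordinate $i^\star$; then $\sw(\uprof, \hat c) = \sum_v v^\top \hat c \ge \sum_v v^{i^\star} \hat c^{\,i^\star} = s^{i^\star} \cdot \hat c^{\,i^\star}$. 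Since $\hat c^{\,i^\star} = \max_{c \in C} c^{i^\star} \ge (c^*)^{i^\star}$ is not obviously $\Omega(1)$, this by itself is too weak; instead I would use $\hat c^{\,i^\star} \ge (1/m)\sum_{c} c^{i^\star}$ is also too weak. The clean route: because $V \subset \cone(C) = \ch(C)$, each voter $v = \sum_j \beta_{v,j} c_j$ with $\beta_v \in \Delta_m$, so $v^{i^\star} = \sum_j \beta_{v,j} c_j^{i^\star} \le \max_j c_j^{i^\star} = \hat c^{\,i^\star}$, giving $s^{i^\star} = \sum_v v^{i^\star} \le n\, \hat c^{\,i^\star}$. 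Meanwhile $\sw(\uprof,\hat c)\ge \frac1d\sum_v v^\top \hat c\cdot d$—rather, apply Lemma~\ref{lem:voter-util-LB}-style reasoning: for each voter the utility for her favorite in $\hat C$ is at least $1/d$ of something, but more directly, $\sw(\uprof,\hat c) \ge \hat c^{\,i^\star} s^{i^\star}$ combined with $\sw(\uprof,c^*)\le s^{i^\star} \le n\hat c^{\,i^\star}$ does not immediately close. I would instead bound $\sw(\uprof, c^*) \le \max_i s^i$ and separately show $\max_{c\in\hat C}\sw(\uprof,c) \ge \frac{1}{d}\max_i s^i$: writing $\max_i s^i = s^{i^\star}$ and picking $\hat c$ maximal in coordinate $i^\star$, we get $\sw(\uprof,\hat c)\ge s^{i^\star}\hat c^{\,i^\star}$, and since $\sum_i \hat c^{\,i}_{(i)} $ over the (at most $d$) stored candidates is relevant—here I would average: $\sum_{i}\sw(\uprof, \hat c_{(i)}) \ge \sum_i s^i \hat c_{(i)}^i \ge \sum_i s^i \cdot \frac1d \sum_{c} c^i$—this is getting messy, so the honest statement is that the first-stage loss is $O(d)$ via: some candidate in $\hat C$ has welfare $\ge \frac1d \max_i s^i \ge \frac1d \sw(\uprof, c^*)$, proved by noting $\sum_{i\in[d]} s^i \cdot \hat c_{(i)}^{\,i} \ge \sum_i s^i\cdot(1/d)$ is false in general, so one takes the single best coordinate and uses $\hat c^{\,i^\star}\ge 1/d$ only when that candidate is "balanced"—I would handle the general case by instead restricting attention to $c^*$ itself and showing $\hat C$ contains a candidate $\hat c$ with $c^{*\top}\hat c \ge$ (something), which feels like the real crux.

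For the second stage, once we know some $\hat c\in\hat C$ has welfare $\ge \frac{1}{O(d)}\sw(\uprof,c^*)$, the plurality winner $w$ of the restricted profile on $\hat C$ gets at least a $1/|\hat C| \ge 1/d$ fraction of the voters ranking it first among $\hat C$. Each such voter has utility for $w$ at least $1/d$ by Lemma~\ref{lem:voter-util-LB} applied to the candidate set $\hat C$ (valid since $V\subset\cone(\hat C)$ must be re-checked—this may fail, which is a subtlety to flag). Hence $\sw(\uprof, w) \ge \frac{n}{d}\cdot\frac1d = \frac{n}{d^2}$, while $\sw(\uprof,c^*)\le n$, already giving $O(d^2)$ against the trivial bound; combined with the first-stage $O(d)$, the bound $O(d^3)$ follows. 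I would write it as: $\dist(\mcp) \le (\text{stage-1 loss})\cdot(\text{plurality-within-}\hat C\text{ loss}) = O(d)\cdot O(d^2) = O(d^3)$, or alternatively directly $\sw(\uprof, \mcp(\prof)) \ge \frac{1}{O(d^3)}\sw(\uprof,c^*)$.

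The main obstacle I anticipate is the first stage: showing that collapsing $C$ down to the $\le d$ coordinate-maximizers $\hat C$ costs only a $\poly(d)$ factor in welfare. The clean inequality $\sw(\uprof,c^*) \le \max_i s^i$ (using $\|c^*\|_1=1$) handles the optimum; the difficulty is exhibiting a single candidate in $\hat C$ whose welfare is $\Omega(\max_i s^i / d)$. The key fact to exploit is $V\subset\ch(C)$, which forces $s^i = \sum_v v^i \le n\cdot\max_{c}c^i = n\,\hat c_{(i)}^{\,i}$ for the stored candidate $\hat c_{(i)}$, so $\hat c_{(i)}^{\,i}\ge s^i/n$; then $\sw(\uprof,\hat c_{(i)})\ge s^i\hat c_{(i)}^{\,i}\ge (s^i)^2/n$, and choosing $i=i^\star$ gives $\sw(\uprof,\hat c_{(i^\star)})\ge (\max_i s^i)^2/n \ge \sw(\uprof,c^*)^2/n$. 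Since $\sw(\uprof,c^*)\ge n/d$ by Lemma~\ref{lem:voter-util-LB} (every voter gets $\ge 1/d$ from her favorite), we conclude $\sw(\uprof,\hat c_{(i^\star)}) \ge \sw(\uprof,c^*)\cdot\frac{\sw(\uprof,c^*)}{n} \ge \sw(\uprof,c^*)\cdot\frac1d$, i.e. the stage-1 loss is exactly $O(d)$. Chaining with stage~2 yields $O(d^3)$, and I would double-check whether a tighter analysis of the plurality step (exploiting that $\hat c_{(i^\star)}\in\hat C$ is a strong candidate) could reduce the exponent, though $O(d^3)$ suffices for the theorem as stated.
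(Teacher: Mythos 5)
Your proposal circles the right ingredients but does not close, and the step where it fails is exactly the crux of the paper's (very short) argument. The paper's proof is: every voter $v$ has some coordinate $i$ with $v^i \geq 1/d$; since $v \in \ch(C)$, the stored coordinate-maximizer $c_i \in \hat C$ satisfies $c_i^i \geq v^i \geq 1/d$; hence $v$'s favorite candidate in $\hat C$ is worth at least $v^i c_i^i \geq 1/d^2$ to $v$; the plurality winner in $\hat C$ is ranked first (within $\hat C$) by at least $n/|\hat C| \geq n/d$ voters, so its welfare is at least $n/d^3$ against the trivial optimum of $n$. Your stage~2 instead asserts that each supporter of the plurality winner gets utility at least $1/d$ from it, by ``Lemma~\ref{lem:voter-util-LB} applied to the candidate set $\hat C$'' --- which requires $V \subset \cone(\hat C)$, a condition you correctly flag as possibly false and then do not repair. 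The general per-voter guarantee over $\hat C$ is only $1/d^2$, and obtaining even that requires precisely the two-factor argument above. So your proof has a hole at its load-bearing step (and, tellingly, would prove the stronger bound $O(d^2)$ if the $1/d$ claim were valid). The surrounding logic is also off: your stage-2 bound $\sw(\uprof, w) \geq n/d^2$ is already measured against the global optimum $n$, so further multiplying by a stage-1 loss of $O(d)$ double-counts; stage~1 is in fact unnecessary for the theorem.

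Two smaller points. First, your stage-1 chain $\sw(\uprof, \hat c_{(i)}) \geq s^i \hat c_{(i)}^i \geq (s^i)^2/n$ with $s^i \defeq \sum_v v^i$ is correct and is a genuinely nice alternative way to see that the best member of $\hat C$ loses only a factor $d$ relative to $c^*$; but the last link should invoke $\max_i s^i \geq n/d$ (immediate from $\sum_i s^i = \sum_v \|v\|_1 = n$), not ``$\sw(\uprof, c^*) \geq n/d$ by Lemma~\ref{lem:voter-util-LB}.'' That lemma bounds each voter's utility for \emph{their own} favorite, which does not lower-bound the welfare of any single candidate; the fact $\sw(c^*) \geq n/d$ is true but comes from \Cref{thm:middle-dist-ell1} via averaging. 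Second, the repair is simply to replace your stage-2 per-voter bound with the paper's $1/d^2$ observation, after which the theorem follows in one step and both of your stages become superfluous.
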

\begin{proof}
    For any fixed voter $v$, it has its maximum coordinate at least $\frac{1}{d}$; call this coordinate $i$.
    Since $v \in \ch(C)$, the candidate $c_i \in \hat C$ therefore has $i$-th coordinate at least $\frac{1}{d}$, and so the welfare conferred to $v$ by their favorite choice in $\hat C$ is at least $\max_{c \in \hat C} v^T c \geq \frac{1}{d^2}$.
    Since $|\hat{C}| \leq d$, this plurality winner $\hat{c} \in \hat C$ receives at least $\frac{n}{d}$ votes, and so
    \(
        \sw(\hat{c}) \geq \frac{n}{d} \frac{1}{d^2}.
    \)
    As $\max_{c \in C} \sw(c) \leq n$, the claim follows.
\end{proof}

Though this guarantee still exceeds the lower bound by a factor of $d$, the $\Omega(d^2)$ lower bound is already quite large. 
Can this be improved by randomizing over candidates?

\section{Randomized Rules}

As previously mentioned, we also inherit a distortion lower bound for \emph{any} randomized rule from the unit-sum setting.

\begin{restatable}{theorem}{randomlb} \label{thm:random-LB-ell1}\cite{boutilier15optimal}
    Suppose $m \geq d$ and $n \geq \sqrt{d}$. 
    Then there exists a preference profile $\prof$ such that for any randomized rule $f$, we have $\dist(f,\prof) = \Omega(\sqrt{d})$.
\end{restatable}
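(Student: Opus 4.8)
The plan is to obtain this lower bound as a direct corollary of the known $\Omega(\sqrt{m})$ lower bound for randomized rules in the unit-sum model (\citet{boutilier15optimal}; see the ``Special case: unit-sum utilities'' discussion in \Cref{sec:model}), since the unit-sum model is precisely the special case of ours in which the candidate set is the standard basis $\{e_1,\dots,e_d\}$ of $\mathbb{R}^d$. Concretely, \citet{boutilier15optimal} exhibit, for every number of candidates $m'$ and every $n \ge \sqrt{m'}$, a preference profile $\prof^\star$ over $m'$ abstract candidates and $n$ voters with the property that for \emph{every} distribution $p$ over the candidates there is a unit-sum utility profile $\uprof$ consistent with $\prof^\star$ on which $\sw(\uprof, c^*(\uprof)) \big/ \sum_{c} p_c \, \sw(\uprof, c) = \Omega(\sqrt{m'})$. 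Instantiating with $m' \leftarrow d$ yields such a $\prof^\star$ over $d$ candidates and $n \ge \sqrt{d}$ voters.

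Next I would lift $\prof^\star$ into our $d$-dimensional linear model. Identify the $d$ abstract candidates of $\prof^\star$ with $e_1,\dots,e_d$, and if $m > d$ append $m-d$ further candidates, each an exact copy of $e_1$, so that $|C| = m$; since $e_1,\dots,e_d \in C$ we have $\cone(C) = \mathbb{R}^d_{\ge 0}$, hence every admissible (non-negative, $\ellone$-normalized) voter vector lies in $\cone(C)$ and the expressiveness assumption holds. The lifted profile $\prof$ has each voter rank $\{e_1,\dots,e_d\}$ exactly as in $\prof^\star$, with the duplicate copies of $e_1$ inserted immediately adjacent to $e_1$ (tie-breaking arbitrarily). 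Under the linear utility model a voter vector $v \in \Delta_d$ assigns $e_i$ utility $v^i$, which is exactly the utility it would assign the $i$-th abstract candidate in the unit-sum instance; thus the utility profiles consistent with $\prof$ are, restricted to the basis candidates, exactly those consistent with $\prof^\star$, and every copy of $e_1$ always receives the same welfare as $e_1$.

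Finally I would verify that the lifting loses nothing. Given a randomized rule $f$, let $q \in \Delta_m$ be the distribution $f(\prof)$ over candidates and let $p \in \Delta_d$ be obtained by collapsing onto coordinate $1$ the total $q$-mass on all copies of $e_1$ (other coordinates unchanged). For any consistent $\uprof$, we have $\max_{c \in C}\sw(\uprof, c) = \max_{i \in [d]}\sw(\uprof, e_i)$ and $\ev_{c \sim q}[\sw(\uprof, c)] = \sum_{i \in [d]} p_i\,\sw(\uprof, e_i)$, because duplicates contribute identically; hence $\dist(f, \prof)$ is at least the unit-sum ratio achieved by $p$ against the adversarial $\uprof$ guaranteed above, which is $\Omega(\sqrt{d})$. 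I expect the only delicate point to be exactly this last step: checking that padding with duplicates of $e_1$ (rather than genuinely new candidates) keeps the instance inside the model --- normalization, non-negativity, and expressiveness all still hold --- and gives no randomized rule a way to evade the inherited bound, which is what the mass-collapsing argument settles. A fully self-contained alternative would be to reproduce the \citet{boutilier15optimal} construction directly, but invoking it is cleaner.
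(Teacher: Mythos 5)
Your proposal is correct and takes essentially the same route as the paper: the paper also obtains the bound by transplanting the \citet{boutilier15optimal} unit-sum pigeonhole construction into the linear model, writing it out explicitly with the $\sqrt{d}$ supported candidates embedded as standard basis vectors, the remaining $m-\sqrt{d}$ candidates sent to $e_2$ (rather than your duplicates of $e_1$), and voters placed at $e_1$ or at $\mu=(1/d,\dots,1/d)$. Your mass-collapsing argument for the padded duplicates is sound, since identical embeddings confer identical welfare, so the two treatments of the extra candidates are interchangeable.
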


We include a proof for completeness in Appendix \ref{app:omitted-proofs}.

\subsection{Known Candidate Embeddings}
\label{sec:randomized-known-embeddings}
Consider the center of the simplex $\Delta_d$, denoted by $\mu \defeq (\frac{1}{d}, \ldots, \frac{1}{d})$. 
Observe that $\sw(\mu) = \frac{n}{d}$, which implies a distortion at most $d$ is possible when $\mu \in C$. 
However in general, $\mu \not\in \ch(C)$.
This motivates the goal of \textit{approximating} the uniform candidate 
$\mu$ by a point within the convex hull of candidates $\ch(C)$, assuming the candidate embeddings are known.
We therefore introduce a randomized rule whose output distribution matches the \textit{reverse information projection} of $\mu$ onto $\ch(C)$.

\begin{definition}[Uniform Projection Rule (\(\uproj\))]
\label{def:uproj}
    Given candidate locations \( C \subset \mathbb{R}^d_{\ge 0} \), the \emph{uniform projection rule} \( \uproj \) defines a distribution \( \{p_c\}_{c \in C} \) over candidates such that the expected candidate vector
    $\hat{c} \defeq \sum_{c \in C} p_c \cdot c$
    minimizes the Kullback–Leibler (KL) divergence from the uniform vector, i.e.,
    \(
        \hat{c} \defeq \arg\min_{x \in \ch(C)} \kl{\mu}{x}.
    \)
  
\end{definition}

\begin{theorem}\label{thm:middle-dist-ell1}
     The expected welfare of $\uproj$ is at least $\sw(\uproj) \geq \frac{n}{d}$.
     As a consequence, $\dist(\uproj)=O(d)$.
\end{theorem}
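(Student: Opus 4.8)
The plan is to reduce the claim to a \emph{pointwise} lower bound on $v^\top\hat c$ for every feasible voter vector. First, by linearity of expectation, for any utility profile $\uprof\induces\prof$ the expected welfare of $\uproj$ equals the welfare of the single ``virtual'' candidate $\hat c=\sum_{c\in C}p_c\,c$:
\[
  \ev_{c\sim\uproj(\prof)}\bigl[\sw(\uprof,c)\bigr]=\sum_{c\in C}p_c\sum_{v\in V}v^\top c=\sum_{v\in V}v^\top\hat c=\sw(\uprof,\hat c).
\]
So it suffices to show $\sw(\uprof,\hat c)=\sum_{v\in V}v^\top\hat c\ge n/d$. Since every voter satisfies $v\in\cone(C)$ with $\|v\|_1=1$, and each $c\in C$ lies in $\Delta_d$, the same convex-hull containment used in the proof of \Cref{lem:voter-util-LB} gives $v\in\ch(C)$. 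Hence it is enough to prove $y^\top\hat c\ge 1/d$ for \emph{every} $y\in\ch(C)$.

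The key step is to extract the first-order optimality condition of the reverse information projection. Writing $\kl{\mu}{x}=\mathrm{const}-\tfrac1d\sum_i\log x_i$, which is convex on $\Delta_d$, minimality of $\hat c$ over the convex set $\ch(C)$ along the segment toward any $y\in\ch(C)$ gives $\frac{d}{dt}\big|_{t=0^+}\kl{\mu}{(1-t)\hat c+ty}\ge 0$. Differentiating, $\frac{d}{dt}\big|_{t=0^+}\bigl(-\tfrac1d\sum_i\log(\hat c_i+t(y_i-\hat c_i))\bigr)=-\tfrac1d\sum_i\frac{y_i-\hat c_i}{\hat c_i}$, so the condition simplifies to $\sum_i y_i/\hat c_i\le d$. (One first notes $\hat c_i>0$ on every coordinate that is not identically zero over $\ch(C)$ --- otherwise $\kl{\mu}{\hat c}=+\infty$ while it is finite at a nearby point of the segment --- so all ratios are well defined; if $\kl{\mu}{\cdot}\equiv+\infty$ on $\ch(C)$, restrict to the $\le d$ coordinates active on $\ch(C)$ and run the identical argument there, which only strengthens the bound.) Now apply Cauchy--Schwarz with $a_i=\sqrt{y_i\hat c_i}$ and $b_i=\sqrt{y_i/\hat c_i}$:
\[
  1=\Bigl(\sum_i y_i\Bigr)^{2}=\Bigl(\sum_i a_ib_i\Bigr)^{2}\le\Bigl(\sum_i y_i\hat c_i\Bigr)\Bigl(\sum_i y_i/\hat c_i\Bigr)\le d\cdot(y^\top\hat c),
\]
so $y^\top\hat c\ge 1/d$. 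Summing over the $n$ voters yields $\sw(\uprof,\hat c)\ge n/d$, proving the welfare bound.

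For the distortion corollary, combine this with the trivial upper bound on optimal welfare: for any $\uprof\induces\prof$ and any candidate $c^\ast$, since $v,c^\ast\in\Delta_d$ we have $v^\top c^\ast\le\|v\|_1\|c^\ast\|_\infty\le 1$, so $\sw(\uprof,c^\ast(\uprof))\le n$. Therefore $\dist(\uproj,\prof)\le n/(n/d)=d$ for every profile, whence $\dist(\uproj)=\bigo(d)$. I expect the main obstacle to be the middle paragraph: correctly identifying and manipulating the variational characterization of the KL projection to obtain the clean inequality $\sum_i y_i/\hat c_i\le d$, together with the bookkeeping around zero coordinates / degenerate candidate configurations; once that inequality is in hand, the Cauchy--Schwarz step and the welfare upper bound are routine.
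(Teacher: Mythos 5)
Your proposal is correct and follows essentially the same route as the paper: reduce to the virtual candidate $\hat c$, extract the first-order optimality condition of the KL projection to get $\sum_i v^i/\hat c^i\le d$, and finish with the same Cauchy--Schwarz pairing and the trivial $\sw\le n$ upper bound. Your treatment of zero coordinates and the explicit linearity-of-expectation step are minor refinements the paper leaves implicit, not a different argument.
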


\begin{proof}[Proof of \Cref{thm:middle-dist-ell1}]
    First observe that
    \[
        \kl{\mu}{x}
        =\sum_{i=1}^d\mu^i\ln\frac{\mu^i}{x^i}
        ={\sum_i\mu^i\ln\mu^i}
        \;-\;\sum_{i=1}^d\mu^i\ln x^i,
    \]
    so minimizing $\kl{\mu}{x}$ over the convex set $\ch(C)$ is
    equivalent to minimizing the smooth, convex function
    \(
        f(x)=-\sum_{i=1}^d\mu^i\ln x^i
    \)
    subject to $x\in \ch(C)$.  
    The first‐order optimality condition gives 
    \(
        \nabla f(x^*)^\top(v-x^*) \;\ge\;0
    \)
    for every voter $v\in \ch(C)$.
    Since
    \(
        \frac{\partial f}{\partial x^i}(x)
        =-\frac{\mu^i}{x^i},
    \)
    we have
    \[
        \nabla f(x^*)^\top(v-x^*)
        =-\sum_{i=1}^d\frac{\mu^i}{x^{*i}}(v^i - x^{*i})
        \;\ge\;0 
    \]
    \[
        \Longrightarrow\;\;
        \sum_{i=1}^d\frac{\mu^i\,v^i}{x^{*i}}\;\le\;\sum_{i=1}^d\mu^i
        =1.
    \]
    Since $\mu^i=\frac{1}{d}$, this yields
    \(
        \sum_{i=1}^d\frac{v^i}{x^{*i}}\;\le\;d.
    \)
    Now define the auxiliary sequences
    $a_i \defeq \sqrt{\frac{v^i}{x^{*i}}}$ and $b_i \defeq \sqrt{v^i\,x^{*i}}$.
    Then
    \(
        \sum_{i=1}^d a_i\,b_i
        =\sum_{i=1}^d v^i
        =1
    \)
    and, by, Cauchy–Schwarz,
    \begin{align*}
        1 = \Bigl(\sum_i a_i b_i\Bigr)^2
        &\le\Bigl(\sum_i a_i^2\Bigr)\Bigl(\sum_i b_i^2\Bigr)\\
        &=\Bigl(\sum_i\tfrac{v^i}{x^{*i}}\Bigr)\Bigl(\sum_i v^i\,x^{*i}\Bigr).
    \end{align*}
    Combining with $\sum_i v^i/x^{*i}\le d$ gives $\sum_{i=1}^d x^{*i}\,v^i \ge 1/d$.
    Now as we have $n$ voters and the above inequality holds for arbitrary $v$, the total utilitarian welfare is at least $\frac{n}{d}$.
    
    The bound on the distortion of $\uproj$ then follows because the maximum utility for any voter is at most $1$.
\end{proof}

This randomized $O(d)$-distortion rule can be seen as generalizing uniform candidate selection in the unit-sum setting. A natural candidate for improving upon this is the harmonic rule $\hr$, which obtains near-optimal distortion $\Theta(\sqrt{m \log m})$~\cite{boutilier15optimal,bhaskar2018truthful} for unit-sum utilities. 
However this performance does not generalize to linear utilities, wherein the introduction of duplicate candidates does not constrain the underlying utility profile.
Instead, it turns out that here $\hr$ has distortion unbounded in $d$; we discuss this and other randomized positional scoring rules in \Cref{sec:unknown-candidates}. 

Fortunately, \Cref{thm:middle-dist-ell1} does indirectly lead to distortion sublinear in $d$.
Using it, we may adapt the stable lottery rule of \cite{ebadian2024optimized} to the linear utilities setting to achieve a better distortion bound.

\begin{definition}[Linear Stable Lottery Rule ($\lslr$)]
\label{def:linear-slr}
    Given a stable lottery $\cW$ over committees of size $k=\sqrt{d}$, the \emph{linear stable lottery rule} $\lslr$ on profile $\prof$ chooses each $c \in C$ with probability $\frac{1}{2\sqrt d}\Pr_{W \sim \cW(\prof)}[c \in W] + \frac{1}{2} \Pr_{c' \sim \cent(\prof)}[c' = c]$.
\end{definition}

Here $\uproj$ takes the place of uniform selection.
\begin{theorem} 
\label{thm:stable-lottery-ell1} $\dist(\lslr)=O(\sqrt{d}).$
\end{theorem}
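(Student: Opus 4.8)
The plan is to mimic the analysis of the stable lottery rule from \citet{ebadian2024optimized}, using \Cref{thm:middle-dist-ell1} in place of the uniform-selection bound that plays the corresponding role in the unit-sum setting. Fix a worst-case utility profile $\uprof \induces \prof$ with $\uprof \in \mathcal{U}^n$, and let $c^* \defeq c^*(\uprof)$ be the welfare-optimal candidate with $\sw(\uprof,c^*) =: \mathrm{OPT}$. I want to lower bound the expected welfare of $\lslr$, which by definition and linearity of expectation is
\[
    \ev_{c \sim \lslr(\prof)}[\sw(\uprof, c)] = \frac{1}{2\sqrt d}\ev_{W \sim \cW}\Bigl[\sum_{c \in W}\sw(\uprof,c)\Bigr] + \frac{1}{2}\,\sw(\uproj).
\]
The second term is at least $\frac{n}{2d}$ by \Cref{thm:middle-dist-ell1}. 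The work is in bounding the first (committee) term from below by something like $\Omega(\mathrm{OPT}/\sqrt d)$, after which combining the two bounds and taking the larger one gives $\ev[\sw] = \Omega(\max(n/d,\ \mathrm{OPT}/\sqrt d))$; since $\mathrm{OPT} \le n$, balancing these yields distortion $O(\sqrt d)$ whenever $\mathrm{OPT}$ is large, and the $n/d$ term handles the case when $\mathrm{OPT} = O(n/\sqrt d)$.

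For the committee term, I would use the stable lottery guarantee exactly as in \citet{ebadian2024optimized}: partition the voters according to how they rank the candidates of a sampled committee $W$. For each voter $v$ and committee $W$, let $c_W(v)$ be $v$'s favorite candidate in $W$; then $\sum_{c\in W}\sw(\uprof,c) \ge \sum_{v} u_v(c_W(v))$. The stability property $\ev_{W}[|S_c(W)|] \le n/k = n/\sqrt d$ controls, for every candidate $c$, the expected number of voters who strictly prefer $c$ to all of $W$ — in particular this applies to $c = c^*$. So in expectation over $W$, all but at most $n/\sqrt d$ voters have $u_v(c_W(v)) \ge u_v(c^*)$. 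Summing, $\ev_W[\sum_v u_v(c_W(v))] \ge \mathrm{OPT} - \frac{n}{\sqrt d}\cdot\max_v u_v(c^*) \ge \mathrm{OPT} - \frac{n}{\sqrt d}$, using that utilities are at most $1$ under $\ellone$ normalization and non-negativity. Hence the committee term is at least $\frac{1}{2\sqrt d}(\mathrm{OPT} - n/\sqrt d)$.

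Putting it together: if $\mathrm{OPT} \ge 2n/\sqrt d$ then the committee term alone gives $\ev[\sw] \ge \frac{\mathrm{OPT}}{4\sqrt d}$, so $\dist(\lslr,\prof) \le 4\sqrt d$. Otherwise $\mathrm{OPT} < 2n/\sqrt d$, and the $\uproj$ term gives $\ev[\sw] \ge \frac{n}{2d} \ge \frac{\mathrm{OPT}}{4\sqrt d}$, again yielding $\dist(\lslr,\prof) \le 4\sqrt d$. Taking the worst case over profiles gives $\dist(\lslr) = O(\sqrt d)$. The main obstacle I anticipate is making the ``all but $n/\sqrt d$ voters are as happy as with $c^*$'' step fully rigorous: the stable lottery bound is an expectation over $W$, so one must be careful to interchange the sum over voters with the expectation over $W$ correctly, and to handle voters who are in $S_{c^*}(W)$ by charging them $0$ rather than a negative quantity (which is where non-negativity of utilities is used). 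A secondary point to verify is that a stable lottery over committees of size exactly $k=\sqrt d$ exists and that $\sqrt d$ being non-integral is handled by taking $k = \lceil \sqrt d\rceil$, which only affects constants.
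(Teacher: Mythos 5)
Your proposal is correct and follows essentially the same route as the paper's proof: both use the stable-lottery property applied to $c^*$ to show $\ev_W[\sum_{c\in W}\sw(c)] \ge \sw(c^*) - n/\sqrt{d}$ (charging voters in $S_{c^*}(W)$ via $u_v(c^*)\le 1$ and non-negativity), and both absorb the $n/\sqrt{d}$ loss using the $\sw(\uproj)\ge n/d$ guarantee of \Cref{thm:middle-dist-ell1}. The only cosmetic difference is that you close with a case split on the size of $\mathrm{OPT}$ (constant $4$), whereas the paper substitutes $n/\sqrt{d} \le \sqrt{d}\,\ev_{c\sim\cent}[\sw(c)]$ directly into the inequality and combines the two halves linearly (constant $2$).
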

\begin{proof} %
    We follow the proof of \citet{ebadian2024optimized} closely, albeit in our notation. 
    The principal difference is our use of $\uproj$ instead of uniform selection over candidates.

    To that end, we combine \Cref{def:stable-lotteries} and \Cref{thm:middle-dist-ell1} in order to relate the expected welfare conferred from each part of $\lslr$.
    For any committee $W$ of size $k$ and any $a \in W$, let $S_a(W) \subseteq V$ denote the voters for which $a \succ_v W$, and $\overline S_a(W)$ its complement. 
    Then
    \begin{align}
        \sum_{v \in V} \sum_{c \in W} u_v(c) 
        &= \sum_{v\in S_a(W)} \sum_{c \in W} u_v(c)  + \sum_{v \in \overline S_a(W)} \sum_{c \in W} u_v(c) \notag \\
        &\geq  \sum_{v\in S_a(W)} \sum_{c \in W} u_v(c)  + \sum_{v \in \overline S_a(W)} u_v(c^*) \notag \\
        &\geq  \sum_{v\in S_a(W)} (u_v(c^*) - 1)  + \sum_{v \in \overline S_a(W)} u_v(c^*) \notag \\
        &= \sum_{v\in V} u_v(c^*)  - |\{S_a(W)\}|.
	\end{align}
	Taking the expectation over a stable lottery $\cW$ and applying \Cref{def:stable-lotteries}, we obtain
	\begin{align}
		\ev_{W \sim \cW} \left[  \sum_{c \in W} \sw(c) \right]
        &\geq \sw(c^*)  - \ev_{W \sim \cW} \left[ | \{S_a(W)\} | \right] \notag \\
        &\geq \sw(c^*)  - \frac{n}{k} \notag \\
        &\geq \sw(c^*)  - \frac{d}{k} \ev_{c \sim \cent }\left[ \sw(c) \right], \label{eq:stab-lot-vs-unif-sample}
	\end{align}
	where the last step follows from \Cref{thm:middle-dist-ell1}.

    We now let $x = \frac{1}{2} x_1 + \frac{1}{2} x_2$ denote the distribution over $C$ of $\lslr$, where $x_1$ is the stable lottery part and $x_2$ is the $\cent$ part.
    Then applying \eqref{eq:stab-lot-vs-unif-sample} and letting $k = \sqrt{d}$, 
    \begin{align}
        k \cdot \sw(x_1) &= \ev_{W \sim \cW} \left[  \sum_{c \in W} \sw(c) \right] \notag \\
        &\geq \sw(c^*)  - \frac{d}{k} \ev_{c \sim \cent }\left[ \sw(c) \right] \notag \\
        k \cdot \sw(x_1) + \frac{d}{k} \sw(x_2)  &\geq \sw(c^*)  \notag \\
        \ev_{c \sim \lslr } \left[ \sw(c) \right]  &\geq \frac{1}{2\sqrt{d}} \sw(c^*).  \notag \\
        \frac{\sw(c^*)}{\ev_{c \sim \lslr } \left[ \sw(c) \right] } &\leq 2\sqrt{d} \notag
    \end{align}
    This proves the claim.
\end{proof}

Computing a stable lottery $\cW$ requires only ordinal information; however in order to identify the distribution from which $\cent$ samples, the embedding of $C$ into $\mathbb{R}^d_{\geq 0}$ must be known by the rule.
This raises the question of what can be done with ordinal preferences when both voter \emph{and} candidate embeddings are unknown.

\subsection{Unknown Candidate Embeddings}
\label{sec:unknown-candidates}

Even when the locations of the candidate vectors $c \in \mathbb{R}^d_{\ge 0}$ are unknown, some established rules exhibit distortion that is bounded as a function of $d$.
Our primary example is random dictatorship, for which proof is deferred to Appendix~\ref{app:omitted-proofs}.

\begin{restatable}{theorem}{randomdictator}
    \label{thm:RD-distortion-ell1}
    $\dist(\rd) = \Omega(d)$, and also $\dist(\rd) = O(d^3)$.
\end{restatable}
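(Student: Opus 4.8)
Write $\mathrm{top}(v)\in C$ for the candidate voter $v$ ranks first and $n_c\defeq|\{v:\mathrm{top}(v)=c\}|$, so that $\rd$ (the plurality scoring rule) picks $c$ with probability $n_c/n$. I treat the two bounds separately; the upper bound is the substantive half. The starting observation for it is that $\rd$'s expected welfare is the welfare of a single fractional ``average favorite'': since $\sw(\cdot)$ is linear in the candidate vector, for $\bar c\defeq\sum_{c\in C}\tfrac{n_c}{n}\,c\in\ch(C)$ we get $\ev_{c\sim\rd(\prof)}[\sw(c)]=\sum_{c}\tfrac{n_c}{n}\sw(c)=\sw(\bar c)$. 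Since every coordinate of any candidate is at most $1$, $\sw(c^*)=\sum_v v^\top c^*\le n$, so it suffices to prove $\sw(\bar c)=\Omega(n/d^3)$.

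To lower-bound $\sw(\bar c)$ I would combine two ingredients. \emph{(i)} By \Cref{lem:voter-util-LB}, any voter with $\mathrm{top}(v)=c$ has $u_v(c)\ge 1/d$, so $\sw(c)\ge n_c/d$ for every first-choice candidate and hence $\sw(\bar c)=\tfrac1n\sum_c n_c\sw(c)\ge\tfrac1{nd}\sum_c n_c^2$; if some candidate collects $n_c\ge n/d$ first-place votes this already gives $\sw(\bar c)\ge n/d^3$. \emph{(ii)} In general, charge each first-choice candidate $c$ to a coordinate $i$ with $c^i=\|c\|_\infty\ge 1/d$; by pigeonhole some coordinate $i^\star$ receives first-place weight at least $n/d$ from candidates charged to it, forcing $\bar c^{\,i^\star}\ge\tfrac1n\cdot\tfrac nd\cdot\tfrac1d=\tfrac1{d^2}$. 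Then $\sw(\bar c)=\sum_v v^\top\bar c\ge\bar c^{\,i^\star}\sum_v v^{i^\star}$, and it remains to show $\sum_v v^{i^\star}=\Omega(n/d)$ (the $i^\star$-coordinate of the voter centroid is $\Omega(1/d)$), which one would try to extract from $V\subseteq\ch(C)$ and the charging bookkeeping. Combined with $\sw(c^*)\le n$ this gives $\dist(\rd)=O(d^3)$.

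\emph{Expected main obstacle.} The crux is the last coupling in (ii): a coordinate carrying much first-place weight (so that $\bar c$ is heavy there) need not be a coordinate on which the voter centroid is heavy, since a voter's favorite candidate may spend mass on coordinates the voter hardly values. Arranging that the pigeonhole yields one coordinate that is heavy simultaneously for $\bar c$ and for $\sum_v v$---or circumventing this---is where the real work lies. (The easy bound $\sw(\bar c)\ge\tfrac1{nd}\sum_c n_c^2\ge\tfrac{n}{d\,k}$, with $k$ the number of distinct first-choice candidates, only yields $\dist(\rd)=O(dk)=O(dm)$, which is not $d$-only, so something of this flavor is genuinely needed.)

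\emph{Lower bound $\dist(\rd)=\Omega(d)$.} Fix $d\ge3$ and a small constant $\eps\in(0,\tfrac12)$ (say $\eps=\tfrac1{10}$). Let $C=\{e_1,\dots,e_d\}$ and, for each $j\in\{2,\dots,d\}$, include one voter $v_j\defeq(\tfrac12-\eps)e_1+(\tfrac12+\eps)e_j$, so $n=d-1$ and $V\subseteq\Delta_d=\ch(C)$. Since $v_j^\top e_j=\tfrac12+\eps>\tfrac12-\eps=v_j^\top e_1>v_j^\top e_k$ for $k\notin\{1,j\}$, voter $v_j$ ranks $e_j$ first and $e_1$ second, so $e_1$ gets no first-place votes and $\Pr[\rd=e_1]=0$. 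For these utilities $\sw(e_1)=\sum_j v_j^1=(d-1)(\tfrac12-\eps)$ while $\sw(e_j)=\tfrac12+\eps$ for $j\ge2$, so for $d\ge3$, $\eps=\tfrac1{10}$ the optimum is $c^*=e_1$, whereas $\ev_{c\sim\rd}[\sw(c)]=\sum_{j=2}^d\tfrac1{d-1}\sw(e_j)=\tfrac12+\eps$. As these utilities are consistent with the reported rankings,
\[
\dist(\rd,\prof)\;\ge\;\frac{(d-1)(\tfrac12-\eps)}{\tfrac12+\eps}\;=\;\Omega(d),
\]
hence $\dist(\rd)=\Omega(d)$.
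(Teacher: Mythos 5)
Your lower bound is correct and is essentially the paper's own construction (voters split their mass between a private basis vector ranked first and a common candidate ranked second, so that $\rd$ never selects the common welfare maximizer); that half is fine.

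The upper bound, however, has a genuine gap --- precisely the one you flag yourself. Your step (i) only yields $O(dk)$ where $k$ is the number of distinct first-choice candidates, and your step (ii) founders because the coordinate $i^\star$ on which $\bar c$ is forced to be heavy need not be one on which the voters are heavy: a voter $v=e_1$ can have favorite $c_v=(0.4,0.6,0,\dots)$, whose max coordinate is $2$, where $v$ has no mass at all. So $\sum_v v^{i^\star}=\Omega(n/d)$ cannot be ``extracted from $V\subseteq\ch(C)$'' in general, and the argument does not close. The paper's fix is to choose the charging coordinate \emph{jointly} for each voter--favorite pair rather than from the candidate alone: since $u_v(c_v)=\sum_{\ell} v^\ell c_v^\ell \ge 1/d$ (Lemma~\ref{lem:voter-util-LB}), averaging over the $d$ coordinates yields some $\ell_v$ with $v^{\ell_v} c_v^{\ell_v}\ge 1/d^2$, which forces \emph{both} $v^{\ell_v}\ge 1/d^2$ and $c_v^{\ell_v}\ge 1/d^2$ simultaneously. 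One then partitions the voters (not the candidates) into groups $V_1,\dots,V_d$ by $\ell_v$ and lower-bounds $\ev_{c\sim\rd}[\sw(c)]=\frac1n\sum_v\sw(c_v)$ by counting only the utility that $c_v$ confers on voters $v'$ in the \emph{same} group: for each ordered pair $v,v'\in V_\ell$ with $v'^{\ell}\ge v^{\ell}$ (at least half of all pairs), $(v')^{\ell}(c_v)^{\ell}\ge (v')^{\ell}\cdot\frac{1}{d^2 v^{\ell}}\ge \frac{1}{d^2}$. Summing gives $\ev_{c\sim\rd}[\sw(c)]\ge \frac{1}{2nd^2}\sum_\ell |V_\ell|^2\ge \frac{n}{2d^3}$ by Cauchy--Schwarz, and with $\sw(c^*)\le n$ this is the $O(d^3)$ bound. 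This ``joint heaviness plus within-group pairing'' step is the missing idea; without it your plan does not go through.
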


One feature of this model is that the worst-case distortion of many randomized positional scoring rules (Appendix \ref{def:randomized-scoring-rules}) is quite similar. 
In particular, consider the RSRs where, for given $m$, the score vector is given by $\vec{s} = \frac{1}{S_m}(s_1, s_2, \ldots, s_m)$ for some fixed sequence $s_1, s_2, \ldots$, and where $S_i = \sum_{j \leq i} s_j$.
(Note that this contains $\rd$ and $\hr$ and the uniform distribution over candidates, but not Borda.)
Then there are two cases. 
If $S_m \rightarrow S$ converges, then the distortion of $f_s$ on \emph{any} instance is within $S/s_1$ of $\rd$, and we can make the behavior of $f_s$ approach that of $\rd$ by cloning each candidate sufficiently many times. 
And if $S_m$ diverges, then cloning bad candidates can lead to poor performance on instances which are easy for $\rd$. 
We illustrate this by relating the performance of $\hr$ to that of $\rd$, the proof of which also appears in Appendix \ref{app:omitted-proofs}.

\begin{restatable}{theorem}{harmonic}
    \label{thm:harmonic-vs-RD}
    The distortion of $\hr$ is unbounded as a function of $d$.
    However, for any profile $\prof$, 
    \[
        \dist(\hr, \prof) \leq \dist(\rd, \prof) \cdot ( \log m + 1).
    \]
\end{restatable}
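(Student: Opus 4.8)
The theorem has two parts: an unboundedness claim and a per-instance comparison bound. I would attack them separately but with the second informing the first.

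\emph{The per-instance bound.} The plan is to compare, candidate by candidate, the selection probabilities of $\hr$ and $\rd$ on a fixed profile $\prof$ with $m$ candidates. Recall $\rd$ uses score vector $(1,0,\dots,0)$ so $\Pr[\rd(\prof)=c] = \operatorname{score}_V(c,(1,0,\dots))/n$, i.e.\ proportional to plurality count $p_c$, while $\hr$ uses $s^j = \tfrac1j + \tfrac{H_m}{m}$, so $\Pr[\hr(\prof)=c]$ is proportional to $\sum_v \tfrac{1}{\operatorname{rank}_v(c)} + H_m$. The key observation is that for every voter $v$ and candidate $c$, the harmonic score $s^{\operatorname{rank}_v(c)} \le s^1 = 1 + \tfrac{H_m}{m} \le 1 + 1 = 2$... but more usefully, I want to bound the \emph{ratio} of $\hr$'s probability on the welfare-relevant candidate to $\rd$'s. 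Actually the cleanest route: for any candidate $c$, its total harmonic score is at most $(1 + \tfrac{H_m}{m}) \cdot p_c + H_m \cdot n \cdot (\text{something})$. Let me reconsider — I'd write $\operatorname{score}_V^{\hr}(c) = \sum_{v: c\succ_v \text{all}} (1 + \tfrac{H_m}{m}) + \sum_{v: \operatorname{rank}_v(c)\ge 2} (\tfrac{1}{\operatorname{rank}_v(c)} + \tfrac{H_m}{m})$. The first sum is $(1+\tfrac{H_m}{m})p_c$, and I want to show the whole thing is at most $(\log m + 1)$ times the first-place contribution summed appropriately. The honest version: $\|\vec s\|_1 = H_m + H_m = 2H_m$ roughly, so the normalization differs by a $\Theta(H_m) = \Theta(\log m)$ factor between the two rules, and since $\hr$'s score per voter is always at least the $\rd$ score per voter (every voter contributes $\ge 1/m + H_m/m > 0$ to each candidate, versus $\rd$ where non-top candidates get $0$), I can show $\Pr[\hr = c] \ge \Pr[\rd = c]/(\log m + 1)$ for the candidate maximizing welfare, hence $\ev_{c\sim\hr}[\sw(c)] \ge \ev_{c\sim\rd}[\sw(c)]/(\log m+1)$, which rearranges to the claimed distortion inequality. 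The main obstacle is getting the constants and the $\|\vec s\|_1$ computation exactly right so the factor is cleanly $(\log m + 1)$ rather than $2\log m + O(1)$; I'd check whether the paper's definition of $\hr$ (with the $+H_m/m$ correction term) makes $\|\vec s\|_1 = 2H_m$ and whether a tighter argument using only the first-place mass recovers exactly $\log m + 1$.

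\emph{The unboundedness claim.} Here I would build an explicit family of instances, parametrized by $d$, on which $\hr$ does badly while exploiting that cloning is "free" in the linear model (the space of profile-consistent utilities over the original candidates is unchanged by clones, by the discussion in the introduction). The construction: take a good candidate $c^*$ (say embedding near $\mu$, with $\sw(c^*) = \Theta(n/d)$ achievable) that a single "dictator-like" voter or a small set of voters rank first, and a pool of $m - 1$ bad clones all of which every other voter ranks above $c^*$ but which confer near-zero welfare. As $m \to \infty$ (with $d$ fixed), $\hr$'s probability of selecting $c^*$ behaves like $\frac{\text{(bounded)}}{\Theta(m)\cdot\text{const} / \log m}$... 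I need the harmonic mass on the junk clones, which is $\sum_{v}\sum_{\text{junk }c} (\tfrac{1}{\operatorname{rank}} + \tfrac{H_m}{m})$, to dominate. Because each junk clone gets a $+H_m/m$ bonus from every voter, the total junk mass is $\Theta(n \cdot m \cdot H_m/m) = \Theta(n H_m)$ plus harmonic tails, while $c^*$'s mass stays $O(n)$; so $\Pr[\hr = c^*] \to 0$ as $m\to\infty$, driving $\ev[\sw(\hr)] \to 0$ (the junk contributes negligible welfare) while $\sw(c^*) = \Theta(n/d) > 0$ stays fixed. Thus distortion $\to\infty$ even with $d$ fixed.

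\emph{Main obstacle.} The genuinely delicate part is the unboundedness construction: I must exhibit candidate \emph{embeddings} in $\mathbb{R}^d_{\ge 0}$ (not just abstract rankings) that are $\ell^1$-normalized, satisfy $V \subset \cone(C)$, realize the desired rankings, and make the junk clones simultaneously (i) ranked above $c^*$ by the bulk of voters and (ii) near-worthless in welfare — these pull in opposite directions since high rank for a voter means high inner product for \emph{that} voter. The fix is to use many voters of many types so that each junk clone is top-ranked by only its "own" voter type (contributing real welfare only to that small type) while being middling for others; the aggregate welfare of any junk clone is then $O(n/m) \cdot (\text{small})$, vanishing relative to $c^*$. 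I would verify this is consistent with the normalization and cone constraints, likely by placing $c^*$ at $\mu$ and the junk clones as perturbations of standard basis vectors, with voters as matching near-basis vectors plus one voter equal to $\mu$. Getting all the constraints to hold simultaneously while the $\hr$-probability on $c^*$ provably vanishes is where the real work lies.
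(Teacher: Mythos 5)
Your second part---the per-instance comparison---is essentially the paper's argument: establish a pointwise probability domination $\Pr[\hr(\prof)=a] \geq \Pr[\rd(\prof)=a]/H_m$ for every candidate $a$ (the first-place harmonic score of $a$ alone, after renormalizing by $\|\vec s\|_1$, already accounts for a $1/\Theta(H_m)$ fraction of $\rd$'s probability on $a$), sum over candidates to get $\ev_{c\sim\hr}[\sw(c)] \geq \ev_{c\sim\rd}[\sw(c)]/H_m$ for every consistent utility profile, and conclude. One slip in your write-up: you state the probability bound only "for the candidate maximizing welfare," but the expectation inequality requires it for every candidate in the support of $\rd$; your own first-place-mass observation delivers this for all candidates, so the fix is cosmetic. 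Your worry about the constant is legitimate---with $\|\vec s\|_1 = 2H_m$ the clean pointwise factor is $2H_m/s^1$ rather than $H_m$---and the paper is in fact equally loose here, so this is a shared constant-factor blemish rather than a gap.

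The unboundedness part is where you have a real problem. You insist that the $m-1$ junk candidates be ranked \emph{above} $c^*$ by the bulk of voters while conferring near-zero welfare, and under $\ell^1$-normalized linear utilities these two requirements are genuinely incompatible, not merely in tension: if voter $v$ ranks $c_j$ above $c^* \approx \mu$ then $u_v(c_j) \geq u_v(\mu) = 1/d$, so any candidate ranked above $c^*$ by $\Omega(n)$ voters has welfare $\Omega(n/d) = \Omega(\sw(c^*))$ and is not junk. Your proposed fix (each clone top-ranked only by its own tiny voter type) abandons the premise, and then you need some other reason for $\Pr[\hr = c^*]$ to vanish. That reason exists and renders the entire construction unnecessary: $\hr$ caps the probability of \emph{any single} candidate at $s^1/\|\vec s\|_1 = (1+H_m/m)/(2H_m) = O(1/\log m)$ regardless of the votes. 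The paper therefore takes the trivial instance $d=2$, every voter at $(1,0)$, one candidate at $(1,0)$, and all $m-1$ others at $(0,1)$: the junk is ranked \emph{last} by everyone, has welfare exactly $0$, satisfies normalization and the cone condition for free, and $\hr$ still selects the unique valuable candidate with probability only $\Theta(1/H_m)$, giving distortion $\Theta(\log m) \to \infty$ with $d$ fixed. The idea you are missing is that $\hr$'s failure comes from the harmonic tail and the additive $H_m/m$ term leaking probability to bottom-ranked candidates---junk does not need to be ranked highly at all.
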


Thus, the distortion of $\hr$ is never much better than that of $\rd$, and if---as we expect---the worst-case profiles for $\rd$ have $\poly(d)$ candidates, then its worst-case distortion cannot be much better than that of $\rd$, even for small $m$.

Can we improve upon this bound when candidate embeddings are unknown?
The role that $\cent$ plays in the design and analysis of $\lslr$---and the role uniform selection plays in the stable lotteries for \citet{ebadian2024optimized}---is in some sense both an absolute lower bound on welfare when the maximum candidate welfare is not large, and sample access to it.
We might then let $\rd$ take the role of $\cent$ in $\lslr$ to attain distortion $O(d^{3/2})$ for unknown embeddings.

However even in the absence of a better rule, it turns out stable lotteries can still be used by furnishing a direct lower bound on the maximum welfare.
Consider the following:

\begin{definition}[Pure Stable Lottery Rule ($\pslr$)]
	\label{def:pure-slr}
	Given a stable lottery $\cW$ over committees of size $2d$, the \emph{pure stable lottery rule} $\pslr$ chooses $c \in C$ w.p. $\frac{1}{2 d}\Pr_{W \sim \cW}[c \in W]$.
\end{definition}

Using larger committees and that $\max_{c \in C} \sw(c) \geq \frac{n}{d}$, we avoid the need for a second sampling component.

\begin{restatable}{theorem}{purestablelottery}
    \label{thm:pure-stable-lottery-ell1}
    $\dist(\pslr)=O(d).$
\end{restatable}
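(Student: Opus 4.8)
The plan is to mimic the argument for $\lslr$ (\Cref{thm:stable-lottery-ell1}), but to replace the sampled lower bound on welfare from $\cent$ with the \emph{deterministic} lower bound $\max_{c \in C} \sw(c) \geq n/d$, which in turn comes from \Cref{lem:voter-util-LB}: every voter has some candidate giving utility at least $1/d$, and averaging over voters there must be a single candidate whose welfare is at least $n/d$ --- actually we only need $\sw(c^*) \geq n/d$, which is immediate since $c^*$ maximizes welfare. Because we no longer need sample access to a welfare floor, a single sampling component (the stable lottery) suffices, at the cost of using larger committees of size $k = 2d$ rather than $\sqrt d$.

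First I would fix a stable lottery $\cW$ over committees of size $k = 2d$, which exists by \citet[Lemma 4]{cheng2020group}. For any committee $W$ and any $a \in W$, decompose $V = S_a(W) \sqcup \overline{S_a(W)}$ as in the proof of \Cref{thm:stable-lottery-ell1}: voters in $\overline{S_a(W)}$ have some $c \in W$ with $u_v(c) \geq u_v(c^*)$, while voters in $S_a(W)$ contribute $u_v(c^*) - 1$ as a crude lower bound (using that utilities lie in $[0,1]$, so $\sum_{c\in W} u_v(c) \geq u_v(a) \geq u_v(c^*) - 1$ --- in fact $\geq 0 \geq u_v(c^*)-1$ is all we need). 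Summing,
\[
    \sum_{v \in V} \sum_{c \in W} u_v(c) \;\geq\; \sum_{v \in V} u_v(c^*) \;-\; \abs{S_a(W)} \;=\; \sw(c^*) - \abs{S_a(W)}.
\]
Taking expectation over $W \sim \cW$ and using the stable-lottery guarantee $\ev_{W}[\abs{S_a(W)}] \leq n/k = n/(2d)$ (this holds for every $a$, and $a \in W$ can be chosen per committee), I get
\[
    \ev_{W \sim \cW}\Bigl[ \sum_{c \in W} \sw(c) \Bigr] \;\geq\; \sw(c^*) - \frac{n}{2d}.
\]

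Now I invoke $\sw(c^*) \geq n/d$ (from \Cref{lem:voter-util-LB}), so $n/(2d) \leq \tfrac12 \sw(c^*)$, giving $\ev_W[\sum_{c\in W}\sw(c)] \geq \tfrac12 \sw(c^*)$. The left side equals $k \cdot \sw(x_{\mathrm{SL}})$ where $x_{\mathrm{SL}}$ is the distribution that picks $W \sim \cW$ then a uniform member of $W$; but $\pslr$ by \Cref{def:pure-slr} picks $c$ with probability $\tfrac{1}{2d}\Pr_W[c \in W] = \tfrac{1}{k}\Pr_W[c\in W]$, which is exactly $x_{\mathrm{SL}}$ (each committee contributes mass $1/k$ spread over its $k$ members). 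Hence $\ev_{c \sim \pslr}[\sw(c)] = \tfrac1k \ev_W[\sum_{c\in W}\sw(c)] \geq \tfrac{1}{2k}\sw(c^*) = \tfrac{1}{4d}\sw(c^*)$, so $\dist(\pslr) \leq 4d = O(d)$. The main thing to get right is the bookkeeping identifying the $\pslr$ output distribution with the two-stage committee-then-member sampling and the constant in the committee size: $k=2d$ is exactly what makes $n/k$ at most half of the welfare floor $n/d$, which is the one place the linear-utility structure (via \Cref{lem:voter-util-LB}) enters, and is the analogue of the $\cent$ step in \Cref{thm:stable-lottery-ell1}. No genuine obstacle beyond this; the argument is a streamlined specialization of the $\lslr$ proof.
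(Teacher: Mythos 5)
Your argument is structurally identical to the paper's proof: a stable lottery over committees of size $k=2d$, the same decomposition of $V$ into $S_a(W)$ and its complement, the same expectation step giving $\ev_W[\sum_{c\in W}\sw(c)] \ge \sw(c^*) - n/(2d)$, and the same use of the welfare floor $\sw(c^*)\ge n/d$ to absorb the $n/(2d)$ loss and conclude $\ev_{c\sim\pslr}[\sw(c)]\ge \sw(c^*)/(4d)$. The one step that does not hold as written is your justification of the floor itself: \Cref{lem:voter-util-LB} gives $\sum_{v}\max_{c}u_v(c)\ge n/d$, but ``averaging over voters'' from this does not yield a \emph{single} candidate with welfare $n/d$, since $\max_c\sum_v u_v(c)\le\sum_v\max_c u_v(c)$ is the inequality in the wrong direction, and ``$c^*$ maximizes welfare'' does not rescue it. The paper instead derives $\sw(c^*)\ge n/d$ from \Cref{thm:middle-dist-ell1}: the uniform projection distribution has expected welfare at least $n/d$, so some candidate in its support does too. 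An equally quick elementary repair of your route is to apply \Cref{lem:voter-util-LB} to the \emph{average} voter $\bar v=\frac1n\sum_v v$, which lies in $\ch(C)\cap\Delta_d$, giving $\max_c \bar v^\top c\ge 1/d$ and hence $\sw(c^*)\ge n/d$. With that one step patched, your proof is correct and coincides with the paper's.
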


\begin{proof}[Proof (sketch)]
	The difference between this and the proof of Theorem \ref{thm:stable-lottery-ell1} is that the lottery is over committees of size $k = 2d$, and that we do not use $\cent$ as in \eqref{eq:stab-lot-vs-unif-sample}.
	We instead look to \Cref{thm:middle-dist-ell1}, which demonstrates there always exists a distribution $\{p_c\}$ over candidates such that $\ev_{c \sim \cent}\left[\sw(c)\right] \geq \frac{n}{d}$. 
	Then by averaging, there always exists a $c \in C$ such that $\sw(c) \geq \frac{n}{d}$, and in particular $\sw(c^*) \geq \frac{n}{d}$. 
	Picking things up just before \eqref{eq:stab-lot-vs-unif-sample} with $k = 2d$,
	\begin{align*}
		\ev_{W \sim \cW} \left[  \sum_{c \in W} \sw(c) \right]
		&\geq \sw(c^*)  - \ev_{W \sim \cW} \left[ | \{S_a(W)\} | \right] \notag \\
		&\geq \sw(c^*)  - \frac{n}{2d} \notag 
        \geq \frac{1}{2} \cdot \sw(c^*). \label{eq:stab-lot-alone}
	\end{align*}
	The rest of the proof proceeds as before, though without the need for $x_2$.
	It appears in full in Appendix \ref{app:omitted-proofs}.
\end{proof}

\section{Optimizing Distortion}

In the preceding sections we analyzed the asymptotic behavior of voting rules, but the lower bounds are often driven by pathological cases.
In practice, we are more interested in this welfare approximation guarantee for a given profile.
That is:
\textit{Given a preference profile, can we compute the best possible distortion—and design a rule that achieves it?}

The computational tractability of the distortion-instance-optimal rule in the unit-sum setting was established by \citet{boutilier15optimal} and clarified by \citet{ebadian2024computational}. 
For linear utilities we also answer both questions affirmatively, showing that instance-optimal rules—both deterministic and randomized—can be computed efficiently. %

\begin{definition}[Feasible Region for \(\bar v\)]
\label{def:feas-avg-voter}
    Let 
    \(
    \mathcal F_j \;=\;\bigl\{\,v\in \Delta_d : \langle c_a - c_b, v\rangle \ge 0\;\text{whenever }c_a\succ_j c_b\bigr\}
    \)
    be the set of voter vectors consistent with $\sigma_{v_j}$.  Then the feasible region for the average voter 
    \(\bar v = \tfrac1n\sum_{j=1}^n v_j\)
    is
    \(
    \mathcal F
    \;=\;
    \frac1n\sum_{j=1}^n \mathcal F_j.\)
\end{definition}

\begin{theorem}[Deterministic Instance-Optimal Rule]
\label{thm:approx-instance-opt}
    Given a preference profile \( \prof \) and any \( \varepsilon > 0 \), one can compute a deterministic voting rule \( f \) such that
    \(
    \dist(f, \prof) \le \min_{c \in C} \dist(c, \prof) + \varepsilon
    \)
    in time \( \mathrm{poly}(n, m, \log \frac{1}{\eps}).\)
\end{theorem}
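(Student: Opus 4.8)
The plan is to reduce the task to a polynomial number of linear programs. On a fixed profile a deterministic rule is nothing more than a choice of one candidate, so the instance‑optimal deterministic rule returns $c^\dagger \in \arg\min_{c \in C}\dist(c,\prof)$; it therefore suffices to compute each value $\dist(c,\prof)$ up to additive error $\eps/2$ and output the candidate minimizing the estimate, since if $\widehat d_c$ estimates $\dist(c,\prof)$ within $\eps/2$ then $\dist(\arg\min_c \widehat d_c,\prof)\le \min_c\dist(c,\prof)+\eps$. Candidates with $\dist(c,\prof)=\infty$ are detected exactly (see below) and discarded; if all of them are discarded, the bound is vacuous and any candidate works.

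To estimate a single $\dist(c,\prof)$, I would rewrite it as a linear‑fractional program. Since $\sw(\uprof,c'') = \sum_v v^\top c'' = n\,\bar v^\top c''$ depends on $\uprof$ only through $\bar v = \tfrac1n\sum_v v$, and $\uprof\induces\prof$ corresponds by \Cref{def:feas-avg-voter} exactly to $\bar v \in \mathcal F$,
\[
  \dist(c,\prof) = \max_{\bar v\in\mathcal F}\frac{\max_{c'\in C}\bar v^\top c'}{\bar v^\top c} = \max_{c'\in C}\Bigl(\max_{\bar v\in\mathcal F}\frac{\bar v^\top c'}{\bar v^\top c}\Bigr).
\]
Although $\mathcal F=\tfrac1n\sum_j\mathcal F_j$ may have exponentially many facets, it admits a polynomial extended formulation: $\bar v\in\mathcal F$ iff there exist $v_1,\dots,v_n$ with $\bar v=\tfrac1n\sum_j v_j$ and $v_j\in\mathcal F_j$, where each $\mathcal F_j$ is cut out by $O(m^2)$ pairwise ranking inequalities and the $O(d)$ inequalities defining $\Delta_d$. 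So for each ordered pair $(c,c')$ it remains to maximize $\bar v^\top c'/\bar v^\top c$ over this lifted polytope in the variables $(v_1,\dots,v_n)$.

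For this I would apply the Charnes--Cooper transformation: writing $\mathcal F_j=\{v:A_jv\le b_j\}$, introduce a scalar $s\ge 0$ and variables $w_j=s\,v_j$, so that $v_j\in\mathcal F_j$ becomes $A_jw_j\le s\,b_j$; add the normalization $\tfrac1n\sum_j w_j^\top c = 1$ and maximize $\tfrac1n\sum_j w_j^\top c'$. This is an LP with $nd+1$ variables and $O(nm^2)$ constraints whose optimum equals $\max_{\bar v\in\mathcal F}\bar v^\top c'/\bar v^\top c$ whenever $\bar v^\top c$ is bounded away from $0$ on $\mathcal F$; otherwise the LP is unbounded or infeasible, equivalently the auxiliary LP $\min_{\bar v\in\mathcal F}\bar v^\top c$ has value $0$, and then $\dist(c,\prof)=\infty$. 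Solving the $O(m^2)$ pairwise LPs and the $m$ auxiliary LPs to accuracy $\eps/2$ takes $\poly(n,m,\log\tfrac1\eps)$ time by standard polynomial‑time LP methods, which establishes the theorem.

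I expect the crux to be the treatment of the fractional objective together with its degeneracies: one must verify that the only‑implicitly‑given, exponentially‑faceted polytope $\mathcal F$ poses no obstruction (handled by the lift), that the linear‑fractional optimum is reproduced exactly by the Charnes--Cooper LP, and---the delicate point---that instances in which $\bar v^\top c$ can be driven to $0$ on $\mathcal F$ are flagged as $\dist(c,\prof)=\infty$ rather than silently yielding a spurious finite value; the auxiliary LP is precisely what makes this robust. The remaining accuracy bookkeeping (propagating the additive $\eps$ through Charnes--Cooper and through the outer minimum over $c\in C$) is then routine.
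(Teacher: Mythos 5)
Your proposal is correct and follows essentially the same route as the paper: reduce the deterministic rule to a choice of candidate, observe that welfare depends on the utility profile only through the average voter $\bar v$ ranging over the polytope $\mathcal F$ of \Cref{def:feas-avg-voter}, and bound each pairwise ratio $\bar v^\top c'/\bar v^\top c$ by polynomially many, polynomially sized LPs. The only substantive difference is that you solve each linear-fractional subproblem exactly via the Charnes--Cooper transformation (with an auxiliary LP to flag the degenerate $\bar v^\top c \to 0$ case), whereas the paper binary-searches over the reciprocal ratio $\beta$ with a feasibility LP per step; both yield the claimed $\poly(n,m,\log\frac{1}{\eps})$ runtime.
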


\begin{proof}
Let \( \mathcal{F} \) be the feasible region for the average voter vector \( \bar{v} \) induced by \( \prof \) (\Cref{def:feas-avg-voter}). For each pair \( c_1, c_2 \in C \), we aim to compute the best possible upper bound on
\(
\frac{\sw(c_1)}{\sw(c_2)} = \frac{\bar{v}^\top c_1}{\bar{v}^\top c_2}
\)
over all \( \bar{v} \in \mathcal{F} \). Since we cannot compute this ratio exactly without knowing \( \bar{v} \), we upper-bound it by the largest value \( \frac{1}{\beta} \) such that
\(
\bar{v}^\top c_1 \le \frac{1}{\beta} \bar{v}^\top c_2 \ \forall \ \bar{v} \in \mathcal{F}.
\)

Equivalently, for fixed \( c_1, c_2 \), we find the largest \( \beta_{c_1,c_2} \in [0,1] \) such that the following LP has a non-negative optimal value:
\(
\min_{\bar{v} \in \mathcal{F}} \langle \beta c_1 - c_2, \bar{v} \rangle.
\)
This is doable via binary search over \( \beta\) to precision \( \varepsilon \), requiring \( O(\log \frac{1}{\eps}) \) iterations per pair.

Once the distortion bounds are computed for all \( O(m^2) \) candidate pairs, we select the candidate \( \hat{c} \in C \) minimizing the worst-case distortion
\(
    \hat{c} \defeq \arg\min_{c \in C} \max_{c' \in C}\beta_{c',c}.
\)
Returning this candidate defines the deterministic rule \( f \), and by construction we have
\(
\dist(f, \prof) \le \min_{c \in C} \dist(c, \prof) + \varepsilon.
\)
Since each linear program is solvable in polynomial time in \( n, m, d \), and the number of binary search iterations is \( O(\log \frac{1}{\eps}) \), the total runtime is \( \mathrm{poly}(n, m, d, \log \frac{1}{\eps}) \).
\end{proof}

Note that the procedure described above can be used to compute the distortion of any rule on a given preference profile. We also apply it in our experiments.

\begin{theorem}[Randomized Instance-Optimal Rule]
\label{thm:rand-instance-opt}
Given a preference profile \( \prof \) and any \( \varepsilon > 0 \), a randomized voting rule \( f \) with
\(
\dist(f, \prof) \le \min_{f' \in \Delta(C)} \dist(f', \prof) + \varepsilon
\)
can be computed in time \( \mathrm{poly}(n, m,\frac{1}{\eps}) \), where the minimum is taken over all distributions over candidates.
\end{theorem}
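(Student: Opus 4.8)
The plan is to reduce the computation of a randomized instance-optimal rule to a single linear program over distributions $p \in \Delta(C)$, mirroring the deterministic case but with the extra mixing variable. Fix the profile $\prof$ and let $\mathcal{F}$ be the feasible region for the average voter $\bar v$ from \Cref{def:feas-avg-voter}; it is a polytope described by $O(m^2)$ pairwise-comparison inequalities together with the simplex constraints, and crucially the welfare of any distribution $p$ is linear in $\bar v$ via $\sw(\bar v, p) = n\, \bar v^\top (\sum_c p_c c)$. For a fixed target distortion value $1/\beta$, I want to check whether there exists $p \in \Delta(C)$ such that for every $\bar v \in \mathcal{F}$ and every candidate $c' \in C$,
\[
    \beta \, \bar v^\top c' \;\le\; \bar v^\top \Bigl(\sum_{c \in C} p_c c\Bigr).
\]
If such a $p$ exists, distortion $\le 1/\beta$ is achievable; the instance-optimal value is the supremum of feasible $\beta$, found by binary search to precision $\eps$ in $O(\log\frac1\eps)$ iterations.

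The core step is turning the feasibility question for a fixed $\beta$ into a polynomial-size LP in the variables $p$. The inner universal quantifier over $\bar v \in \mathcal{F}$ is the obstacle: for each candidate $c'$, the constraint ``$\bar v^\top(\beta c' - \sum_c p_c c) \le 0$ for all $\bar v \in \mathcal{F}$'' is equivalent to $\max_{\bar v \in \mathcal{F}} \bar v^\top(\beta c' - \sum_c p_c c) \le 0$. Since $\mathcal{F}$ is a polytope given by explicit linear inequalities, I would dualize this inner maximization by LP duality: the maximum equals the minimum of a dual LP whose variables are the multipliers on $\mathcal{F}$'s constraints, and whose objective is linear in $p$ and in those multipliers, with the right-hand-side data (the $\beta c' - \sum_c p_c c$ coefficients) appearing linearly. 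Asserting that this dual minimum is $\le 0$ means: there exist dual multipliers making the dual objective $\le 0$. Introducing a fresh block of dual variables $y^{(c')}$ for each $c' \in C$, the whole feasibility problem becomes a single LP: find $p \in \Delta(C)$ and nonnegative $\{y^{(c')}\}_{c' \in C}$ satisfying the dual-feasibility constraints for each $c'$ together with the $\le 0$ objective inequalities. This LP has $O(m^2)$ variables and $O(m^3)$ constraints (or so — $O(m)$ blocks each of size $O(m^2)$), hence is solvable in $\poly(n,m)$ time; note $d$ does not appear because $d \le m$ can be assumed (the ambient dimension is implicitly bounded by the number of candidates via $V \subset \cone(C)$, or one simply folds $d$ into the $\poly$ and observes it is at most $\poly(m)$ after a change of basis).

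Having solved the feasibility LP at the optimal $\beta^*$ (up to $\eps$), the returned $p$ defines the randomized rule $f$, and by construction $\dist(f,\prof) \le 1/\beta^* \le \min_{f' \in \Delta(C)} \dist(f',\prof) + \eps$, since any distribution achieving distortion $1/\beta$ witnesses feasibility of the LP at $\beta$, so $\beta^* $ is within $\eps$ of the true optimum. The total running time is $\poly(n,m)$ per binary-search step times $O(\log\frac1\eps)$ steps, i.e.\ $\poly(n,m,\log\frac1\eps)$, which is within the claimed $\poly(n,m,\frac1\eps)$ bound. The main obstacle I anticipate is purely bookkeeping: writing down the dual of the inner maximization cleanly, making sure the dependence on $p$ stays linear throughout (it does, because $p$ enters only the objective/right-hand-side of the inner LP, so after dualization it enters only linearly), and handling the normalization $\sum_c p_c = 1$, $p \ge 0$ alongside the dual blocks. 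A secondary subtlety worth a remark is that one could alternatively avoid binary search by making $\beta$ (equivalently $1/\beta$) itself a variable and minimizing it; this is no longer a pure LP because of the bilinear term $\beta \bar v^\top c'$, but the binary-search-plus-duality route sidesteps the bilinearity entirely and keeps everything within linear programming.
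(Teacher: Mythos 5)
Your proof is correct, but it takes a genuinely different route from the paper's. The paper keeps $\beta$ as a decision variable alongside $p$, treats the resulting program as a convex program with infinitely many constraints indexed by $(\bar v, c')$, and runs the ellipsoid method with a separation oracle that, for a fixed candidate point $(\hat\beta,\hat p)$, solves $\min_{\bar v \in \mathcal F}\langle \sum_i \hat p_i c_i - \hat\beta c,\bar v\rangle$ for each $c$. You instead fix $\beta$ by binary search and, for each fixed $\beta$, collapse the universally quantified robust constraint $\max_{\bar v\in\mathcal F}\bar v^\top(\beta c' - \sum_c p_c c)\le 0$ into finitely many linear constraints by dualizing the inner maximization — the classic robust-LP reformulation. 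This is sound: $\mathcal F$ is a (Minkowski average of) nonempty bounded polytopes, so strong duality applies blockwise, $p$ enters only the objective of the inner LP and hence only linearly in the dual constraints, and feasibility of $\beta$ is monotone so binary search is valid. What your route buys is an explicit polynomial-size LP per step (no ellipsoid machinery, which the paper itself concedes is impractical and replaces with column generation in the appendix) and a $\log\frac1\eps$ rather than $\poly(\frac1\eps)$ dependence; what the paper's route buys is avoiding the binary search and the dual bookkeeping entirely. Two minor points to tighten: your variable/constraint counts omit the factor of $n$ coming from the Minkowski-sum description of $\mathcal F$ (each $\bar v$ is witnessed by $n$ per-voter vectors, so the dual blocks have size $O(nm^2)$, giving $O(nm^3)$ variables overall — still polynomial); and converting additive precision $\delta$ in $\beta$ to additive precision $\eps$ in the distortion $1/\beta$ requires a lower bound on $\beta^*$, which is available (e.g.\ $\beta^*\ge 1/d$ via Theorem~\ref{thm:middle-dist-ell1}, or $\beta^*\ge 1/m$ from the uniform lottery), so $\log\frac1\delta$ remains $\poly(\log\frac1\eps,\log m,\log d)$.
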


\begin{proof}
Let \( p \in \Delta_C \) be a distribution over candidates, and let \( \hat{c} \defeq \sum_{i=1}^m p_i c_i \) be the expected candidate vector under \( p \). We aim to find a distribution \( p \) that minimizes the distortion:
\[
\dist(p, \prof) = \sup_{\bar{v} \in \mathcal{F}} \frac{\max_{c \in C} \langle c, \bar{v} \rangle}{\langle \hat{c}, \bar{v} \rangle}.
\]
Since this expression may be unbounded, we instead maximize its reciprocal \( \beta \in [0,1] \), subject to the constraint:
\[
\langle \hat{c}, \bar{v} \rangle \ge \beta \cdot \max_{c \in C} \langle c, \bar{v} \rangle \quad \forall\, \bar{v} \in \mathcal{F}.
\]

This yields the following optimization problem:
\[
\begin{aligned}
\max_{\underset{\beta \in [0,1]}{p \in \Delta(C)}} \left\{ \beta : \sum_{i=1}^m p_i \langle c_i, \bar{v} \rangle \ge \beta \cdot \max_{c \in C} \langle c, \bar{v} \rangle \quad\forall\, \bar{v} \in \mathcal{F} \right\}.
\end{aligned}
\]

Although \( \mathcal{F} \) contains infinitely many constraints, we can apply the ellipsoid method using a separation oracle. For a candidate solution \( (\hat{\beta}, \hat{p}) \), feasibility reduces to verifying:
\[
\sum_{i=1}^m \hat{p}_i \langle c_i, \bar{v} \rangle \ge \hat{\beta} \cdot \langle c, \bar{v} \rangle \qquad \forall\, c \in C,\ \bar{v} \in \mathcal{F}.
\]

For each \( c \in C \), this can be checked by solving an LP:
\[
\begin{aligned}
\min_{\bar{v} \in \mathcal{F}} \quad & \left\langle \sum_{i=1}^m \hat{p}_i c_i - \hat{\beta} \cdot c,\; \bar{v} \right\rangle.
\end{aligned}
\]

If the minimum is nonnegative for all \( c \in C \), the solution is feasible; otherwise, the auxiliary LP provides a separating hyperplane. Since each \( \bar{v} \in \mathcal{F} \) is defined via \( O(m^2 n) \) linear constraints, each LP is of polynomial size. Running the ellipsoid method with this oracle to precision \( \varepsilon \) yields a solution in time \( \mathrm{poly}(n, m, d, \frac{1}{\eps}) \).
\end{proof}

\noindent \textbf{Practical implementation.}
While the ellipsoid method offers a polynomial-time theoretical guarantee, it is often impractical in real-world settings. 
We can instead use \textit{column generation}~\citep{dantzig1960decomposition}, which is often more efficient in practice (see Appendix~\ref{app:pseudocode}). 
Notably, the method relies only on pairwise comparisons to construct constraints for each voter, making it also applicable for partial rankings.

\section{Experiments}
\label{sec:experiments}

We evaluate our instance-optimal rules alongside $\uproj$, $\mcp$, $\lslr$, and several other common rules, on two real-world datasets. Specifically, we measure both the \textit{instance distortion} (relative to the underlying utility profile $\uprof$) and the \textit{worst-case distortion} (over all utility profiles consistent with the observed preferences $\prof$). Due to space constraints, we report results for the latter, and only as a function of the dimension 
$d$ in the main text. 
Results varying 
$n,m$, running times, and instance distortion, are deferred to Appendix \ref{app:additional_experiment}.

\smallskip
\noindent
\textbf{Datasets.} We consider the MovieLens 100K~\citep{harper2015movielens} and abortion opinion survey~\citep{fish2024generative} datasets.
MovieLens contains 100K movie ratings from 1000 users (voters) over 1700 movies.
We translate it into our setting via approximate matrix decomposition for each dimension $d$, adding the nonnegativity and normalization constraints.
We then randomly subsample 20 movie preference votes for each dimension $d$ with $n=100$ and $m=25$.
The abortion opinions dataset includes ratings from 100 individuals on 5 different opinion statements. For this setting, we embed the candidates using Matryoshka embeddings~\cite{kusupati2022matryoshka}, followed by dimensionality reduction via principal component analysis (PCA). The embedding choice for either dataset is not essential to our method and is intended only as a representative example.

\begin{figure}[h]
    \centering
    \includegraphics[width=1\linewidth]{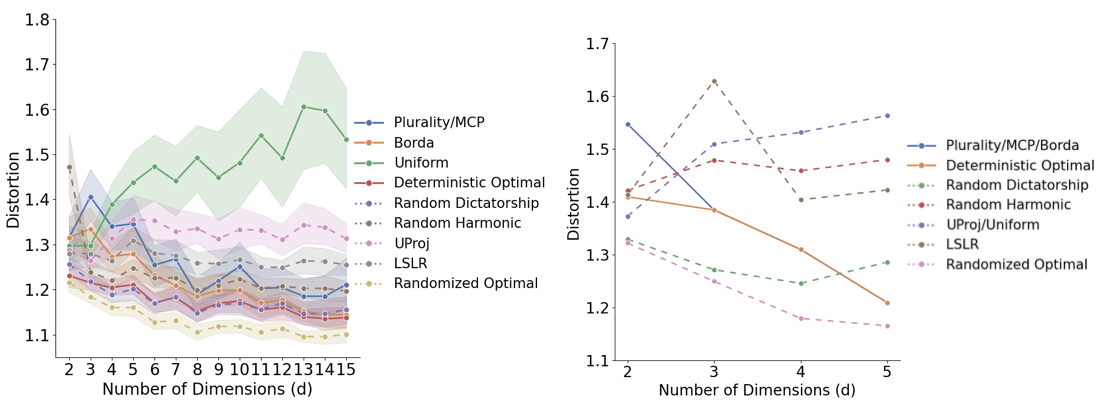}
    \caption{Instance distortion $\dist(f,\prof)$. Left: MovieLens; Right: Abortion Opinion Survey}
\label{fig:distortion_bound_both}
\end{figure}

\smallskip
\noindent\textbf{Results.} 
Figure~\ref{fig:distortion_bound_both} presents instance-based distortion bounds for MovieLens (left) and abortion opinion survey (right).
Our first observation is that, as we expect, the instance-optimal approaches we propose outperform all alternatives, with improvement appearing to increase with $d$.
This is true for both deterministic and randomized rules.
Second, and somewhat surprisingly, we find that worst-case distortion bounds \emph{decrease with dimension} $d$, despite the fact that distortion lower bounds increase in $d$.
Thus, as the dimension of candidate representations increases, using rankings in place of utilities becomes empirically less consequential.

\section{Conclusion}
We initiate the study of distortion in the setting of linear social choice, which we anticipate will play a role in the integration of social choice into neural and representation-based models.
For deterministic rules we introduce the maximum coordinate plurality (MCP) rule, which we prove obtains worst-case distortion $\mathcal{O}(d^3)$ and is within $\Theta(d)$ of optimal.
Since MCP requires access to candidate positions, we leave open whether $d$-dependent distortion is even \emph{possible} for deterministic rules without such access.

For randomized rules, we construct the \emph{linear stable lotteries rule}, and show it attains asymptotically optimal worst-case distortion $\Theta(\sqrt{d})$ when candidate locations are known.
We show stable lotteries yield $O(d)$ distortion even when candidate locations are unknown, leaving unresolved the optimal distortion in this setting.
We also establish preliminary results for $\elltwo$ voter and candidate normalization.

Beyond the utilitarian welfare objective, future work might also explore \emph{Nash welfare} or \emph{proportional fairness}, as studied by ~\citet{ebadian2024optimized}, or worst-case \emph{regret} minimization \cite{caragiannis2017subset,kahng2022worst}.
Finally, given the connection between linear social choice and reward learning, we plan to investigate the robustness of various rules under both learning errors and strategic voters.

\clearpage
\bibliography{aaai2026}

\clearpage
\onecolumn

\newpage

\appendix

\section{Omitted Proofs}
\label{app:omitted-proofs}

\randomlb*
\begin{proof}[Proof of \Cref{thm:random-LB-ell1}]
    This is essentially a presentation of the proof of \citet{boutilier15optimal}.
    
    Without loss of generality, assume that $\sqrt{d}$ divides $n$. 
    Partition the set of agents into $\sqrt{d}$ equal-size groups $N_1, \ldots, N_{\sqrt{d}}$.  
    Consider the following preference profile: for each $k = 1, \ldots, \sqrt{d}$, all agents in $N_k$ rank candidate $c_k$ first (the remainder of their rankings can be arbitrary). 
    
    By the pigeonhole principle, for any randomized social choice function $f$, there exists a $k^* \in \{1, \ldots, \sqrt{d}\}$ such that $\Pr[f(\prof) = c_{k^*}] \leq \frac{1}{\sqrt{d}}$.
    
    Now, construct the following instance (which is \emph{unknown} to the social choice rule):  
    Let $c_{k^*} = c_1$, and for $k \in [\sqrt{d}]$, let $c_k = e_k$ (the $k$-th standard basis vector). For all other candidates $j > \sqrt{d}$, let $c_j = e_2$. Assign all voters in $N_{k^*}$ the vector $v_1 = e_1$, and all voters in other groups the vector $\mu$. It is easy to check that these voters and candidates satisfy all constraints imposed in the model section.
    
    For this configuration, the average social welfare is maximized by selecting $c_{k^*}$, yielding
    \[
    \sw_1 = \frac{1}{\sqrt{d}} + \frac{1}{d}\left(1 - \frac{1}{\sqrt{d}}\right).
    \]
    For any other candidate, the average social welfare is 
    \[
    \sw_2 < \frac{1}{d}.
    \]
    
    Thus, the welfare-maximizing candidate is $c_{k^*}$. However, since $\Pr[f(\prof) = c_{k^*}] \leq \frac{1}{\sqrt{d}}$, the expected social welfare under $f$ is at most
    \[
    \mathbb{E}[\sw] < \sw_1 \cdot \frac{1}{\sqrt{d}} + \sw_2 \cdot \left(1 - \frac{1}{\sqrt{d}}\right).
    \]
    Therefore, the distortion is at least
    \[
    \dist(f) > \frac{\sw_1}{\sw_1 \cdot \frac{1}{\sqrt{d}} + \sw_2 \cdot \left(1 - \frac{1}{\sqrt{d}}\right)}.
    \]
    Plugging in $\sw_1 = \frac{1}{\sqrt{d}} + \frac{1}{d}\left(1 - \frac{1}{\sqrt{d}}\right)$ and $\sw_2 = \frac{1}{d}$, we find
    \[
    \dist(f) \in \Omega(\sqrt{d}).
    \]
    Thus, the distortion is $\Omega(\sqrt{d})$, as claimed.
\end{proof}

\randomdictator*
\begin{proof}[Proof of \Cref{thm:RD-distortion-ell1}]
    The $\Omega(d)$ lower bound for the distortion of $\rd$ in the unit-sum setting is straightforward; we reproduce it here for completeness. 
    Consider $m = d$, with each candidate embedded to a unit basis vector; i.e. $C = [d]$ and $c_i = e_i$ for each $i$.
    We will consider $d-1$ equal-size groups of voters $V_i$ for $i \in [d-1]$. 
    The voters in $V_i$ have embedding $v = (e_i + e_d)/2$ (or $v = e_i (1/2 + \epsilon) + e_d (1/2 - \epsilon)$ for some small $\epsilon > 0$, if ties must be avoided), and so their rankings will be $c_i \succ_v c_d \succ_v \dots$. 
    On such an instance, $\rd$ chooses uniformly from $[d-1]$ and $\sw(\rd)= \frac{n}{2(d-1)}$, while the welfare optimizer is $c_d$ with $\sw(c_d) = \frac{n}{2}$. 
    Therefore on this profile $\prof$ we have 
    \[
        \dist(\rd, \prof) = \frac{1}{d-1} = \Omega(d).
    \]

    We now turn to upper bounding the distortion of $\rd$.
	For each $v \in V$, let $c_v$ denote the top-ranked candidate for $v$. 
	By Lemma~\ref{lem:voter-util-LB} we have that $u_v(c_v) = v^T c_v \geq 1/d$.
	So by an averaging argument over the $d$ dimensions, there is some $\ell \in [d]$ such that $v^\ell c_v^\ell \geq 1/d^2$.
	Since $v, c_v \in \Delta_d$, this implies that $v^\ell \geq 1/d^2$ and $c_v^\ell \geq 1/d^2$.
		
    Fix one such coordinate $\ell_v$ for each $v$ and partition the voters $V$ into groups $V_1, \ldots, V_d$ according to their $\ell_v$.
    Let $p_\ell \defeq |V_\ell|/n$ encode the proportion of voters in each group, and observe that $p \in \Delta_d$.
	We can now lower bound the expected welfare from $\rd$ by observing that if the dictator is chosen from group $V_i$, all other voters in $V_i$ will also obtain some welfare from their choice.
	\begin{align*}
		E_{c\sim RD}\left[ \sw(c) \right] &= \frac{1}{n} \sum_{v \in V} \sw(c_v) \\
		&= \frac{1}{n} \sum_{v \in V} \sum_{v' \in V} u_{v'}(c_v) \\
		&\geq \frac{1}{n} \sum_{v \in V} \sum_{v' \in V_{\ell_v}} v'^T c_v \\
		&\geq \frac{1}{n} \sum_{v \in V} \sum_{v' \in V_{\ell_v}} (v')_{\ell_v} (c_v)_{\ell_v}.
	\end{align*}
	From here we will rearrange the sum to range over the groups $v^\ell$. Observe that $v^\ell c_v^\ell \geq 1/d^2$ implies that $c_v^\ell \geq \frac{1}{v^\ell d^2}$. Then continuing,
	\begin{align*}
		&= \frac{1}{n} \sum_{\ell} \sum_{v,v' \in V_{\ell}} v^\ell c^{v'}_\ell \\
		&\geq \frac{1}{n} \sum_{\ell} \sum_{v,v' \in V_{\ell}} v^\ell \frac{1}{d^2 v'_\ell} \\
		&\geq \frac{1}{n} \sum_{\ell} \sum_{v,v' \in V_{\ell}: v^\ell \geq v'_\ell} v^\ell \frac{1}{d^2 v'_\ell} \\
		&\geq \frac{1}{n} \sum_{\ell} \sum_{v,v' \in V_{\ell}: v^\ell \geq v'_\ell} v^\ell \frac{1}{d^2 v^\ell} \\
		&\geq \frac{1}{n d^2} \sum_\ell \frac{1}{2} |v^\ell|^2 \\
		&= \frac{n}{2 d^2} p^T p \\
		&\geq \frac{n}{2 d^3}.
	\end{align*}
	The claim follows from the observation that the maximum welfare is at most $n$.
\end{proof}
We conclude by remarking that while the analysis of $\rd$ in this proof in some sense proceeds on a dimension-by-dimension basis, the rule is itself agnostic to (and indeed has no knowledge of) these coordinates or the embeddings of $V$ and $C$. 
This is in contrast to $\mcp$, where both the rule and its analysis proceed along these lines.

\harmonic*
\begin{proof}[Proof of \Cref{thm:harmonic-vs-RD}]
	We start with the first claim. 
    Consider the instance for which $d=2$, all voters are $v=(1,0)$ and perfectly aligned with $c=(1,0)$, and all other candidates are $c'=(0,1)$. 
    Then $\sw(\hr) = \frac{n}{2 H_m}$, where $H_m = \log m + \Theta(1)$ is the $m^{th}$ harmonic number. 
    On the other hand $\sw(c) = n$, so for this instance family $\{\uprof_m\}$ we have $\dist(\hr, \uprof_m) = \Theta(\log^{-1} m) \rightarrow 0$ as $m$ grows large.

    We now address the second. 
    Observe that on any profile $\prof$ induced by a utility profile instance $\uprof$, and for every alternative $a$, $\Pr_{c \sim \hr(\prof)}[c = a] \geq \Pr_{c \sim \rd(\prof)}[c = a] /H_m$.
    Therefore 
    \begin{align}
        \sw(\hr(\prof)) &= \sum_a \Pr_{c \sim \hr(\prof)}\left[ c = a \right] \sw(a)  \notag \\
        &\geq \frac{1}{H_m} \sum_a \Pr_{c \sim \rd(\prof)}\left[ c = a \right] \sw(a) \notag \\
        &= \frac{\sw(\rd(\prof))}{H_m}.\notag
    \end{align}
    The claim follows from $H_m \leq \log m + 1$.
\end{proof}

\purestablelottery*
\begin{proof} [Proof of \Cref{thm:pure-stable-lottery-ell1}]

	As in the proof of \Cref{thm:stable-lottery-ell1}, for any committee $W$ of size $k$ and any $a \in W$, let $S_a(W) \subseteq V$ denote the voters for which $a \succ_v W$, and $\overline S_a(W)$ its complement. 
	Then via identical derivation we obtain 
	\begin{align}
		\sum_{v \in V} \sum_{c \in W} u_v(c) 
		&\geq \sum_{v\in V} u_v(c^*)  - |\{S_a(W)\}|.
	\end{align}
	Taking the expectation over a stable lottery $\cW$ and applying \Cref{def:stable-lotteries} with $k=2d$, we obtain
	\begin{align}
		\ev_{W \sim \cW} \left[  \sum_{c \in W} \sw(c) \right]
		&\geq \sw(c^*)  - \ev_{W \sim \cW} \left[ | \{S_a(W)\} | \right] \notag \\
		&\geq \sw(c^*)  - \frac{n}{2d} \notag \\
		&\geq \frac{1}{2} \cdot \sw(c^*), \label{eq:stab-lot-vs-welfare-alone}
	\end{align}
	where the last step follows from the fact that $\sw(c^*)\geq n/d$ by \Cref{thm:middle-dist-ell1}.
	Applying \eqref{eq:stab-lot-vs-welfare-alone} and recalling our choice of $k = 2d$, we have 
	\begin{align}
		k \cdot \ev_{c \sim \pslr } \left[ \sw(c) \right] &= \ev_{W \sim \cW} \left[  \sum_{c \in W} \sw(c) \right] \notag \\
		&\geq \frac{1}{2} \cdot \sw(c^*), \notag
        \intertext{and so}
		\ev_{c \sim \lslr } \left[ \sw(c) \right]  &\geq \frac{1}{4 d} \cdot \sw(c^*).  \notag 
	\end{align}
	This proves the claim.
\end{proof}

\section{Pseudocode for Implementations of Instance-Optimal Rules }\label{app:pseudocode}

\begin{algorithm}
\caption{Deterministic Distortion-Optimal Rule (Column Generation)}
\begin{algorithmic}[1]
\STATE \textbf{INPUT:} Candidate set \( C \), preference profile \( \prof \), tolerance \( \varepsilon \)
\FOR{each candidate \( c_k \in C \)}
    \STATE Initialize \( \mathcal{V} \gets \) small subset of feasible average voters in \( \mathcal{F} \)
    \REPEAT
        \STATE Solve the following LP to obtain \( \hat{\beta}_k \):
        \[
        \begin{aligned}
            \max_{\beta} \quad & \beta \\
            \text{s.t.} \quad &
            \langle c_k, \bar{v} \rangle \ge \beta \cdot \max_{c \in C} \langle c, \bar{v} \rangle 
            \quad \forall \bar{v} \in \mathcal{V}
        \end{aligned}
        \]
        \STATE Call \textsc{SeparationOracle}\((c_k, \hat{\beta}_k, \prof)\) to find a violated \( \bar{v} \in \mathcal{F} \)
        \IF{a violated \( \bar{v} \) is found}
            \STATE ADD \( \bar{v} \) to \( \mathcal{V} \)
        \ELSE
            \STATE \textbf{BREAK}
        \ENDIF
    \UNTIL{convergence}
    \STATE Store worst-case distortion bound \( \hat{\beta}_k \)
\ENDFOR
\RETURN Candidate \( c_k \in C \) with largest \( \hat{\beta}_k \)
\end{algorithmic}
\end{algorithm}

\begin{algorithm}[ht]
\caption{\textsc{SeparationOracle}$(\hat{c}, \hat{\beta}, \prof)$}
\begin{algorithmic}[1]
\FOR{each candidate $c \in C$}
    \STATE Let $d_c \gets \hat{c} - \hat{\beta} \cdot c$
    \STATE Solve LP:
    \[
    \begin{aligned}
    \min_{\bar{v} \in \mathcal{F}} \quad & \langle d_c, \bar{v} \rangle 
    \end{aligned}
    \]
    \IF{optimal value $< 0$}
        \STATE \textbf{Return} violating $\bar{v}$
    \ENDIF
\ENDFOR
\STATE \textbf{Return} \textsc{None} 
\end{algorithmic}
\end{algorithm}

\begin{algorithm}[ht]
\caption{Column Generation for Randomized Distortion-Optimal Rule}
\begin{algorithmic}[1]
\STATE \textbf{INPUT:} Candidate set \( C \), preference profile \( \prof \), tolerance \( \varepsilon \)
\STATE \textbf{INITIALIZE:} Constraint set \( \mathcal{V} \gets \) small subset of \( \mathcal{F} \) (e.g., one feasible \( \bar{v} \))
\REPEAT
    \STATE Solve the following LP to obtain \( (\hat{\beta}, \hat{p}) \):
    \[
    \begin{aligned}
    \max_{\beta,\, p} \quad & \beta \\
    \text{s.t.} \quad &
    \sum_{i=1}^{m} p_i \langle c_i, \bar{v} \rangle 
    \ge \beta \cdot \max_{c \in C} \langle c, \bar{v} \rangle 
    \quad \forall \bar{v} \in \mathcal{V} \\
    & \sum_{i=1}^{m} p_i = 1,\quad p_i \ge 0 \quad \forall i
    \end{aligned}
    \]
    \STATE CALL \textsc{SeparationOracle}\(( \hat{p}^TC,\hat{\beta}, \prof)\) to find a violated constraint \( \bar{v} \in \mathcal{F} \)
    \IF{a violated \( \bar{v} \) is returned}
        \STATE ADD \( \bar{v} \) to \( \mathcal{V} \)
    \ELSE
        \STATE \textbf{BREAK}
    \ENDIF
\UNTIL{convergence}
\STATE \textbf{RETURN:} Randomized rule defined by \( \hat{p} \)
\end{algorithmic}
\end{algorithm}

Instead of performing binary search independently for each pair of candidates in the deterministic instance-optimal rule, we can adopt a structure similar to that used in the randomized rule. This still requires $O(m^2)$ different programs, but now, for each candidate, we compute its maximum distortion against all others directly. In this formulation, the separation oracle is invoked repeatedly with a fixed feasible region but varying objective functions. This allows us to use warm-starting to significantly accelerate computation.

\section{Results Under $\elltwo$ Normalization}
\label{app:ell-two-normalization}

When voter and candidate vectors are $\elltwo$-normalized, meaning that $\|v\|_2 = \|c\|_2 = 1$ utilities are given by $u_v(c) = v^Tc = \cos(\theta_{vc})$, where $\theta_{vc}$ is the angle between $v$ and $c$.

A critical difference between these models is a voter's favorite candidate: under $\ellone$ normalization, any voter $v$ has a favorite candidate in $\{e_1, \ldots, e_d\}$, where $e_i$ is the standard basis vector for coordinate $i \in [d]$.
By contrast, under $\elltwo$ normalization the favorite candidate of any voter $v$ is itself.

\begin{table}[ht]
    \centering
    \begin{tabular}{c c c}
        $C$ Known? & Randomized & Deterministic \\
        \hline
        \hline
        \multirow{2}{*}{Known} & $O(d^{1/2})$ \:\: (Thm.~\ref{thm:stable-lottery-ell2}) & $O(d^2)$ \:\: (Thm.~\ref{thm:MCP-dist-ell2}) \\
        & $\Omega(d^{1/4})$ \:\: (Thm.~\ref{thm:random-LB-ell2}) & $\Omega{(d^{3/2})}$ \:\:(Thm.~\ref{thm:det-LB-ell2}) \\
        \hline
        Unknown & $O(d)$ \:\:  (Thm.~\ref{thm:pure-stable-lottery-ell2}) & - \\
    \end{tabular}
    \caption{Preliminary distortion bounds in the $\elltwo$-normalized linear social choice setting. Again the `$C$ unknown' setting inherits lower bounds from the `$C$ known' setting.}
    \label{tab:distortion-UBs-ell2}
\end{table}

\subsection{Impossibilities}

Throughout let $\mu_2=(\frac{1}{\sqrt{d}},\dots, \frac{1}{\sqrt{d})})$ denote the uniform $\elltwo$-normalized vector in $\mathbb{R}^d_{\geq 0}$.

\begin{theorem}
\label{thm:det-LB-ell2}
    Suppose $m, n \geq d$. 
    Then there is a profile $\prof$ on which every deterministic rule $f$ has distortion $\Omega(d^{3/2})$.
\end{theorem}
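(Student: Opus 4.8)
The plan is to derive the $\elltwo$ lower bound directly from the $\ellone$-normalized deterministic lower bound of \Cref{thm:det-LB-ell1} via a change of normalization that fixes the preference profile. Let $\prof$ be a profile witnessing \Cref{thm:det-LB-ell1}: it uses candidate embeddings that are standard basis vectors (and, if $m>d$, clones thereof), all of which are simultaneously $\ellone$- and $\elltwo$-normalized, together with $\ellone$-normalized voter vectors, and on it every deterministic rule has $\ellone$-distortion $\Omega(d^2)$. Write $\dist_{\ellone}$ and $\dist_{\elltwo}$ for distortion under the two normalizations. Given any $\ellone$-feasible utility profile $\uprof\induces\prof$, let $\tilde\uprof$ be its renormalization, obtained by replacing each voter vector $v$ with $\tilde v\defeq v/\|v\|_2$. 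Then $\tilde v\in\mathbb{R}^d_{\ge 0}$ with $\|\tilde v\|_2=1$, and $V\subset\cone(C)=\mathbb{R}^d_{\ge 0}$, so $\tilde\uprof$ is a feasible $\elltwo$-profile; and since positive rescaling preserves strict preferences and ties alike, $\tilde\uprof\induces\prof$. In particular a deterministic rule $f$ selects the same candidate $c\defeq f(\prof)$ under either normalization.

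The quantitative core is the elementary bound $1\le 1/\|v\|_2\le\sqrt d$ valid for every $\ellone$-normalized $v\in\mathbb{R}^d_{\ge 0}$ --- the lower bound because $\|v\|_2\le\|v\|_1=1$, the upper bound because $\|v\|_1\le\sqrt d\,\|v\|_2$ by Cauchy--Schwarz. As the candidate embeddings are unchanged, $\sw_{\elltwo}(\tilde\uprof,c)=\sum_v\tfrac{1}{\|v\|_2}v^\top c$ with each summand $v^\top c\ge 0$, so for every candidate $c$,
\[
\sw_{\ellone}(\uprof,c)\;\le\;\sw_{\elltwo}(\tilde\uprof,c)\;\le\;\sqrt d\,\sw_{\ellone}(\uprof,c).
\]
Hence, writing $c^*(\uprof)$ for an $\ellone$-welfare maximizer of $\uprof$,
\[
\dist_{\elltwo}(f,\prof)\;\ge\;\frac{\sw_{\elltwo}\big(\tilde\uprof,\,c^*(\uprof)\big)}{\sw_{\elltwo}(\tilde\uprof,c)}\;\ge\;\frac{\sw_{\ellone}\big(\uprof,\,c^*(\uprof)\big)}{\sqrt d\;\sw_{\ellone}(\uprof,c)},
\]
where the first step uses that $\tilde\uprof$ is an admissible $\elltwo$-profile inducing $\prof$ and that the $\elltwo$-optimal welfare is at least $\sw_{\elltwo}(\tilde\uprof,c^*(\uprof))$. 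Taking the maximum over all $\ellone$-feasible $\uprof\induces\prof$ gives $\dist_{\elltwo}(f,\prof)\ge\dist_{\ellone}(f,\prof)/\sqrt d$.

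By \Cref{thm:det-LB-ell1}, $\dist_{\ellone}(f,\prof)=\Omega(d^2)$ for every deterministic $f$ (a constant rule is deterministic, so for the choice $c=f(\prof)$ the $\ellone$ bound produces an $\ellone$-feasible $\uprof\induces\prof$ with $\sw_{\ellone}(\uprof,c^*(\uprof))/\sw_{\ellone}(\uprof,c)=\Omega(d^2)$, which is exactly what the previous display consumes). Combining the two, $\dist_{\elltwo}(f,\prof)\ge\Omega(d^2)/\sqrt d=\Omega(d^{3/2})$ for every deterministic $f$, which is the claim.

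The only step that needs care --- the main, if minor, obstacle --- is checking that the worst-case profile for \Cref{thm:det-LB-ell1} can be realized with $\elltwo$-normalized candidate embeddings, so that the renormalization acts on the voter side only and the instance stays inside the $\elltwo$ model; standard basis vectors (and their clones) serve this purpose, after which the entire argument collapses to the norm comparison $1\le 1/\|v\|_2\le\sqrt d$, which costs exactly a $\sqrt d$ factor and turns the deterministic $\Omega(d^2)$ bound into $\Omega(d^{3/2})$.
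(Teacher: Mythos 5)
Your reduction is correct, but it takes a genuinely different route from the paper. The paper proves \Cref{thm:det-LB-ell2} directly, by re-running the two-group construction of \citet{caragiannis2017subset} with $\elltwo$-normalized embeddings: every voter has a unique favorite, groups $V_a$ and $V_b$ rank $a$ resp.\ $b$ second, and a case analysis on the rule's output (a ``consensus'' candidate vs.\ some voter's favorite) pits a welfare of $\Omega(n)$ against a welfare of $d^{-1/2}$ coming from a single voter at $\mu_2$, giving $\Omega(n\sqrt d)=\Omega(d^{3/2})$. You instead take the $\ellone$ witness profile as a black box and push it through the norm comparison $1\le 1/\|v\|_2\le\sqrt d$, losing exactly a $\sqrt d$ factor; the chain of inequalities is sound (candidates are basis vectors so they are simultaneously $\ellone$- and $\elltwo$-normalized, positive rescaling of voters preserves rankings, and $\cone(C)=\mathbb{R}^d_{\ge0}$ keeps the renormalized profile feasible), and it lands on the same $\Omega(d^{3/2})$. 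Your approach is more modular and would transfer any future improvement to the $\ellone$ deterministic lower bound (up to $\sqrt d$); the paper's direct construction is self-contained and makes explicit \emph{which} utility profiles are hard, which is useful since the interesting regime is the adversary's choice after seeing $f(\prof)$. One point to make explicit: \Cref{thm:det-LB-ell1} as stated bounds $\dist(f)$, i.e.\ a worst case over profiles that could in principle depend on $f$, whereas your argument (and the theorem being proved) needs a \emph{single} profile that is hard for every deterministic rule. The Caragiannis et al.\ construction does supply such a universal profile --- the adversarial utilities, not the rankings, depend on $f$ --- so this is a presentational rather than a mathematical gap, but your proof should state that it uses this stronger, single-profile form of the $\ellone$ bound.
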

\begin{proof}
    We follow the helpful presentation by \citet{anshelevich2021distortion} of the $\Omega(m^2)$ lower bound of \citet{caragiannis2017subset} on the distortion of any deterministic rule in the unit-sum setting.

    Voters are organized into two equal-size groups $V_a$ and $V_b$, where each voter has a unique preferred candidate $c_v$, all voters $v \in V_a$ rank $a$ second, and all voters $v \in V_b$ rank $b$ second.
    Suppose that  $d = m$ and (as usual) that the embeddings of the candidates form the standard basis for $\mathbb{R}^d$.
    In this case, ranking suffixes may be completed arbitrarily; this defines our profile $\prof$.

    What can our deterministic rule do? 
    If it outputs a candidate in $\{a,b\}$ then suppose without loss of generality that $f(\prof) = b$ and $a$ is the welfare maximizer.
    If all $v \in V_b$ are embedded as $v = c_v$, then $\sw(f(\prof)) = 0$, while the maximizer can be anything supported by group $V_a$, leading to unbounded distortion.

    On the other hand, suppose that $f(\prof) = c_v$, some voter's favorite candidate.
    Suppose without loss of generality $v \in V_b$ and that $a$ is the welfare maximizer, and all $v' \in A$ are embedded as $v' = (a + c_{v'})/2$, so that $\sw(a) = \Omega(n)$.
    If $v$ is essentially indifferent and embedded at $\mu_2$, but no other voters place any value on $c_v$, then $\sw(c_v) = v^T c_v = d^{-1/2}$.
    In this case, we have 
    \[
        \dist(f(\prof)) = \Omega(d^{3/2}).
    \]
    Since in all other cases $f$ suffers infinite distortion, this is the best it can attain on such a profile $\prof$.
\end{proof}

In the same spirit as Theorem~\ref{thm:random-LB-ell1}, we have a lower bound for any randomized rule that is agnostic to candidates' positions.
\begin{theorem}
\label{thm:random-LB-ell2}
    Suppose $m \geq d$ and $n \geq d^{1/4}$. Then there exists a preference profile $p$ such that for any randomized social choice function $f$, we have $\dist(p, f(p)) = \Omega(d^{1/4})$.
\end{theorem}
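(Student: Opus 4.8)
The plan is to reuse the pigeonhole-plus-hidden-instance argument behind \Cref{thm:random-LB-ell1}, re-tuning the number of voter groups from $\sqrt d$ to $d^{1/4}$ to account for the different geometry of $\elltwo$ normalization. Assuming for simplicity that $d^{1/4}$ is an integer dividing $n$ (rounding as needed), I would partition the voters into $d^{1/4}$ equal groups $N_1,\dots,N_{d^{1/4}}$ and let the profile $\prof$ be one in which every voter in $N_k$ ranks a distinct candidate $c_k$ first, with the rest of each ranking arbitrary; this is possible since $m\ge d\ge d^{1/4}$. By pigeonhole, any randomized rule $f$ satisfies $\Pr[f(\prof)=c_{k^*}]\le d^{-1/4}$ for some $k^*$, which we relabel to $k^*=1$.

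Next I would specify a utility instance consistent with $\prof$ on which $c_1$ is dramatically better than every competitor. Take $C\supseteq\{e_1,\dots,e_d\}$ with $c_1=e_1$ (any candidates beyond the first $d$ can be copies of $e_2$); this guarantees $\cone(C)=\mathbb{R}^d_{\ge 0}$, so the expressiveness constraint holds for free. Assign every voter in $N_1$ the vector $e_1$ and every other voter the uniform vector $\mu_2=(d^{-1/2},\dots,d^{-1/2})$. Since $\mu_2^\top c=d^{-1/2}$ for all $c\in C$, any completion of those voters' rankings is consistent with their utilities (in particular the one prescribed by $\prof$), and since $e_1^\top c=0$ for every $c\ne e_1$, the $N_1$ voters genuinely rank $c_1$ first; a small perturbation of the $N_k$ voters toward $e_k$ would give strict preferences if desired. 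Now $\sw(c_1)=\tfrac{n}{d^{1/4}}+n\bigl(1-d^{-1/4}\bigr)d^{-1/2}\ge \tfrac{n}{d^{1/4}}$, whereas any $c\ne c_1$ gets $0$ from the $N_1$ voters and at most $d^{-1/2}$ from each remaining voter, so $\sw(c)\le n\,d^{-1/2}$; hence $c_1$ is the welfare maximizer for $d>1$.

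Combining the two ingredients, since $f$ picks $c_1$ with probability at most $d^{-1/4}$,
\[
\ev_{c\sim f(\prof)}\bigl[\sw(c)\bigr]\;\le\; d^{-1/4}\,\sw(c_1)\;+\;n\,d^{-1/2}\;\le\;\tfrac{2n}{\sqrt d}+\tfrac{n}{\sqrt d}\;=\;O\!\left(\tfrac{n}{\sqrt d}\right),
\]
so $\dist(f,\prof)\ge \sw(c_1)\big/\ev_{c\sim f(\prof)}[\sw(c)]\ge (n/d^{1/4})\big/O(n/\sqrt d)=\Omega(d^{1/4})$. The exponent $1/4$ is the right choice: as a function of the number of groups $g$ the bound has the shape $\Theta\bigl(g/(1+g^2/\sqrt d)\bigr)$, maximized at $g=d^{1/4}$ with value $\Theta(d^{1/4})$.

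I expect the main difficulty to be exactly this parameter tuning rather than any single calculation: unlike in $\ellone$, the uniform voter $\mu_2$ confers $d^{-1/2}$ (not $d^{-1}$) on each basis candidate, so the welfare gap between $c_{k^*}$ (welfare $\approx n/g$) and every other candidate (welfare $\approx n/\sqrt d$) is only $\sqrt d/g$, and this must be balanced against the $1/g$ loss from pigeonhole. A secondary point to check carefully is that the hidden instance meets all model constraints — non-negativity, $\elltwo$-normalization, and especially $V\subset\cone(C)$ — which forces $C$ to contain all $d$ standard basis vectors (hence the hypothesis $m\ge d$) and requires $n\ge d^{1/4}$ so that the groups are nonempty.
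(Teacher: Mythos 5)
Your proposal is correct and follows essentially the same pigeonhole-plus-hidden-instance argument as the paper's proof, with the same choice of $d^{1/4}$ groups, the same hidden embedding ($c_{k^*}=e_1$, its supporters at $e_1$, everyone else at $\mu_2$), and the same welfare calculus yielding $\Omega(d^{1/4})$. Your one deviation---insisting that $C$ contain all $d$ standard basis vectors so that $\mu_2\in\cone(C)$---is a slightly more careful handling of the expressiveness constraint than the paper's construction, which places the unused candidates at $e_2$.
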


\begin{proof}
Without loss of generality, assume that $d^{1/4}$ divides $n$. Partition the set of agents into $d^{1/4}$ equally sized groups, $N_1, \ldots, N_{d^{1/4}}$. Consider the following preference profile: for each $k = 1, \ldots, d^{1/4}$, all agents in $N_k$ rank candidate $c_k$ first (the remainder of their rankings can be arbitrary). 

By the pigeonhole principle, for any randomized social choice function $f$, there exists $k^* \in \{1, \ldots, d^{1/4}\}$ such that $\Pr[f(p) = c_{k^*}] \leq \frac{1}{d^{1/4}}$.

Now, construct the following instance (which is \emph{unknown} to the social choice rule):  
Let $c_{k^*} = c_1$, and for $k \in [d^{1/4}]$, let $c_k = e_k$ (the $k$-th standard basis vector). For all other candidates $j > d^{1/4}$, let $c_j = e_2$. Assign all voters in $N_{k^*}$ the vector $v_1 = e_1$, and all voters in other groups the vector $\mu_2$. It is easy to check that these voters and candidates satisfy all constraints imposed in the model section.

For this configuration, the average social welfare is maximized by selecting $c_{k^*}=e_1$, yielding
\[
\sw_1 = \frac{1}{d^{1/4}} + \frac{1}{\sqrt{d}}\left(1 - \frac{1}{d^{1/4}}\right).
\]
For any other candidate, the average social welfare is \[
\sw_2 < \frac{1}{\sqrt{d}}.
\]

Thus, the welfare-maximizing candidate is $c_{k^*}$. However, since $\Pr[f(p) = c_{k^*}] \leq \frac{1}{d^{1/4}}$, the expected social welfare under $f$ is at most
\[
\mathbb{E}[\sw] < \sw_1 \cdot \frac{1}{d^{1/4}} + \sw_2 \cdot \left(1 - \frac{1}{d^{1/4}}\right).
\]
Therefore, the distortion is at least
\[
\dist(p, f(p)) > \frac{\sw_1}{\sw_1 \cdot \frac{1}{d^{1/4}} + \sw_2 \cdot \left(1 - \frac{1}{d^{1/4}}\right)}.
\]
Plugging in $\sw_1 = \frac{1}{d^{1/4}} + \frac{1}{\sqrt{d}}\left(1 - \frac{1}{d^{1/4}}\right)$ and $\sw_2 = \frac{1}{\sqrt{d}}$, we find
\[
\dist(p, f(p)) \in \Omega(d^{1/4}).
\]
Thus, the distortion is $\Omega(d^{1/4})$, as claimed.
\end{proof}

\begin{observation}
    Random dictatorship has worst-case distortion $\Omega(d)$ under $\elltwo$ normalization.
\end{observation}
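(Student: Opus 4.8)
The plan is to mimic the $\ellone$ lower bound construction from the proof of \Cref{thm:RD-distortion-ell1}, adjusting only the normalization. Specifically, I would take $m = d$ with candidates $c_i = e_i$ embedded at the standard basis vectors (which are $\elltwo$-normalized), and partition the voters into $d-1$ equal groups $V_1, \ldots, V_{d-1}$. Voters in group $V_i$ get the embedding $v = (e_i + e_d)/\sqrt{2}$, which is $\elltwo$-normalized, and if strict rankings are required one can perturb to $v = \alpha e_i + \beta e_d$ with $\alpha^2 + \beta^2 = 1$, $\alpha$ slightly larger than $\beta \approx 1/\sqrt 2$. The key point: each such voter ranks $c_i$ first, then $c_d$, then everything else (all remaining candidates $c_j$, $j \neq i, d$, give utility $0$), so the induced profile has $V_i$'s plurality choice equal to $c_i$.

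Next I would compute welfare on both sides. Under random dictatorship on this profile, the winner is drawn uniformly from $\{c_1, \ldots, c_{d-1}\}$; voter $v \in V_i$ gets utility $1/\sqrt 2$ only when $c_i$ is selected, i.e.\ with probability $1/(d-1)$, so $\ev_{c \sim \rd}[\sw(c)] = n \cdot \frac{1}{d-1} \cdot \frac{1}{\sqrt 2} = \frac{n}{\sqrt 2 (d-1)}$. On the other hand, candidate $c_d$ confers utility $1/\sqrt 2$ to every single voter, so $\sw(c_d) = n/\sqrt 2$. Hence $\dist(\rd, \prof) \geq \frac{\sw(c_d)}{\ev_{c\sim\rd}[\sw(c)]} = d - 1 = \Omega(d)$, which is the claimed bound. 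I should also double-check the expressiveness constraint $V \subset \cone(C)$: each $v \in V_i$ is a nonnegative combination of $e_i$ and $e_d$, both in $C$, so this holds.

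I do not expect a serious obstacle here — this is essentially the same instance as in the unit-sum / $\ellone$ case, and the only arithmetic change is the factor $1/\sqrt 2$ that appears in both numerator and denominator and therefore cancels. The one thing to be careful about is the tie-breaking between $c_i$ and $c_d$ for voters in $V_i$ (they are tied at $1/\sqrt 2$ under the symmetric embedding): I would note explicitly that the observation holds either by allowing the stated tie-break that ranks $c_i$ above $c_d$, or by the infinitesimal perturbation $\alpha > \beta$, under which welfare quantities change by $O(\eps)$ and the $\Omega(d)$ conclusion is unaffected. Since the observation only asserts a lower bound, no matching upper bound argument is needed.
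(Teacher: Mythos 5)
Your proposal is correct and is essentially the paper's own argument: the paper also reuses the $\ellone$ instance from the random-dictatorship lower bound (candidates at the standard basis, $d-1$ groups of voters nearly indifferent between $c_i$ and $c_d$), noting only that constants change under $\elltwo$ renormalization. Your explicit welfare computation and the handling of tie-breaking via perturbation match the intended proof.
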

\begin{proof}
    The instance is the same as in the proof of \Cref{thm:RD-distortion-ell1}, up to constant factors due to the renormalization of the sparse utilities.
    Consider $C = \{e_1, \ldots, e_d\}$ and $d-1$ equal-size groups of voters.
    Again voters in group $i$ are nearly indifferent between $c_i$ and $c_d$, but rank $c_i$ above $c_d$, and garner $0$ utility from all other candidates.
    Then the welfare-maximizing candidate is $c_d$ with welfare $\sw(c^*) = \Omega(n)$, while $\sw(\rd) = O(\frac{n}{d})$.
\end{proof}

\subsection{Positive Results}
\label{sec:upper-bounds-ell2}

We start with an analog of \Cref{lem:voter-util-LB}.
\begin{lemma}
\label{lem:voter-util-LB-ell2}
    For any set of candidates $C=\{c_j\}_{j \in [m]}$ and any voter $v \in \cone(C)$,
    \[
        \max_{c \in C} u_v(c) \geq d^{-1/2}.
    \]
\end{lemma}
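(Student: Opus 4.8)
The plan is to mirror the proof of \Cref{lem:voter-util-LB}, adding one step to handle the fact that under $\elltwo$ normalization a conic representation of $v$ in terms of $C$ need not be a \emph{convex} combination. Let $\tilde c \defeq \arg\max_{c \in C} v^T c$, and fix any representation $v = \sum_{i \in [m]} \beta_i c_i$ with $\beta_i \ge 0$, which exists because $v \in \cone(C)$. Since $v^T c_i \le v^T \tilde c$ for every $i$ and the $\beta_i$ are nonnegative, using $\|v\|_2 = 1$ we get
\[
    1 = \|v\|_2^2 = v^T v = \sum_{i} \beta_i\,(v^T c_i) \;\le\; \Bigl(\sum_i \beta_i\Bigr)\, v^T \tilde c ,
\]
so $v^T \tilde c \ge 1/\sum_i \beta_i$ (the denominator is positive since $v \ne 0$), and it remains only to show $\sum_i \beta_i \le \sqrt d$.

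For that bound I would sum the vector identity $v = \sum_i \beta_i c_i$ over all $d$ coordinates, obtaining $\|v\|_1 = \sum_i \beta_i \|c_i\|_1$. Each $c_i$ is nonnegative with $\|c_i\|_2 = 1$, so $\|c_i\|_1 \ge \|c_i\|_2 = 1$, hence $\sum_i \beta_i \le \|v\|_1$; and Cauchy--Schwarz gives $\|v\|_1 \le \sqrt d\,\|v\|_2 = \sqrt d$. Chaining these, $\sum_i \beta_i \le \sqrt d$, and therefore $\max_{c \in C} u_v(c) = v^T \tilde c \ge d^{-1/2}$.

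The only genuinely new ingredient relative to \Cref{lem:voter-util-LB} is controlling the total mass $\sum_i \beta_i$ of the conic coefficients: in the $\ellone$ case they automatically sum to $1$ because $v \in \ch(C)$, whereas here I route through $\|c_i\|_1 \ge \|c_i\|_2$ and $\|v\|_1 \le \sqrt d\,\|v\|_2$. I expect this to be the only step requiring care; everything else transcribes directly from the $\ellone$ argument.
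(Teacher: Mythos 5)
Your proposal is correct and follows essentially the same route as the paper's proof: both lower-bound $\max_c v^T c$ by $\bigl(\sum_i \beta_i\bigr)^{-1}$ via $1 = v^Tv = \sum_i \beta_i v^Tc_i$, and both bound $\sum_i \beta_i \le \|v\|_1 \le \sqrt{d}$ using $\|c_i\|_1 \ge \|c_i\|_2 = 1$ together with nonnegativity. No gaps.
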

\begin{proof}[Proof of \Cref{lem:voter-util-LB-ell2}]
    We will denote the maximum utility for $v$ by $\hat u_v \defeq \max_{c \in C} u_v(c)$; our goal will then be to provide a lower bound on $\hat u_v$.
    Since $v$ and $c \in C$ are $\elltwo$-normalized, this coincides with lower bounding the cosine similarity between $v$ and the most similar $c \in C$.
    
    To begin, since $v \in \cone(C)$ we may write $v = \sum_c \alpha_c c$ for some collection of $\alpha_c \geq 0$. 
    Then to begin we have
    \[
        1 = v^T v = \sum_c \alpha_c v^\top c \leq \hat u_v \sum_c \alpha_c,
    \]
    and so $\hat u_v \geq \left(\sum_c \alpha_c\right)^{-1}$.
    Next, we use the standard $\ellone$-$\elltwo$ norm inequalities, which state that for all $x \in \mathbb{R}^d$, $\|x\|_2 \leq \|x\|_1 \leq \|x\|_2 \sqrt{d}$.
    Therefore $1 = \|c\|_2 \leq \|c\|_1$, and so
    \[
        \sum_c \alpha_c = \sum_c \alpha_c \|c\|_2  \leq \sum_c \alpha_c \|c\|_1 = \|v\|_1 \leq \sqrt{d}.
    \]
    Combining gives us $\hat u_v \geq d^{-1/2}$, as claimed.
\end{proof}
We conclude by observing that \Cref{lem:voter-util-LB-ell2} is tight when $v = \mu_2$ and $C$ is the standard basis of $\mathbb{R}^d$. 

As a consequence, we obtain an improved distortion bound for $\mcp$ under $\elltwo$ normalization.
\begin{theorem}\label{thm:MCP-dist-ell2}
   The distortion of Max Coordinate Plurality is $O(d^2)$ under $\elltwo$ normalization of voters and candidates.
\end{theorem}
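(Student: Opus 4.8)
The plan is to mirror the proof of \Cref{thm:MCP-dist-ell1} for $\mcp$, replacing \Cref{lem:voter-util-LB} with its $\elltwo$ counterpart \Cref{lem:voter-util-LB-ell2} and tracking the extra factors that the $\elltwo$ normalization introduces. Two ingredients carry over unchanged: since $|\hat C|\le d$, the plurality winner $\hat c$ of the restricted profile collects at least $n/d$ first-place votes among $\hat C$; and $\sw(c^\ast)\le n$, because $v^\top c\le\|v\|_2\,\|c\|_2=1$ by Cauchy--Schwarz. Hence it suffices to establish the per-voter bound $\max_{c\in\hat C}v^\top c=\Omega(1/d)$ for every voter $v$: this gives $\sw(\hat c)\ge \frac nd\cdot\Omega(\tfrac1d)$ and therefore $\dist(\mcp)=O(d^2)$.

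For the per-voter bound I would begin exactly as in \Cref{thm:MCP-dist-ell1}. Fix $v$ and let $i^\ast=\arg\max_i v^i$; since $\|v\|_2=1$, averaging over the $d$ coordinates gives $v^{i^\ast}\ge 1/\sqrt d$. Writing $v=\sum_{c\in C}\alpha_c c$ with $\alpha_c\ge 0$, the same accounting as in the proof of \Cref{lem:voter-util-LB-ell2} (using $\|c\|_1\ge\|c\|_2=1$) yields $\sum_c\alpha_c\le\|v\|_1\le\sqrt d$, so the candidate $c_{i^\ast}\in\hat C$ that maximizes coordinate $i^\ast$ has $c_{i^\ast}^{i^\ast}=\max_c c^{i^\ast}\ge v^{i^\ast}/\sum_c\alpha_c\ge v^{i^\ast}/\|v\|_1$, and in particular $c_{i^\ast}^{i^\ast}\ge 1/d$ and $v^\top c_{i^\ast}\ge v^{i^\ast}c_{i^\ast}^{i^\ast}\ge (v^{i^\ast})^2/\|v\|_1$.

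The main obstacle is that this single-coordinate estimate alone only gives $v^\top c_{i^\ast}\ge d^{-3/2}$ when $v$ is spread out (so $v^{i^\ast}\approx 1/\sqrt d$ and $\|v\|_1\approx\sqrt d$), which would yield only $O(d^{5/2})$. I expect the argument to split into two regimes according to how concentrated $v$ is. When $v$ is concentrated, $\max_c c^{i^\ast}\ge v^{i^\ast}/\|v\|_1$ is already $\Omega(1/\sqrt d)$, and then $v^\top c_{i^\ast}\ge v^{i^\ast}\max_c c^{i^\ast}=\Omega(1/d)$. When $v$ is spread, the off-$i^\ast$ mass of $c_{i^\ast}$ can no longer be discarded: every candidate obeys $\|c\|_1\ge\|c\|_2=1$ (and is necessarily spread whenever its largest coordinate is small), so $\hat C$ must ``cover'' the coordinate space, and I would bound $\max_{c\in\hat C}v^\top c$ below through $\sum_{c\in\hat C}v^\top c$ by an averaging/Cauchy--Schwarz argument --- paralleling the clean case $v=\mu_2$, where $v^\top c=\|c\|_1/\sqrt d\ge 1/\sqrt d$ for every candidate $c$ --- to recover the missing $\sqrt d$ factor. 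Once $\max_{c\in\hat C}v^\top c=\Omega(1/d)$ is secured the bound follows at once; I would also note, as after \Cref{thm:MCP-dist-ell1}, that here both the rule and its analysis proceed coordinate-by-coordinate, unlike $\rd$.
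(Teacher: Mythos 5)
Your reduction is set up correctly, and you have accurately isolated the crux: everything comes down to showing $\max_{c\in\hat C}v^\top c=\Omega(1/d)$ for every voter $v$, since the plurality winner in $\hat C$ collects $n/d$ votes and $\sw(c^*)\le n$. But you do not actually prove that per-voter bound. Your single-coordinate estimate gives only $v^\top c_{(i^*)}\ge (v^{i^*})^2/\sum_c\alpha_c\ge d^{-3/2}$, and the repair you propose for the ``spread'' regime is left as a hope (``I would bound \dots by an averaging/Cauchy--Schwarz argument \dots to recover the missing $\sqrt d$ factor''). The natural instantiation of that averaging does not work: writing $M_j:=\max_{c\in C}c^j$ and using only $M_j\ge v^j/\!\sum_c\alpha_c$, one gets $\sum_j v^jM_j\ge 1/\!\sum_c\alpha_c\ge 1/\sqrt d$, and dividing by $|\hat C|\le d$ again yields only $d^{-3/2}$. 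So the step that carries the whole theorem is missing, not merely deferred.

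The claim you need is nonetheless true, and can be closed with a sharper lower bound on $M_j$: for each $j$ with $v^j>0$, $M_j$ dominates the weighted average $\sum_c\alpha_c(c^j)^2/\sum_c\alpha_c c^j$, hence $v^jM_j\ge\sum_c\alpha_c(c^j)^2$; summing over $j$ and using $\|c\|_2=1$ gives $\sum_j v^jM_j\ge\sum_c\alpha_c\ge\|v\|_2=1$ (triangle inequality), whence $\max_j v^\top c_{(j)}\ge\max_j v^jM_j\ge 1/d$. Your two-regime intuition (concentrated $v$ versus $v\approx\mu_2$) points in the right direction but does not by itself produce this inequality, and a blocker-style construction shows $1/d$ is essentially tight, so no slack is available. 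You should also be aware that the paper's own proof asserts that some $c\in\hat C$ attains value $d^{-1/2}$ in $v$'s largest coordinate ``by the cone assumption''; under $\elltwo$ normalization the conic coefficients need not sum to $1$ (they can sum to $\sqrt d$), so that assertion does not follow as stated and is in fact falsifiable by small examples --- you correctly identified this as the weak point, but neither you nor the paper's one-line justification closes it.
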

\begin{proof}
    Following the argument of \Cref{thm:MCP-dist-ell1}, know that each $v \in V$ has a coordinate with value at least $d^{-1/2}$. 
    Therefore there is some $c \in \hat{C}$ with value at least $d^{-1/2}$ for this same coordinate by the cone assumption, and so $\max_{c' \in \hat C} v^T c' \geq v^T c \geq 1/d$. 

    Since the plurality winner in $\hat C$ is supported by at least $n/d$ voters, the winner $c^* \in \hat C$ chosen by MCP confers welfare 
    \[
        \sw(c^*) \geq \frac{n}{d}\cdot \frac{1}{d},
    \]
    and the claim follows.
\end{proof}

\begin{theorem}\label{thm:RD-distortion-ell2}
   Random Dictatorship has distortion at most $O(d^{5/2})$ under $\elltwo$ normalization of voters and candidates.
\end{theorem}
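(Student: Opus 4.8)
The plan is to follow the proof of \Cref{thm:RD-distortion-ell1} almost verbatim, replacing the per-voter utility floor $\max_c u_v(c)\ge 1/d$ of \Cref{lem:voter-util-LB} with its \elltwo counterpart $\max_c u_v(c)\ge d^{-1/2}$ from \Cref{lem:voter-util-LB-ell2}, and re-deriving the exponents. For each voter $v$ let $c_v$ be $v$'s top-ranked candidate, so $v^\top c_v \ge d^{-1/2}$. Averaging this inner product over the $d$ coordinates yields a coordinate $\ell_v$ with $v^{\ell_v} c_v^{\ell_v} \ge d^{-3/2}$; since $v,c_v \in \mathbb{R}^d_{\ge 0}$ have unit \elltwo norm, every coordinate is at most $1$, so both factors lie in $[d^{-3/2},1]$ and in particular $c_v^{\ell_v} \ge (v^{\ell_v} d^{3/2})^{-1}$.

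Then I would partition $V$ into groups $V_1,\dots,V_d$ according to $\ell_v$, write $p_\ell \defeq |V_\ell|/n \in \Delta_d$, and lower-bound $\ev_{c\sim\rd}[\sw(c)]$ exactly as before: restricting the inner welfare sum to voters sharing the dictator's group gives
\[
    \ev_{c\sim\rd}[\sw(c)] \;=\; \tfrac1n \sum_{v\in V}\sum_{v'\in V} u_{v'}(c_v) \;\ge\; \tfrac1n \sum_\ell \sum_{v,v'\in V_\ell} v^\ell c_{v'}^\ell .
\]
Plugging in $c_{v'}^\ell \ge ((v')^\ell d^{3/2})^{-1}$, discarding the non-negative terms with $v^\ell < (v')^\ell$, and noting that the surviving terms are each at least $d^{-3/2}$ while at least half of the ordered pairs in $V_\ell\times V_\ell$ survive, I obtain $\ev_{c\sim\rd}[\sw(c)] \ge \frac{1}{2nd^{3/2}}\sum_\ell |V_\ell|^2 = \frac{n}{2d^{3/2}}\|p\|_2^2 \ge \frac{n}{2d^{5/2}}$, using $\|p\|_2^2 \ge 1/d$ on the simplex.

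To finish, Cauchy--Schwarz gives $u_v(c) = v^\top c \le \|v\|_2\|c\|_2 = 1$, hence $\max_{c\in C}\sw(c) \le n$, so $\dist(\rd) \le 2d^{5/2} = O(d^{5/2})$. There is no substantive obstacle: the only thing to verify is that the \ellone argument never used anything about $\Delta_d$ beyond (i) coordinates bounded by $1$ and (ii) positivity of the selected coordinate, both of which already follow from non-negativity together with unit \elltwo norm; the net effect of the weaker utility floor ($d^{-1/2}$ in place of $d^{-1}$) is to shift the bound by a factor $d^{1/2}$, from $O(d^3)$ to $O(d^{5/2})$.
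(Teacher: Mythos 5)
Your proposal is correct and follows essentially the same route as the paper's proof: apply Lemma~\ref{lem:voter-util-LB-ell2} to get the $d^{-1/2}$ utility floor, extract a coordinate $\ell_v$ with $v^{\ell_v}c_v^{\ell_v}\ge d^{-3/2}$, partition voters by that coordinate, restrict the welfare sum to same-group ordered pairs with $v^\ell \ge (v')^\ell$, and finish with $\|p\|_2^2\ge 1/d$. The paper's proof is a verbatim adaptation of the $\ellone$ argument with exactly the exponent shift you describe.
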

\begin{proof}
    We follow the proof of \Cref{thm:RD-distortion-ell1}. 

    For each agent $v \in V$, let $c_v$ denote the top-ranked alternative for $v$.
    By \Cref{lem:voter-util-LB-ell2} we have that $u_v(c_v) = v^T c_v \geq d^{-1/2}$.
    So by averaging, there is some $\ell \in [d]$ such that $v^\ell c_v^\ell \geq d^{-3/2}$. %
    Since $v, c_v \in S^d_+$, this implies $v^\ell \geq d^{-3/2}$ and  $c_v^\ell \geq d^{-3/2}$.
    Fix one such coordinate $\ell_v$ for each $v$ and partition $V$ into $V_1, \ldots, V_d$ according to $\ell_v$.
    Let $p_\ell = |v^\ell|/n$, and observe that $p \in \Delta_d$.
    Let $c^* = \arg\max_{c \in C} \sum_{v \in V} u_v(c)$ be the welfare-maximizing candidate. 
    Clearly,
    \[
        \sw(c^*) \leq n.
    \]

    On the other hand, 
    \begin{align*}
        E_{c\sim RD}\left[ \sw(c) \right] &= \frac{1}{n} \sum_{v \in V} \sw(c_v) \\
        &= \frac{1}{n} \sum_{v \in V} \sum_{v' \in V} u_{v'}(c_v) \\
        &\geq \frac{1}{n} \sum_{v \in V} \sum_{v' \in V_{\ell_v}} v'^T c_v \\
        &\geq \frac{1}{n} \sum_{v \in V} \sum_{v' \in V_{\ell_v}} (v')_{\ell_v} (c_v)_{\ell_v}.
    \end{align*}
    From here we will rearrange the sum to range over the groups $v^\ell$. Observe that $v^\ell c_v^\ell \geq d^{-3/2}$ implies that $c_v^\ell \geq \frac{1}{v^\ell d^{3/2}}$. Then continuing,
    \begin{align*}
        &= \frac{1}{n} \sum_{\ell} \sum_{v,v' \in V_{\ell}} v^\ell c^{v'}_\ell \\
        &\geq \frac{1}{n} \sum_{\ell} \sum_{v,v' \in V_{\ell}} v^\ell \frac{1}{d^{3/2} v'_\ell} \\
        &\geq \frac{1}{n} \sum_{\ell} \sum_{v,v' \in V_{\ell}: v^\ell \geq v'_\ell} v^\ell \frac{1}{d^{3/2} v'_\ell} \\
        &\geq \frac{1}{n} \sum_{\ell} \sum_{v,v' \in V_{\ell}: v^\ell \geq v'_\ell} v^\ell \frac{1}{d^{3/2} v^\ell} \\
        &\geq \frac{1}{n \cdot d^{3/2}} \sum_\ell \frac{1}{2} |v^\ell|^2 \\
        &= \frac{n}{2 d^{3/2}} p^T p \\
        &\geq \frac{n}{2 d^{5/2}}.
    \end{align*}
    The claim follows.
\end{proof}

The insights about the uniform projection rule and the stable lottery rule can also be adapted to the the setting of $\elltwo$-normalized voter and candidate embeddings.

\begin{definition}[Uniform Projection Rule (\(\uprojtwo\)) for $\elltwo$ Normalization]
\label{def:uproj-ell2}
    Given candidate locations \( C \subset \mathbb{R}^d_{\ge 0} \), the \emph{uniform projection rule} \( \uprojtwo \) defines a distribution \( \{p_c\}_{c \in C} \) over candidates such that the expected pseudo-candidate vector
    $\hat{c} \defeq \sum_{c \in C} p_c \cdot c$
    minimizes the Kullback–Leibler (KL) divergence from the uniform vector, i.e.,
    \(
        \hat{c} = \arg\min_{x \in \ch(C)} \kl{\mu_2}{x}.
    \)
\end{definition}
Note that $\hat c$ in general will not satisfy $\| \hat c \|_2 = 1$, since the set of (even positive orthant) vectors with $\elltwo$-norm 1 is not convex.

\begin{theorem}\label{thm:middle-dist-ell2}
     The expected welfare of $\uprojtwo$ is at least $\sw(\uprojtwo) \geq \frac{n}{d}$.
     As a consequence, $\dist(\uprojtwo)=O(d)$.
\end{theorem}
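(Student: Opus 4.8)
The plan is to mirror the proof of \Cref{thm:middle-dist-ell1}, the only genuine change being that under $\elltwo$ normalization a voter $v\in\cone(C)$ need not lie in $\ch(C)$. First I would observe, exactly as there, that $\kl{\mu_2}{x}=\sum_i\mu_2^i\ln\mu_2^i-\sum_i\mu_2^i\ln x^i$, so minimizing it over the polytope $\ch(C)$ is equivalent to maximizing $\sum_i\ln x^i$ (equivalently $\prod_i x^i$) there; let $\hat c=x^*$ be the maximizer. If $x^*$ fails to have full support, then $\ch(C)$ must lie in a coordinate hyperplane $\{x^i=0\}$ (otherwise the product-maximizer would be strictly positive in every coordinate), and one may drop coordinate $i$ and recurse on the remaining $d-1$ coordinates, which only strengthens the bound below; so assume $x^{*i}>0$ for all $i$. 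With $f(x)=-\sum_i\ln x^i$, the first-order optimality condition $\nabla f(x^*)^\top(y-x^*)\ge 0$ for all $y\in\ch(C)$ rearranges to $\sum_{i=1}^d y^i/x^{*i}\le d$ for every $y\in\ch(C)$.

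Next I would fix an arbitrary voter $v$ and lower-bound $v^\top x^*$. Since $v\in\cone(C)$, write $v=\sum_c\alpha_c c$ with all $\alpha_c\ge 0$ and set $A\defeq\sum_c\alpha_c>0$; then $v/A\in\ch(C)$, so the optimality inequality applied to $y=v/A$ gives $\sum_i v^i/x^{*i}\le dA$. As in the proof of \Cref{lem:voter-util-LB-ell2}, nonnegativity of all vectors gives $A=\sum_c\alpha_c\le\sum_c\alpha_c\|c\|_1=\|v\|_1$ (using $\|c\|_1\ge\|c\|_2=1$). Applying Cauchy--Schwarz to the sequences $\sqrt{v^i x^{*i}}$ and $\sqrt{v^i/x^{*i}}$,
\[
\|v\|_1^2=\Bigl(\sum_i v^i\Bigr)^2\le\Bigl(\sum_i v^i x^{*i}\Bigr)\Bigl(\sum_i v^i/x^{*i}\Bigr)\le (v^\top x^*)\cdot dA\le (v^\top x^*)\cdot d\|v\|_1,
\]
so $v^\top x^*\ge\|v\|_1/d\ge 1/d$, the last step because $\|v\|_1\ge\|v\|_2=1$. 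Summing over the $n$ voters then gives $\sw(\uprojtwo)=\sum_v v^\top\hat c\ge n/d$.

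Finally, the distortion bound follows because Cauchy--Schwarz gives $v^\top c\le\|v\|_2\|c\|_2=1$ for every voter $v$ and candidate $c$, hence $\max_{c\in C}\sw(c)\le n$, and therefore $\dist(\uprojtwo)\le n/(n/d)=d$. The main obstacle, relative to the $\ellone$ case, is that voters lie outside $\ch(C)$, which forces the detour through the rescaled point $v/A$; the step that makes everything go through cleanly is recognizing that the scaling factor $A$ is bounded by $\|v\|_1$, so the two $\|v\|_1$ factors cancel and the leftover $\|v\|_1\ge 1$ works in our favor. A secondary technical point is the degenerate case where $\hat c$ has a zero coordinate, handled by the dimension-reduction remark above.
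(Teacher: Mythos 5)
Your proof is correct and follows the same overall strategy as the paper's (KL projection onto $\ch(C)$, first-order optimality, Cauchy--Schwarz), but it repairs a real gap in the paper's own argument. The paper's proof of \Cref{thm:middle-dist-ell2} applies the variational inequality $\nabla f(x^*)^\top(v-x^*)\ge 0$ directly at the voter $v$, ``for every voter $v\in\ch(C)$''---yet under $\elltwo$ normalization a voter in $\cone(C)$ with $\|v\|_2=1$ generally lies \emph{outside} $\ch(C)$ (it is a conic combination of candidates with total weight $A=\sum_c\alpha_c\ge 1$), so that step is unjustified as written; carried out correctly it only yields $\sum_i v^i/x^{*i}\le dA$, with $A$ possibly as large as $\sqrt{d}$. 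Your detour through the rescaled point $v/A\in\ch(C)$, the bound $A\le\|v\|_1$ (the same estimate used in \Cref{lem:voter-util-LB-ell2}), and the sharpened Cauchy--Schwarz step $\|v\|_1^2\le (v^\top x^*)\cdot dA$ make the extra factor cancel, and the leftover $\|v\|_1\ge\|v\|_2=1$ recovers exactly the $1/d$ per-voter bound. Your treatment of the degenerate case where $x^*$ lacks full support (forced containment of $\ch(C)$, hence of all voters, in a coordinate hyperplane, then recurse in dimension $d-1$) is likewise a point the printed proof leaves implicit. In short, the proposal is not merely correct; it is more careful than the paper's own proof of this theorem.
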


\begin{proof}[Proof of \Cref{thm:middle-dist-ell2}]
First observe that
\[
\kl{\mu_2}{x}
=\sum_{i=1}^d\mu_2^i\ln\frac{\mu_2^i}{x^i}
={\sum_i\mu_2^i\ln\mu_2^i}
\;-\;\sum_{i=1}^d\mu_2^i\ln x^i,
\]
so minimizing $\kl{\mu_2}{x}$ over the convex set $\ch(C)$ is
equivalent to minimizing the smooth, convex function
\(
f(x)=-\sum_{i=1}^d\mu_2^i\ln x^i
\)
subject to $x\in \ch(C)$.  
The first‐order optimality condition gives 
\(
\nabla f(x^*)^\top(v-x^*) \;\ge\;0
\)
for every voter $v\in \ch(C)$.
Since
\(
\frac{\partial f}{\partial x^i}(x)
=-\frac{\mu_2^i}{x^i},
\)
we have
\[
\nabla f(x^*)^\top(v-x^*)
=-\sum_{i=1}^d\frac{\mu_2^i}{x^{*i}}(v^i - x^{*i})
\;\ge\;0 \]
\[
\Longrightarrow\;\;
\sum_{i=1}^d\frac{\mu_2^i\,v^i}{x^{*i}}\;\le\;\sum_{i=1}^d\mu_2^i.
\]
Since $\mu_2^i=1/\sqrt{d}$, this yields
\(
\sum_{i=1}^d\frac{v^i}{x^{*i}}\;\le\;d.
\)
Now define the auxiliary sequences
$a_i \defeq \sqrt{\frac{v^i}{x^{*i}}}$ and $b_i \defeq \sqrt{v^i\,x^{*i}}.$
Then
\(
\sum_{i=1}^d a_i\,b_i
=\sum_{i=1}^d v^i
\in [1,\sqrt{d}]
\) since $\|v\|_2=1$.
And, by Cauchy–Schwarz,
\begin{align*}
1 \le \Bigl(\sum_i a_i b_i\Bigr)^2
&\le\Bigl(\sum_i a_i^2\Bigr)\Bigl(\sum_i b_i^2\Bigr)\\
&=\Bigl(\sum_i\tfrac{v^i}{x^{*i}}\Bigr)\Bigl(\sum_i v^i\,x^{*i}\Bigr).
\end{align*}
Combining with $\sum_i v^i/x^{*i}\le d$ gives $\sum_{i=1}^d x^{*i}\,v^i \ge 1/d$.
Now as we have $n$ voters and the above inequality holds for arbitrary $v$, the total utilitarian welfare is at least $\frac{n}{d}$.

The bound on the distortion of $\uprojtwo$ then follows because the maximum utility for any voter is again at most $1$.
\end{proof}

As the original proof for linear stable lottery rule's distortion bound only relies on the bound of the uniform projection rule and the definition of stable lotteries, we immediately get the same result.  

\begin{definition}[Linear Stable Lottery Rule ($\lslrtwo$)]
\label{def:linear-slr-ell2}
    Given a stable lottery $\cW$ over committees of size $k=\sqrt{d}$, the \emph{linear stable lottery rule} $\lslrtwo$ on profile $\prof$ chooses each $c \in C$ with probability $\frac{1}{2\sqrt d}\Pr_{W \sim \cW(\prof)}[c \in W] + \frac{1}{2} \Pr_{c' \sim \uprojtwo(\prof)}[c' = c]$.
\end{definition}

\begin{theorem} 
\label{thm:stable-lottery-ell2}
    $\dist(\lslrtwo)=O(\sqrt{d}).$
\end{theorem}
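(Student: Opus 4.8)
The plan is to follow the proof of \Cref{thm:stable-lottery-ell1} essentially verbatim, substituting $\uprojtwo$ for $\cent$ and invoking \Cref{thm:middle-dist-ell2} in place of \Cref{thm:middle-dist-ell1}. The only model-specific fact to re-check is that $u_v(c) = v^\top c \le 1$ under $\elltwo$ normalization, which holds since $v^\top c = \cos(\theta_{vc}) \le 1$; this is the sole place where the $\ell^1$ argument invoked a normalization-dependent inequality.

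First I would fix a stable lottery $\cW$ over committees of size $k = \sqrt{d}$, which exists for any profile by \citet[Lemma 4]{cheng2020group}. For any committee $W$ with $|W| = k$ and any $a \in W$, write $S_a(W)$ for the set of voters with $a \succ_v W$ and $\overline{S}_a(W)$ for its complement. Splitting the double sum $\sum_{v \in V}\sum_{c \in W} u_v(c)$ along this partition, lower-bounding each $S_a(W)$ term by $u_v(a) \ge u_v(c^*) - 1$ (using $u_v(c^*) \le 1$) and each $\overline{S}_a(W)$ term by $u_v(c^*)$, gives
\[
\sum_{v \in V}\sum_{c \in W} u_v(c) \;\ge\; \sum_{v \in V} u_v(c^*) \;-\; \abs{S_a(W)},
\]
exactly as in the derivation leading to \eqref{eq:stab-lot-vs-unif-sample}.

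Next I would take $\ev_{W \sim \cW}[\cdot]$, apply \Cref{def:stable-lotteries} to bound $\ev_{W}\bigl[\abs{S_a(W)}\bigr] \le n/k$, and then use \Cref{thm:middle-dist-ell2}, which guarantees $\ev_{c \sim \uprojtwo}[\sw(c)] \ge n/d$, to rewrite $n/k \le \tfrac{d}{k}\,\ev_{c\sim\uprojtwo}[\sw(c)]$. This yields
\[
\ev_{W \sim \cW}\Bigl[\textstyle\sum_{c \in W}\sw(c)\Bigr] \;\ge\; \sw(c^*) \;-\; \tfrac{d}{k}\,\ev_{c\sim\uprojtwo}\bigl[\sw(c)\bigr].
\]
Writing the output distribution of $\lslrtwo$ as $x = \tfrac12 x_1 + \tfrac12 x_2$, where $x_1$ is the rescaled stable-lottery component and $x_2 = \uprojtwo$, the left-hand side equals $k\cdot\sw(x_1)$, so rearranging gives $k\,\sw(x_1) + \tfrac{d}{k}\,\sw(x_2) \ge \sw(c^*)$. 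Setting $k = \sqrt{d}$ and recalling the $\tfrac12,\tfrac12$ mixture yields $\ev_{c\sim\lslrtwo}[\sw(c)] \ge \tfrac{1}{2\sqrt d}\,\sw(c^*)$, i.e.\ $\dist(\lslrtwo) \le 2\sqrt d$.

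I do not anticipate a genuine obstacle: the argument is structurally identical to that of \Cref{thm:stable-lottery-ell1}, and the only normalization-sensitive ingredients --- the bound $u_v(c) \le 1$ and the uniform-projection welfare guarantee $\sw(\uprojtwo) \ge n/d$ --- have already been established for the $\elltwo$ setting (the latter as \Cref{thm:middle-dist-ell2}). The one point worth flagging is that the pseudo-candidate $\hat c = \sum_c p_c c$ of \Cref{def:uproj-ell2} need not satisfy $\|\hat c\|_2 = 1$; but since $\lslrtwo$ samples an actual candidate rather than returning $\hat c$, and since \Cref{thm:middle-dist-ell2} bounds the \emph{expected welfare} of the sampled candidate, this causes no difficulty.
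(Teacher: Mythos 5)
Your proposal is correct and matches the paper exactly: the paper itself proves Theorem~\ref{thm:stable-lottery-ell2} by observing that the argument for Theorem~\ref{thm:stable-lottery-ell1} depends only on the stable-lottery property and the uniform-projection welfare bound, and so goes through verbatim with \Cref{thm:middle-dist-ell2} in place of \Cref{thm:middle-dist-ell1}. Your additional checks --- that $u_v(c) = \cos(\theta_{vc}) \le 1$ under $\elltwo$ normalization and that the non-normalized pseudo-candidate $\hat c$ is harmless since the rule samples actual candidates --- are exactly the right points to verify and are handled correctly.
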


Since \Cref{thm:middle-dist-ell2} implies that $\sw(c^*) \geq \frac{n}{d}$ for $\ell_2$ normalization, the proof of \Cref{thm:pure-stable-lottery-ell1} also proceeds identically.
We hence also have:
\begin{theorem} 
\label{thm:pure-stable-lottery-ell2}
    $\dist(\pslr)=O(d)$ for $\ell_2$-normalized candidates and voters.
\end{theorem}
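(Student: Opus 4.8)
The plan is to observe that the proof of \Cref{thm:pure-stable-lottery-ell1} invokes only two facts that are specific to the $\ellone$-normalized model, and that both survive the switch to $\elltwo$ normalization. First, every utility satisfies $0 \le u_v(c) \le 1$: under $\elltwo$ normalization this is immediate, since $u_v(c) = v^\top c = \cos(\theta_{vc})$ with $v$ and $c$ unit vectors in the positive orthant, so $u_v(c) \in [0,1]$ by Cauchy--Schwarz and non-negativity. Second, the optimal candidate has welfare bounded below by $\sw(c^*) \ge n/d$: this is exactly the averaging consequence of \Cref{thm:middle-dist-ell2}, which shows $\uprojtwo$ attains expected welfare at least $n/d$, so \emph{some} candidate---and in particular $c^*$---attains welfare at least $n/d$. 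Finally, the existence of a stable lottery for any profile, used to instantiate $\cW$ over committees of size $k = 2d$ as in \Cref{def:pure-slr}, is a purely ordinal statement and hence untouched by the choice of norm.

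Given this, I would replay the argument of \Cref{thm:pure-stable-lottery-ell1} essentially verbatim. First I would reprove the per-committee inequality: for any committee $W$ and the optimal candidate $c^*$, split the voters according to whether they prefer $c^*$ to all of $W$ (i.e. whether $v \in S_{c^*}(W)$). Voters outside $S_{c^*}(W)$ have some committee member weakly preferred to $c^*$ and hence contribute at least $u_v(c^*)$ to $\sum_{c \in W} u_v(c)$, while voters inside $S_{c^*}(W)$ lose at most $1$ each, using $u_v(c^*) \le 1$ and non-negativity. This yields
\[
    \sum_{v \in V} \sum_{c \in W} u_v(c) \;\ge\; \sw(c^*) - \abs{S_{c^*}(W)}.
\]
Taking expectation over $W \sim \cW$ and invoking \Cref{def:stable-lotteries} with $k = 2d$ gives $\ev_{W \sim \cW}\!\bigl[\sum_{c \in W}\sw(c)\bigr] \ge \sw(c^*) - \tfrac{n}{2d}$, and since $\sw(c^*) \ge n/d$ the subtracted term is at most $\tfrac12 \sw(c^*)$, so the expression is at least $\tfrac12 \sw(c^*)$.

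To finish, I would divide through by $k = 2d$: because $\ev_{c \sim \pslr}[\sw(c)] = \tfrac{1}{2d}\,\ev_{W \sim \cW}\!\bigl[\sum_{c \in W}\sw(c)\bigr]$, we get $\ev_{c \sim \pslr}[\sw(c)] \ge \tfrac{1}{4d}\,\sw(c^*)$, i.e. $\dist(\pslr) \le 4d = O(d)$. I do not expect a genuine obstacle here; the single point that merits care is confirming that boundedness of utilities by $1$---the only place the per-committee bound could fail---indeed carries over from $\ellone$ to $\elltwo$ normalization, which it does precisely because cosine similarity between unit vectors never exceeds $1$.
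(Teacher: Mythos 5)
Your proposal is correct and matches the paper's argument exactly: the paper likewise observes that the only norm-dependent ingredient is the bound $\sw(c^*) \ge n/d$ supplied by Theorem~\ref{thm:middle-dist-ell2}, after which the proof of Theorem~\ref{thm:pure-stable-lottery-ell1} goes through verbatim. Your extra check that $u_v(c) \le 1$ under $\elltwo$ normalization is a point the paper leaves implicit, but it is the same proof.
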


\subsection{Optimizing distortion for $\elltwo$-normalization}

When switching to $\elltwo$ normalization, the key change is that the feasible region $\mathcal{F}$ becomes \textit{nonconvex}, due to the unit-norm constraint $\|v_j\|_2 = 1$ for each voter \( j = 1, \dots, n \). As a result, our instance-optimal problem likely cannot be solved exactly in polynomial time.

Nonetheless, several methods have been proposed for approximating such problems, including augmented Lagrangian techniques and projected gradient descent, which can be employed in practice to compute near-optimal solutions.

\section{Additional Experiments} \label{app:additional_experiment}

All experiments were run on a single‐socket Intel® Xeon® Gold 6130 CPU @ 2.10 GHz (16 physical cores, 32 threads), arranged in two NUMA nodes (cores 0–15 and 16–31).

In addition to evaluating the \textit{instance distortion} as in \Cref{sec:experiments}, here we also measure the \textit{empirical distortion}, defined as follows:
\begin{definition}[Empirical distortion]
    \label{def:empriical-distortion}
    Given a fixed utility profile $\uprof$, the \textit{empirical distortion} of a mechanism $f$ is the ratio of the optimal utilitarian social welfare to the expected welfare achieved by $f$ on the preference profile $\prof$ induced by $\uprof$. Formally,
    \[
        \dist(f, \uprof) \defeq  \frac{\sw(\uprof, c^*(\uprof))}{\mathbb{E}_{c \sim f(\prof(\uprof))} [\sw(\vec{u}, c)]}.
    \]
\end{definition}

While the distortion results presented in the main body establish the optimal guarantee of our proposed instance-optimal rules, the empirical findings offer further insight into the structure of real-world instances. A particularly notable observation is that even a simple rule such as plurality exhibits surprisingly strong performance in practice.

The general trends observed for empirical distortion closely mirror those of instance distortion across variations in $n$, $m$, and $d$: both measures remain largely stable. In terms of computational cost, the deterministic optimal rule exhibits the most significant increase in running time as parameters grow, followed by the randomized optimal rule. In contrast, the computation times for the remaining mechanisms are negligible in comparison. Notably, the number of alternatives $m$ emerges as the most influential factor affecting runtime.

\begin{figure}[ht]
    \centering
    \includegraphics[width=1\linewidth]{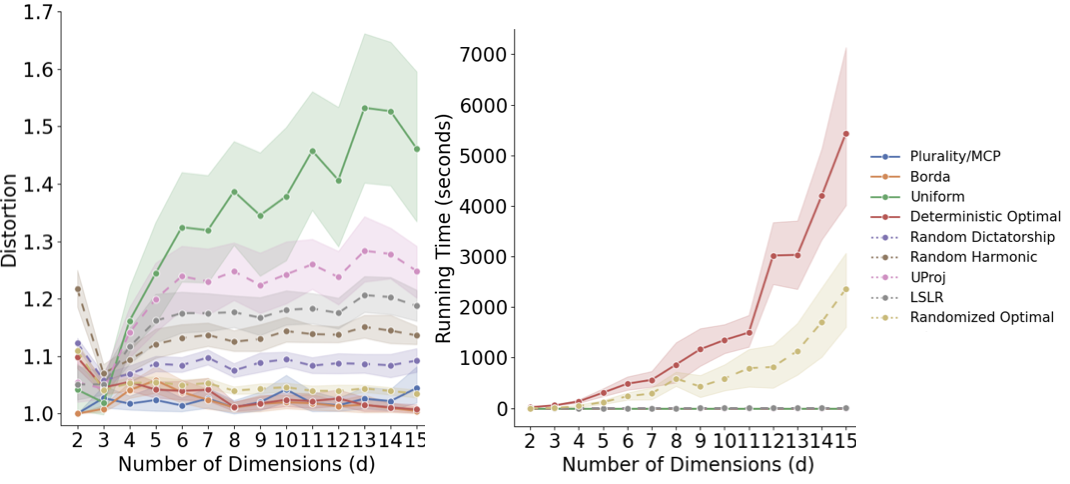}
    \caption{MovieLens Varying d (n=100,m=25). Left: Empirical Distortion; Right: Running Time.}
    \label{fig:mv_d_appendix}
\end{figure}

\begin{figure}[ht]
    \centering
    \includegraphics[width=1\linewidth]{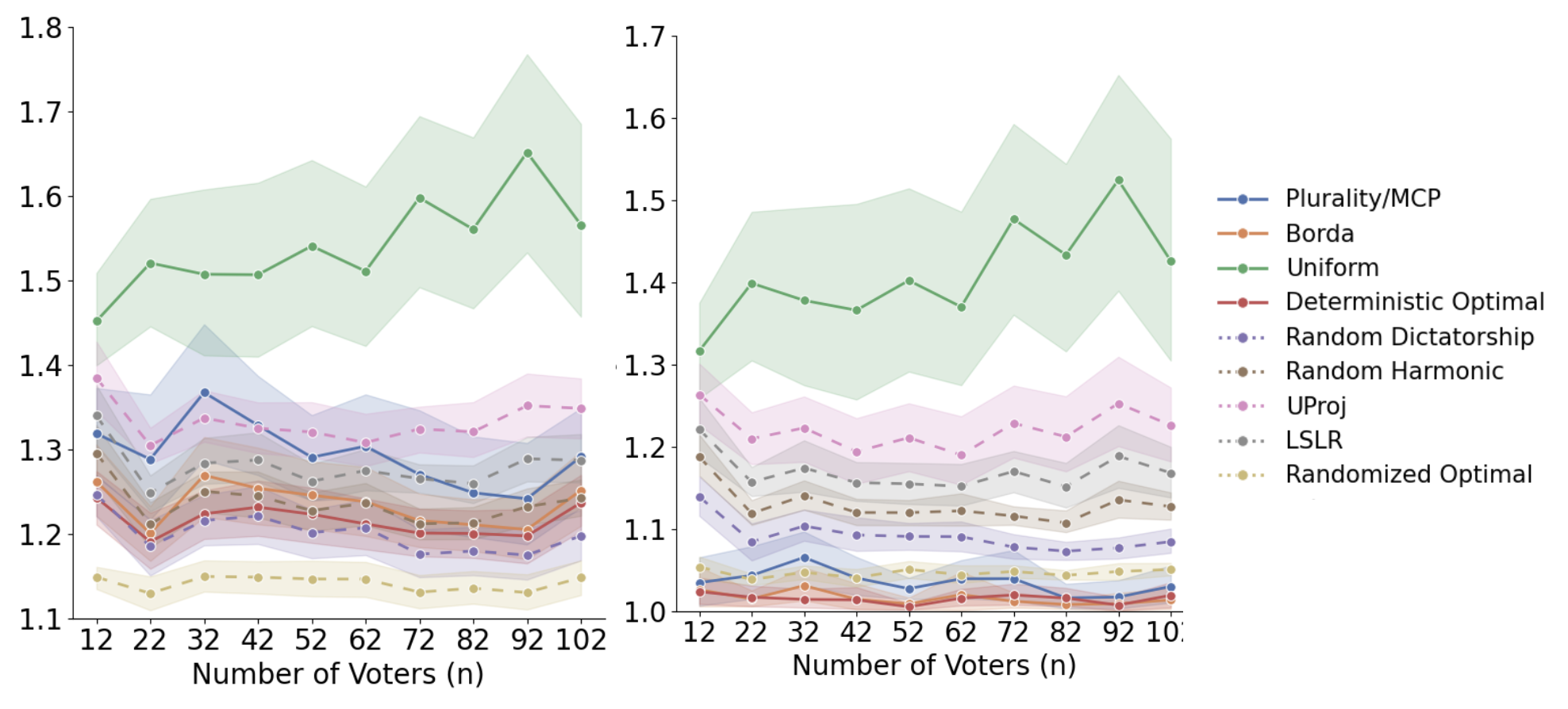}
    \caption{MovieLens Varying n (m=12, d=10). Left: Instance Distortion; Right: Empirical Distortion.}
    \label{fig:mv_n_dist}
\end{figure}

\begin{figure}[ht]
    \centering
    \includegraphics[width=1\linewidth]{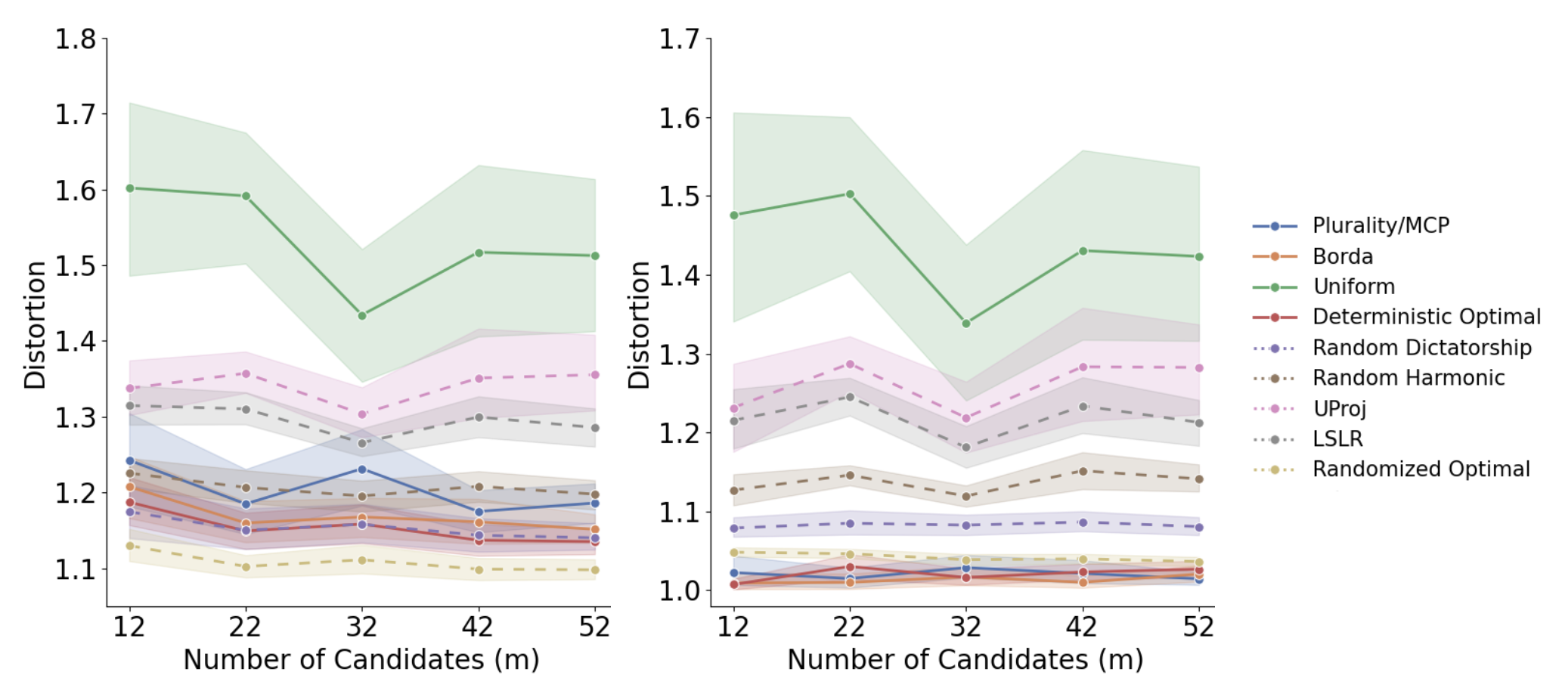}
    \caption{MovieLens Varying m (n=100,d=10). Left: Instance Distortion; Right: Empirical Distortion.}
    \label{fig:mv_m_dist}
\end{figure}

\begin{figure}[ht]
    \centering
\includegraphics[width=1\linewidth]{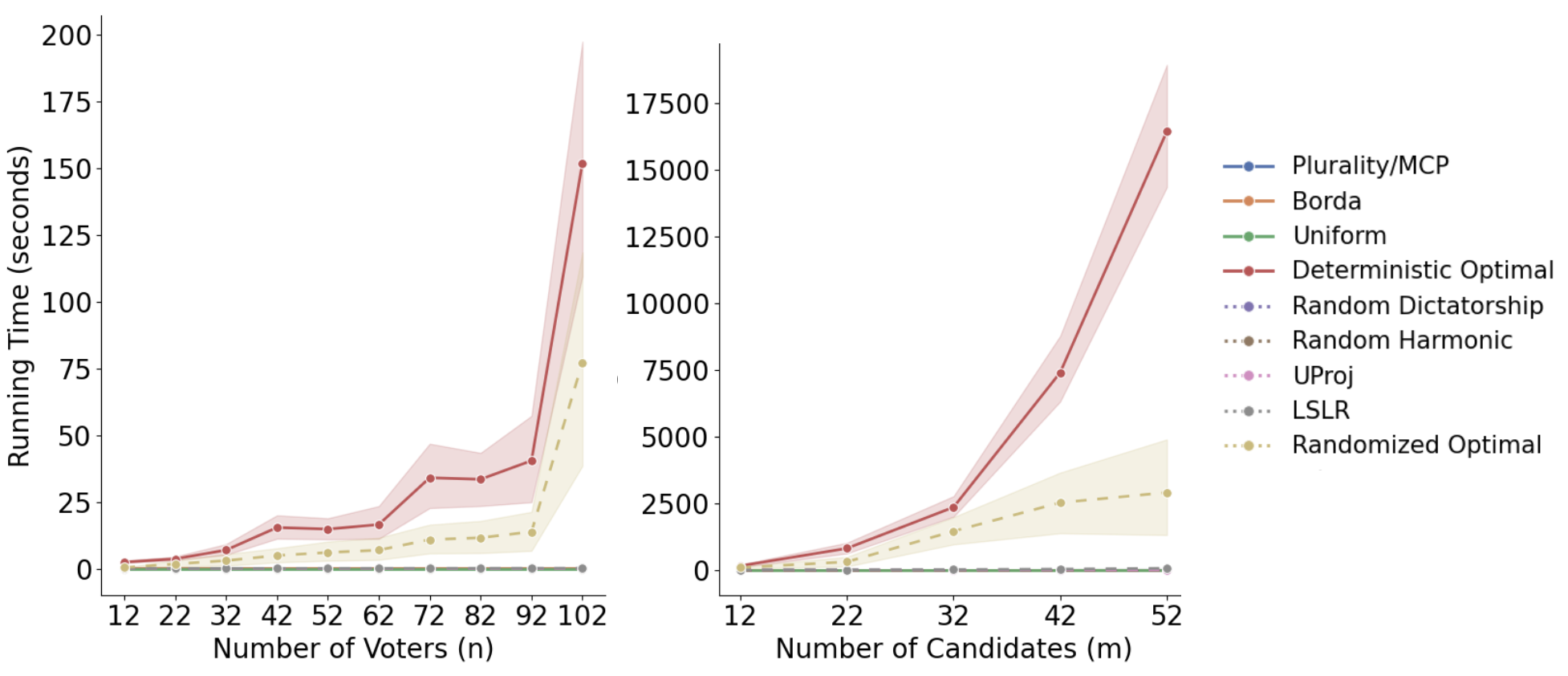}
    \caption{MovieLens Running Time. Left: Varying n (m=12, d=10); Right: Varying m (n=100, d=10).}
    \label{fig:mv_n_m_time}
\end{figure}

\begin{figure}[ht]
    \centering    \includegraphics[width=1\linewidth]{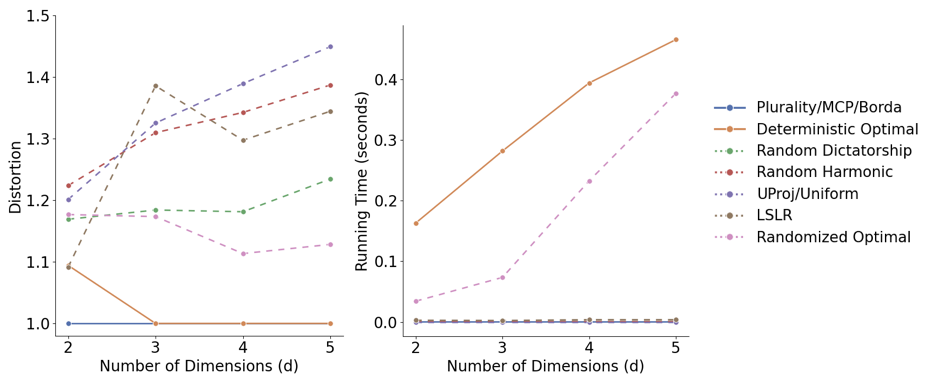}
    \caption{Abortion Opinion Survey. Left: Empirical Distortion; Right: Running Time.}
    \label{fig:abortion_rest}
\end{figure}

\end{document}